\documentclass[runningheads, envcountsame]{llncs}

\usepackage[T1]{fontenc}
\usepackage{graphicx}
\usepackage{amssymb}
\usepackage{amsmath}
\usepackage{todonotes}
\usepackage{paralist}
\usepackage{xspace}
\usepackage[ruled, vlined, linesnumbered]{algorithm2e}
\usepackage[bookmarks=true, bookmarksnumbered=true, bookmarksopen=false, bookmarksdepth=2]{hyperref}
\usepackage{bookmark}

\usetikzlibrary{arrows.meta, math, decorations.markings}
\usepackage{color}

\newcommand{\sep}{\;|\;}
\newcommand{\seq}{\subseteq}

\newcommand{\ca}{\mathcal}

\newtheorem{observation}[theorem]{Observation}
\newtheorem{rr}[theorem]{Reduction rule}

\DeclareMathOperator{\dom}{dom}
\DeclareMathOperator{\range}{range}
\DeclareMathOperator{\dist}{dist}
\DeclareMathOperator{\girth}{girth}
\DeclareMathOperator{\td}{td}
\DeclareMathOperator{\pw}{pw}
\DeclareMathOperator{\mw}{mw}
\DeclareMathOperator{\tww}{tww}
\DeclareMathOperator{\tw}{tw}

\newcommand{\FPT}{\textsf{FPT}\xspace}
\newcommand{\NP}{\textsf{NP}\xspace}
\newcommand{\XP}{\textsf{XP}\xspace}
\newcommand{\XNLP}{\textsf{XNLP}\xspace}
\newcommand{\Wone}{\textsf{W[1]}\xspace}
\newcommand{\paraNP}{\textsf{para-NP}\xspace}

\newcommand\cnone{\mathsf{none}}
\newcommand\ccontains{\mathsf{cont}}
\newcommand\cdemand{\mathsf{dem}}
\newcommand\ctypes{\mathsf{types}}
\newcommand\csig{\mathsf{sig}}
\newcommand\tab{\mathsf{tab}}

\newcommand{\bscoloring}{$b^*$-\textsc{coloring}\xspace}
\newcommand{\bcoloring}{$b$-\textsc{coloring}\xspace}
\newcommand{\bchromatic}{\textsc{$b$-chromatic number}\xspace}

\begin{document}

\title{From $b$-Coloring to $b^*$-Coloring: Large Girth and Parameterized Complexity}
%
%
\author{Jakub Balabán\orcidID{0000-0002-2475-8938}\thanks{Brno Ph.D. Talent Scholarship Holder – Funded by the Brno City Municipality}
 \and
Oliver Bukor}

\authorrunning{J. Balabán and O. Bukor}

\institute{Faculty of Informatics, Masaryk University, Botanická 68a, Brno, Czech Republic
\email{jakbal@mail.muni.cz, 525087@mail.muni.cz}}

\maketitle

\begin{abstract}

A $b$-coloring is a proper vertex coloring such that every color class contains a vertex, a so-called $b$-vertex, which sees all colors in its closed neighborhood.
This type of coloring has been intensively studied from both structural and algorithmic point of view.
Recently, Zaker~[DAM 2025] introduced the notion of a $b^*$-coloring, which is a $b$-coloring in which there is a vertex that sees a $b$-vertex of every color in its closed neighborhood.
The $b^*$-chromatic number is the maximum integer $k$ such that there is a $b^*$-coloring with $k$ colors.

We partially answer a question posed by Zaker and prove that graphs of girth at least 7 are $b^*$-monotonic, which means that the $b^*$-chromatic number does not increase by taking an induced subgraph.
In addition, we discover a class of $d$-regular graphs of girth at least 5 with $b^*$-chromatic number $d+1$, which strengthens a result about $b$-colorings by Dettlaff, Furmańczyk, Peterin, Roux, and Ziemann~[AMC 2024].

We also study the parameterized complexity of finding $b^*$-colorings, and show that for many structural parameters, the complexity coincides with that of finding $b$-colorings.
In particular, the $b^*$-chromatic number can be computed in polynomial time on any class of bounded clique-width.
For most parameters, the translation from $b$-colorings is straightforward but for the feedback edge number, the \FPT algorithm for $b^*$-colorings is actually much simpler than that for $b$-colorings by Balabán~[MFCS 2026].

\keywords{graph coloring \and $b^*$-coloring \and $b^*$-chromatic number \and girth \and regular graphs \and structural parameterization}
\end{abstract}

\section{Introduction}

In 1999, Irving and Manlove~\cite{IRVING1999127} defined a coloring\footnote{All colorings considered in this paper are proper, i.e., no two adjacent vertices are allowed to receive the same color.} to be a \emph{$b$-coloring} if every color class contains a vertex, a so-called $b$-vertex, which has a neighbor in all other color classes.
Observe that given a coloring that is \emph{not} a $b$-coloring, we can compute a coloring with fewer colors by (1) finding a color class $C$ with no $b$-vertex, and (2) recoloring every vertex $u \in C$ to some color not present in the neighborhood of $u$.
If we start with an arbitrary coloring and repeat this procedure as long as possible, we eventually reach a $b$-coloring.
This process can be viewed as a simple heuristic for finding a coloring with as few colors as possible: let us call it the $b$-heuristic.
The \emph{$b$-chromatic number} $b(G)$ of a graph $G$, which is the largest integer $k$ such that there is a $b$-coloring of $G$ with $k$ colors, describes the worst-case behavior of the $b$-heuristic.

In the \bchromatic problem, the task is to decide, given a graph $G$ and an integer $k$, whether $b(G) \ge k$.
This problem is \NP-complete~\cite{IRVING1999127}, even if $G$ is bipartite~\cite{KratochvilTV02}, chordal~\cite{HavetSS12}, co-bipartite~\cite{BonomoSSV15}, or a line graph~\cite{CamposLMSSS15}.
On the other hand, \bchromatic is in \textsf{P} on trees~\cite{IRVING1999127} and other tree-like classes such as tree-cographs~\cite{BonomoSSV15}, cacti~\cite{CamposSMS09} and graphs of girth at least 7~\cite{CamposLS15}.
Most of these algorithmic results were unified by the \XP\footnote{For definition of parameterized complexity classes such as \FPT and \XP, see~\cite{flum2006parameterized}. Structural graph parameters used in this paper are defined in Section~\ref{sub:gp}.} algorithm parameterized by the clique-width of $G$ for the \bcoloring problem~\cite{JaffkeLL24}, in which the task is to decide whether $G$ admits a $b$-coloring with exactly $k$ colors.
We remark that a polytime algorithm for \bcoloring can be simply used to solve \bchromatic in polynomial time but not vice versa because the set $\{k \in \mathbb N \sep G$ admits a $b$-coloring with $k$ colors$\}$ does not have to form an interval\footnote{For example, the bipartite graph with vertex set $\{u_1, u_2, u_3, u_4, v_1, v_2, v_3, v_4\}$ and edge set $\{u_iv_j \sep i \ne j\}$ admits a $b$-coloring with two or four colors but not with three colors~\cite{IRVING1999127}.}.
In addition to these algorithmic results, $b$-colorings have also been studied in the context of regular graphs~\cite{KratochvilTV02,CabelloJ11,SahiliKM15,jakovac2010b,shaebani2012b,el2014b}, Kneser graphs~\cite{javadi2009b,hajiabolhassan2010b,balakrishnan2012b,shaebani2019note}, and graph products~\cite{kouider2007b,balakrishnan2014b,maffray2013b,jakovac2012b}.
Finally, several notions related to $b$-colorings, such as $b$-perfectness~\cite{hoang2005b}, $b$-continuity~\cite{barth2007b}, $b$-criticality~\cite{ikhlef2010characterization}, and $b$-monotonicity~\cite{bonomo2009b}, have been introduced.
See the following survey~\cite{jakovac2018b} for more information on $b$-colorings.

In 2025, Zaker~\cite{ZAKER2025370} defined a coloring to be a $b^*$-\emph{coloring} if there is a vertex, a so-called $b^*$-vertex, which is adjacent to a $b$-vertex of every color except for its own.
We remark that $b^*$-vertices are also called \emph{nice} vertices in the literature~\cite{ZAKER2025370}.
It is easy to see that every $b^*$-vertex is a $b$-vertex, and every $b^*$-coloring is a $b$-coloring.
The motivation for this definition is another heuristic, which we call the $b^*$-heuristic, and it works as follows. Given a coloring that is \emph{not} a $b^*$-coloring, we can eliminate an arbitrary color class $C$ as follows.
We iterate over all vertices $u \in C$; if $u$ is not a $b$-vertex, then we can recolor it to some color not present in its neighborhood.
On the other hand, if $u$ is a $b$-vertex (but not a $b^*$-vertex), then there is a color class $C'$ such that no neighbor $v$ of $u$ in $C'$ is a $b$-vertex, so we first recolor every such vertex $v$ to some color not present in its neighborhood, and then we recolor $u$ to $C'$.
After processing all such vertices $u$, we have eliminated the color class $C$.
Hence, the $b^*$-heuristic is stronger than the $b$-heuristic in the sense that it improves even some $b$-colorings.
The \emph{$b^*$-chromatic number} $b^*(G)$ is defined analogously to the $b$-chromatic number as the largest integer $k$ such that there is a $b^*$-coloring of $G$ with $k$ colors, and it describes the worst-case behavior of the $b^*$-heuristic.

In some sense, $b^*$-colorings are simpler than general $b$-colorings because of their locality: it is easy to see that if $G$ is a graph with connected components $G_1, \ldots, G_m$, then $b^*(G) = \max_{1 \le i \le m} b^*(G_i)$~\cite[Proposition 3]{ZAKER2025370}, which is not true for the $b$-chromatic number\footnote{A star has $b$-chromatic number 2, but a disjoint union of stars may have arbitrarily high $b$-chromatic number.}.
However, this locality does not yield a polytime algorithm for \bscoloring because there is a simple reduction from \bcoloring~\cite{ZAKER2025370}, see Proposition~\ref{prop:b-to-bs}.
We remark that studying $b^*$-colorings may further our understanding of the classical $b$-colorings: for example, Zaker~\cite{zaker2026improved} obtained a sharp upper bound on the $b$-chromatic number in terms of the chromatic number and the independence number using $b^*$-colorings.

We say that a graph $G$ is \emph{$b^*$-monotonic}~\cite{ZAKER2025370} if $b^*(H_1) \le b^*(H_2)$ whenever $H_1$ is an induced subgraph of $H_2$ and $H_2$ is an induced subgraph of $G$.
Perhaps surprisingly, not all graphs are $b^*$-monotonic\footnote{\label{fn:mono}If $G$ (resp. $H$) is the graph obtained from $K_{n+1, n+1}$ (resp. $K_{n, n}$) by deleting a matching of size $n$, then $H$ is an induced subgraph of $G$ but $b^*(H) = n$ and $b^*(G) = 2$~\cite{ZAKER2025370}.}.
However, it is known that block graphs~\cite[Proposition 6]{ZAKER2025370} and cactus graphs~\cite[Proposition 8]{ZAKER2025370} are $b^*$-monotonic.
An interesting question is relating $b^*$-monotonicity to girth.
\begin{question}[{\cite[Problem 4]{ZAKER2025370}}]\label{question:girth}
What is the smallest integer $g$ such that all graphs of girth at least $g$ are $b^*$-monotonic?
\end{question}
It is easy to see that the integer $g$ in Question~\ref{question:girth} is at least 5, see Footnote~\ref{fn:mono}.

\subsection*{Contribution 1: Large girth graphs}

In Section~\ref{sec:girth}, we study $b^*$-colorings of large-girth graphs. First, we prove that the integer $g$ in Question~\ref{question:girth} is at most 7, which means that $g \in \{5, 6, 7\}$, see Theorem~\ref{thm:girth7}. In fact, we prove a stronger statement.
There is a trivial upper bound on $b^*(G)$, namely there has to be a vertex with many neighbors of high enough degree (because only such vertices may become $b^*$-vertices).
Formally, the $m^*$-degree of a vertex $u$ is the maximum integer $k$ such that $u$ has at least $k$ neighbors of degree at least $k$, and the $m^*$-degree of a graph $G$, denoted $m^*(G)$, is the maximum $m^*$-degree over all vertices of $G$.
Theorem~\ref{thm:girth7} proves not only that graphs $G$ of girth at least 7 are $b^*$-monotonic, but also that $b^*(G) = m^*(G) + 1$.

Second, we study regular graphs of high girth. It is easy to see that if $G$ is a $d$-regular graph, then $b^*(G) \le b(G) \le d+1$.
A question studied in the literature asks when the latter inequality is tight; it is known that girth at least 6 suffices.

\begin{theorem}[\cite{kouider2004b,elsahili2009about}]\label{thm:bg6}
If $G$ is a $d$-regular graph of girth at least 6, then $b(G) = d + 1$.
\end{theorem}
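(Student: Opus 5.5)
The plan is a direct construction: fix a vertex $v$ and build a proper coloring with colors $0,1,\ldots,d$ in which $v$ and each of its $d$ neighbors are $b$-vertices. Together these give one $b$-vertex per color class. Since $G$ is $d$-regular, no vertex can see more than $d+1$ colors in its closed neighborhood, so $b(G) \le d+1$. It therefore suffices to exhibit a $b$-coloring with $d+1$ colors.

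Let $N(v)=\{u_1,\ldots,u_d\}$ and, for each $i$, let $W_i = N(u_i)\setminus\{v\}$, so $|W_i| = d-1$. First I use the girth assumption to pin down the structure of $N_2(v)$:
\begin{itemize}
\item $N(v)$ is independent, since there are no triangles.
\item The sets $W_i$ are pairwise disjoint and no $w \in W_i$ is adjacent to $v$ or to any $u_j$ with $j\ne i$, since there are no $3$- or $4$-cycles.
\item There is no edge inside any $W_i$ (it would create a triangle), and no edge between $W_i$ and $W_j$ for $i\ne j$. Such an edge $ab$ with $a\in W_i$, $b\in W_j$ would give the $5$-cycle $v u_i a b u_j v$.
\end{itemize}
So $W=\bigcup_i W_i$ is an independent set, and each vertex of $W$ has exactly one neighbor in $N[v]$. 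The last point is the only place where girth $6$ (rather than $5$) is needed, and it is the crux of the argument.

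Now color the vertices as follows:
\begin{itemize}
\item color $v$ with $0$ and $u_i$ with $i$;
\item color $W_i$ bijectively with the $d-1$ colors $\{1,\ldots,d\}\setminus\{i\}$.
\end{itemize}
This partial coloring is proper. $N(v)$ receives distinct colors, and $W$ is independent. Each $w\in W_i$ has its only colored neighbor $u_i$ of color $i$, which differs from the color of $w$.

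Next, extend the coloring greedily to the remaining vertices in any order. Every vertex has degree $d$, so at each step at most $d$ colors are forbidden, and one of the $d+1$ colors is always available. This yields a proper coloring of $G$ with exactly the colors $0,\ldots,d$, each of which is used.

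Finally, check the $b$-vertices:
\begin{itemize}
\item $v$ sees colors $1,\ldots,d$ on $u_1,\ldots,u_d$.
\item $u_i$ sees $0$ on $v$ and all of $\{1,\ldots,d\}\setminus\{i\}$ on $W_i$.
\end{itemize}
Hence every color class has a $b$-vertex and $b(G)=d+1$. Moreover, $v$ is adjacent to a $b$-vertex of every other color, so the coloring is in fact a $b^*$-coloring. This gives the stronger conclusion $b^*(G)=d+1$ for girth at least $6$.
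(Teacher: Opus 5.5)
Your proof is correct and is essentially the argument the paper uses for the stronger Proposition~\ref{prop:bsrg6}. Lemma~\ref{lem:g6-init} colors $u$, its neighbors, and $d-1$ further vertices around each neighbor, using that $G[N^2[u]]$ is a tree for girth at least $6$; Observation~\ref{obs:finish-regular} then extends this greedily. The paper only cites Theorem~\ref{thm:bg6}, but this argument yields it directly since $b^*(G) \le b(G) \le d+1$, and your closing remark about $b^*(G)$ is exactly the paper's Proposition~\ref{prop:bsrg6}.
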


We observe that Theorem~\ref{thm:bg6} holds even if we replace $b(G)$ with $b^*(G)$, see Proposition~\ref{prop:bsrg6}. Hence, we may focus on graphs of girth 5.
It has been proven that small $d$ suffices for $b(G)$, unless $G$ is the famous Petersen graph.

\begin{theorem}[{\cite[Theorem 5]{BLIDIA20091787}}]\label{thm:bPeter}
If $G$ is a $d$-regular graph such that $\girth(G) \ge 5$, $G$ is not the Petersen graph, and $d \le 6$, then $b(G) = d+1$.
\end{theorem}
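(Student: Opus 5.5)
The statement is a cited result (Blidia et al., Theorem 5), so I would reconstruct its proof directly. The goal is a $b$-coloring of a $d$-regular graph $G$ of girth at least $5$ with exactly $d+1$ colors, for $d \le 6$ and $G$ not the Petersen graph. The upper bound $b(G)\le d+1$ is immediate, so only the construction is needed.

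The plan is to find $d+1$ vertices $x_1,\dots,x_{d+1}$ that will serve as the $b$-vertices and to give them distinct colors $1,\dots,d+1$. Each $x_i$ must see all other colors among its $d$ neighbors. Since $x_i$ has exactly $d$ neighbors, its neighborhood must receive exactly the $d$ colors other than $i$, bijectively. Once such a partial coloring of $\bigcup_i N[x_i]$ is proper, it extends greedily to the whole graph. This works because every vertex has degree $d$, so among $d+1$ colors some color is always free for an uncolored vertex.

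The key steps, in order, are as follows.

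\begin{enumerate}
\item Choose the $x_i$ pairwise at distance at least $3$, or at least arrange that their closed neighborhoods interact in a controlled way. For a single $x_i$, girth $\ge 5$ makes $N(x_i)$ an independent set with no two vertices sharing a neighbor other than $x_i$. Hence coloring $N(x_i)$ with the distinct colors $\{1,\dots,d+1\}\setminus\{i\}$ is always proper inside one star.
\item Handle conflicts between different stars. These arise when a neighbor of $x_i$ is adjacent to a neighbor of $x_j$, or when the two stars share a vertex. I would model this as a system of distinct representatives, or a Latin-square-like assignment, and resolve it by Hall's theorem or a small case analysis. The constraint $d\le 6$ keeps the number of interactions small, so each conflict can be repaired by permuting colors within one star.
\item Prove that suitable centers exist. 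If $G$ has at least $d(d+1)$-ish vertices, one can greedily pick centers that are far apart, using the fact that $|N^{\le 2}[v]|\le 1+d^2$. When the stars are at pairwise distance at least $3$, there is no interaction at all.
\item Deal with small graphs, where far-apart centers are impossible. By the Moore bound, girth $\ge 5$ already forces $n\ge d^2+1$. The graphs close to this bound are few: the Moore graphs for $d=2,3,7$, namely $C_5$, Petersen, and Hoffman--Singleton (the last excluded since $d\le 6$), together with small $(d,5)$-cages and near-cages for $d\le 6$. For these I would place centers at distance $2$, so that they share exactly one common neighbor, which receives a single color valid for both, and verify the colorings case by case.
\end{enumerate}

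I expect the main obstacle to be the intermediate range where $n$ lies between roughly $d^2+1$ and $d(d+1)+\text{const}$. There, pairwise distance $3$ is unattainable and the overlaps must be handled combinatorially. The Petersen graph is exactly the case where this fails. My first attempt would be to pick the centers along a structure such as a path or a neighborhood of one vertex, and then prove via a counting and Hall-type argument that the colors can always be permuted consistently unless $G$ is Petersen.
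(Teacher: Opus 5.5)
The paper gives no proof of this statement; it only quotes it from \cite{BLIDIA20091787}. So I assess your proposal on its own, and it has a genuine gap. It starts with two errors in step 3.

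First, centers at pairwise distance $\ge 3$ still interact. If $\dist(x_i,x_j)=3$, some neighbor of $x_i$ is adjacent to some neighbor of $x_j$, and the two must get different colors. Girth $\ge 5$ only makes these edges a partial matching between $N(x_i)$ and $N(x_j)$. To make the stars truly independent you need pairwise distance $\ge 4$.

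Second, the greedy count is off by a factor of $d$. Each chosen center excludes a whole ball, not just a star. With balls of radius $2$ (size up to $d^2+1$) you need $n>d(d^2+1)$. With radius $3$ you need $n$ of order $d^4$, which is essentially the classical argument of \cite{KratochvilTV02}. For $d=6$, the greedy step is therefore only guaranteed above $222$ (resp.\ $1122$) vertices, not above about $42$.

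Consequently, step 4 is not left with a few cages and near-cages just above the Moore bound. It must handle all $d$-regular girth-$\ge 5$ graphs with roughly $d^2$ up to $d^3$ (or $d^4$) vertices. That family cannot realistically be checked case by case, and you give no uniform argument for it.

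Steps 2 and 4 are where the theorem actually lives, and they are placeholders. Permuting colors inside one star to repair a conflict can create new conflicts. A Hall-type argument that colors stars one by one can genuinely fail: a vertex may already see, through its at most one neighbor in each previously colored star, every color still available to it. You never say how $d\le 6$ prevents this, nor why the Petersen graph is the only obstruction.

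What is missing is an argument that does not depend on $|V(G)|$ at all. The paper's $b^*$-analogues (Theorems~\ref{thm:bsPeter} and~\ref{thm:d2}) show the shape of such an argument:
\begin{itemize}
\item Fix one vertex $u$ and take $u$ and its neighbors $u_1,\dots,u_d$ as the $b$-vertices.
\item Use girth $\ge 5$ to see that the sets $U_i=N(u_i)\setminus\{u\}$ are disjoint independent sets, and each vertex of $U_i$ has at most one neighbor in each $U_j$, $j\ne i$.
\item Color $U_1,U_2,\dots$ in turn, each bijectively with $[d]\setminus\{i\}$, via perfect matchings obtained from Hall's theorem.
\item Extend greedily, exactly as you do.
\end{itemize}
Degree counting settles Hall's condition only for small sets $W$. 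The remaining cases need structural input, and that is where the exclusion of the Petersen graph (giving $|E(G[U])|\le 5$ for $d=3$) or the bound on $6$-cycles through $u$ is used.

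To prove the statement for $d\le 6$, you would have to supply the corresponding structural analysis for $d\in\{4,5,6\}$. You would also need to allow $b$-vertices that are not all adjacent to a single vertex. This is forced, for example, when $G$ is a disjoint union of Petersen graphs, since the statement has no connectivity assumption.
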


We generalize this theorem to $b^*$-colorings with a stronger bound on $d$. Moreover, we add a connectivity requirement because, for example, the disjoint union of two Petersen graphs would also be a counterexample.

\newcommand{\thmPG}{If $G$ is a connected $d$-regular graph such that $\girth(G) \ge 5$, $G$ is not the Petersen graph, and $d \le 3$, then $b^*(G) = d+1$.}

\begin{theorem}\label{thm:bsPeter}
\thmPG
\end{theorem}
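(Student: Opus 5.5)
The upper bound $b^*(G) \le b(G) \le d+1$ holds for every $d$-regular graph, so the content of the theorem is the construction of a $b^*$-coloring with $d+1$ colors. I would split by $d$. For $d \le 1$ the claim is immediate: a connected $1$-regular graph is $K_2$, and coloring its two vertices differently gives a $b^*$-coloring with $2$ colors. For $d = 2$, girth at least $5$ forces $G$ to be a cycle $C_n$ with $n \ge 5$. On the path $v_0v_1v_2v_3v_4$ I would color $v_1, v_2, v_3$ with colors $1, 2, 3$, then color $v_0$ with $3$ and $v_4$ with $1$. I would color the remaining vertices of the cycle properly with $\{1,2,3\}$, which is always possible because each remaining vertex has at most two already-colored neighbors. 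With this coloring, $v_1, v_2, v_3$ are $b$-vertices and $v_2$ is a $b^*$-vertex.

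The main case is $d = 3$, a connected cubic graph of girth at least $5$ other than the Petersen graph. The plan is to choose a vertex $u$ with neighbors $x_1, x_2, x_3$ and make $u$ the $b^*$-vertex. I color $u$ with $0$ and $x_i$ with $i$. Each $x_i$ has two further neighbors; since the girth is at least $5$, these six second neighbors $N_2(u)$ are pairwise distinct, and no two neighbors of the same $x_i$ are adjacent. For $x_i$ to be a $b$-vertex, its two other neighbors must receive the two colors of $\{1,2,3\} \setminus \{i\}$. Then $u$ and all the $x_i$ are $b$-vertices, $u$ is a $b^*$-vertex, and the coloring is a $b^*$-coloring.

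The constraints come from the edges inside $N_2(u)$. A vertex $y \in N(x_i)$ must avoid colors $0$ and $i$, so $y$ has exactly two admissible colors, and its partner in $N(x_i)$ must take the other one. This is a $2$-SAT-like system: each pair of neighbors of $x_i$ carries one binary choice, and each edge between $y \in N(x_i)$ and $y' \in N(x_j)$ with $i \ne j$ forbids equal colors. The edges inside $N_2(u)$ form a graph of maximum degree $2$. This is because each second neighbor $y$ already uses one edge to its $x_i$; also, $y$ cannot be adjacent to both neighbors of some $x_j$, since that would create a $4$-cycle. I would show that the three binary choices can satisfy all these constraints by examining how many edges lie inside $N_2(u)$, and I expect the Petersen graph to be exactly the configuration in which every choice fails. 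In the Petersen graph, $N_2(u)$ induces a $6$-cycle that is a perfect matching-type pattern for every $u$. In that case I would change $u$: since $G$ is not Petersen, some vertex must give a more favorable local structure. The argument is that if every vertex had the Petersen-forcing structure, then $G$ would have $10$ vertices and girth $5$, making it the Petersen graph.

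Once the ball of radius $2$ around $u$ is colored, I extend the coloring greedily to the rest of $G$ with colors $\{0,1,2,3\}$. This always works because every uncolored vertex has degree $3$, so at most $3$ of the $4$ colors are blocked.

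The main obstacle is the case analysis of edges inside $N_2(u)$. This means showing that a failure at every vertex $u$ forces the Petersen graph, possibly using the freedom to permute which color goes to which $x_i$. My first attempt would be to enumerate the possible patterns of edges among the six second neighbors, up to symmetry, checking each one for a satisfying assignment.
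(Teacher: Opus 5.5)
\textbf{Verdict: there is a genuine gap.} Your cases $d\le 2$ are correct. For $d=3$ your setup matches the skeleton of the paper's proof: color $u$ and its neighbours, make one binary choice per set $U_i=N(x_i)\setminus\{u\}$, observe that the edges between $U_i$ and $U_j$ form a matching, and finish greedily. The gap is that the step carrying all the content is only asserted. You never show that a valid choice exists unless $G[N_2(u)]$ is the Petersen-type $6$-cycle; you say you ``expect'' this and would enumerate patterns, but you do not do it.

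The paper closes this in two steps. First, for an \emph{arbitrary} fixed $u$, $G[N_2(u)]$ has at most $5$ edges. With $6$ edges, every vertex of $N^2[u]$ would have all three neighbours inside $N^2[u]$. By connectivity, $G=G[N^2[u]]$ would then be a cubic graph of girth $5$ on $10$ vertices, which is the Petersen graph. Second, the paper shows that edges can be added, which only adds constraints, until exactly $5$ remain. That configuration is unique up to symmetry, and an explicit coloring is given for it.

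The first step also shows that your idea of ``changing $u$'' is unnecessary, since every $u$ works. Your justification for it (``if every vertex had the Petersen-forcing structure then $G$ would have $10$ vertices'') was unproven and stronger than needed: a single such vertex already forces $G$ to be Petersen.

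You can fill the missing step inside your own 2-SAT framing. Consider an edge $yy'$ with $y\in U_i$ and $y'\in U_j$. It excludes exactly one of the four combinations of choices on $(U_i,U_j)$, namely the one giving both endpoints the third color. Two disjoint such edges exclude complementary combinations. Hence, for every pair $\{i,j\}$, the allowed combinations contain a bijection between the two choices on $U_i$ and the two choices on $U_j$.

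With at most $5$ edges, some pair, say $\{U_1,U_2\}$, spans at most one edge. Propagate from the two choices on $U_3$ along such bijections for the pairs $\{1,3\}$ and $\{2,3\}$. This gives two assignments, both compatible with the $U_1$--$U_3$ and $U_2$--$U_3$ edges, that differ on both $U_1$ and $U_2$. The single edge between $U_1$ and $U_2$ can exclude at most one of them.
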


A natural question is whether Theorem~\ref{thm:bPeter} (resp. our Theorem~\ref{thm:bsPeter}) holds even for graphs of large degree.
This question has been stated as the following conjecture.

\begin{conjecture}[{\cite{BLIDIA20091787}}]\label{conj:bd5}
Every $d$-regular graph with girth at least $5$, different from the Petersen graph, has a $b$-coloring with $d + 1$ colors.
\end{conjecture}

Recently, Dettlaff, Furmańczyk, Peterin, Roux, and Ziemann~\cite{DETTLAFF2024128914} presented two approaches to Conjecture~\ref{conj:bd5}. One of these approaches yields the following theorem (in the statement, $G[N^2[u]]$ is the subgraph of $G$ induced on the vertices at distance at most 2 from $u$).

\begin{theorem}[{\cite[Theorem 3.3]{DETTLAFF2024128914}}]\label{thm:DFPRZ}
Let $G$ be a $d$-regular graph, $d \ge 7$, with $\girth(G) = 5$. If there exists a vertex $u$ in $G$ contained in at most $5$ cycles of length 6 in $G[N^2[u]]$, then $b(G) = d+1$.
\end{theorem}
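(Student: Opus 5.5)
The natural plan is to force $u$ and its $d$ neighbours to be $b$-vertices, and to color everything else greedily. Let $N(u)=\{v_1,\dots,v_d\}$. Give $u$ color $0$ and $v_i$ color $i$ for $i\in\{1,\dots,d\}$, using the $d+1$ colors $\{0,1,\dots,d\}$. Then $u$ is a $b$-vertex of color $0$.

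For each $v_i$ to be a $b$-vertex, the set $A_i=N(v_i)\setminus\{u\}$, which has $d-1$ vertices, must receive the $d-1$ colors $[d]\setminus\{i\}$ bijectively. Vertex $v_i$ then sees color $0$ through $u$ and every other color through $A_i$.

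Because $\girth(G)=5$, the local structure is rigid:
\begin{itemize}
\item $N(u)$ is independent.
\item The sets $A_i$ are pairwise disjoint and each $A_i$ is independent.
\item Every $w\in A_i$ has at most one neighbour in $A_j$ for $j\ne i$; otherwise a $4$-cycle through $v_j$ appears.
\end{itemize}
So the conflict graph $H=G[\bigcup_i A_i]$ has the property that $w\in A_i$ has at most $d-1$ neighbours in $H$, at most one in each other class.

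Once $N^2[u]$ is properly colored, every remaining vertex has at most $d$ neighbours. A greedy pass with $d+1$ colors therefore extends the coloring to all of $G$, and this gives $b(G)=d+1$. The whole problem thus reduces to finding a proper coloring of $H$ in which each class $A_i$ is colored bijectively by $[d]\setminus\{i\}$.

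To build this coloring I will process $A_1,A_2,\dots,A_d$ in order. At step $i$ I will find a perfect matching in the bipartite graph between $A_i$ and $[d]\setminus\{i\}$, where $w$ is joined to color $c$ unless some already-colored $H$-neighbour of $w$ has color $c$. A vertex $w\in A_i$ has at most $i-1$ forbidden colors. Symmetrically, a color $c$ is forbidden for $w$ only through the unique vertex of color $c$ in each earlier class adjacent to $A_i$. For small $i$ the degrees are large enough that Hall's condition holds automatically. A cleaner alternative is to choose all bijections uniformly at random and show that the expected number of monochromatic $H$-edges is below one. Each $H$-edge is monochromatic with probability about $1/(d-1)$ when the two endpoints' allowed color sets overlap. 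However, $H$ may have up to $d(d-1)^2/2$ edges, so a pure union bound fails; some repair step is needed either way.

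The main obstacle is the repair, or equivalently Hall's condition for the last few classes. Here I would use the hypothesis on $6$-cycles. A Hall violation, or a conflict that cannot be removed by swapping two colors inside one $A_i$, should force $H$-edges $ww'$, $xx'$ with $w,x\in A_i$ and $w',x'\in A_j$ arranged in an alternating way. Such a configuration yields a $6$-cycle such as $v_i\,w\,w'\,v_j\,x'\,x$ inside $G[N^2[u]]$.

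So I would try to prove that each obstruction consumes a distinct $6$-cycle of $G[N^2[u]]$, and that with $d\ge 7$ there is enough slack to remove up to $5$ obstructions by local swaps. Concretely, a swap exchanges the colors of two vertices of some $A_j$ not involved in any of the boundedly many bad cycles; there are at least $d-1-O(5)$ candidate vertices and each swap creates at most $4$ new conflicts. Making this counting precise, and finding exactly why the threshold is $5$ cycles and $d\ge7$, is where I expect most of the work to lie.
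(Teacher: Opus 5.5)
Your set-up is sound: colour $u$ and the $v_i$, reduce to colouring each $A_i$ bijectively by $[d]\setminus\{i\}$, then process the classes in order with Hall's theorem. The paper only cites the present statement, but this is exactly the framework of its proof of the $b^*$-analogue, Theorem~\ref{thm:d2}.

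However, your proposal stops where the actual work begins, and the way you plan to use the hypothesis cannot work. The hypothesis counts $6$-cycles \emph{through $u$}. Your obstruction cycle $v_i\,w\,w'\,v_j\,x'\,x$ avoids $u$, so the hypothesis says nothing about it. For example, if $H$ contains a perfect matching between $A_1$ and $A_2$ and every $H$-degree is at most one, then $u$ lies on no $6$-cycle of $G[N^2[u]]$, yet there are $\binom{d-1}{2}$ cycles of your form. Such configurations are also not what makes Hall fail. If every vertex of $A_i$ has at most one forbidden colour, Hall holds. Sets $W$ with $|W|\le d-2$ are covered by the degree bound. For $W=A_i$, a colour $c$ can only be forbidden via the $c$-coloured vertices of earlier classes $A_j$, $j\notin\{i,c\}$, each hitting $A_i$ at most once. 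So $c$ is forbidden for at most $d-2<|A_i|$ vertices. Hence ``each obstruction consumes a distinct $6$-cycle'' has nothing to pay for it, and the swap-repair is never actually specified.

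The missing idea is to read off what the hypothesis says about $H$. A $6$-cycle through $u$ in $G[N^2[u]]$ has the form $u\,v_a\,b\,c\,e\,v_{a'}$ with $b\in A_a$ and $e\in A_{a'}$. Girth $5$ forces $c\in A_f$ for some $f\notin\{a,a'\}$. So the cycle is determined by a vertex $c$ of $H$ together with two of its $H$-neighbours, and the number of such cycles is $\sum_c\binom{\deg_H(c)}{2}$. The hypothesis therefore says that at most five vertices of $H$ have $H$-degree at least $2$, and these are the only vertices that can have two or more forbidden colours.

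Now reorder $N(u)$ so that the classes containing such vertices come first. The paper does this by ordering lexicographically by the degree sequences $\Delta_i$; then if $A_i$ contains a vertex of $H$-degree at least $2$, so does every earlier class. These vertices yield distinct cycles, since they are pairwise distinct antipodes of $u$. Consequently every class after the fifth has all $H$-degrees at most $1$, and Hall holds there by the argument above.

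For $i\le 4$, Hall is automatic from your degree count, because $d-i+1>i-1$ when $d\ge 7$; for $d\ge 9$ the same count also covers $i=5$. For $d\in\{7,8\}$, the only remaining case is that $A_5$ contains a vertex of $H$-degree at least $2$. The budget of five cycles then forces exactly one vertex of $H$-degree $2$ in each of $A_1,\dots,A_5$ and none elsewhere. At step $5$, one vertex has at most two forbidden colours, all others at most one, and each colour is forbidden for at most $4<d-1$ vertices of $A_5$, so Hall is again immediate. No random colouring or swapping is needed.
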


Our next contribution is Theorem~\ref{thm:d2}, which is similar to Theorem~\ref{thm:DFPRZ}.
However, our theorem is stronger since we allow the number of cycles containing $u$ to depend on $d$.
Hence, for $d \ge 11$, we obtain a strengthening of Theorem~\ref{thm:DFPRZ}, see Corollary~\ref{cor:d2}.
This strengthening from a constant $5$ to a function of $d$ was explicitly stated as an open problem~\cite[Section 5]{DETTLAFF2024128914}.

\newcommand{\thmd}{Let $G$ be a connected $d$-regular graph of girth at least $5$ such that $d \ge 4$.
If there is a vertex $u$ in $G$ contained in at most $\lceil \frac{d}{2} \rceil$ cycles of length 6 in $G[N^2[u]]$, then $b^*(G)=d+1$.}

\begin{theorem}\label{thm:d2}
\thmd
\end{theorem}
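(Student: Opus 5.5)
The plan is to build the coloring around the vertex $u$ from the statement, so that $u$ becomes a $b^*$-vertex. Then we extend greedily to the rest of $G$. Write $N(u)=\{v_1,\dots,v_d\}$ and $A_i=N(v_i)\setminus\{u\}$, so $|A_i|=d-1$.

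Since $\girth(G)\ge 5$, $N(u)$ is independent, every $A_i$ is independent, the sets $A_i$ are pairwise disjoint, and no vertex of $A_i$ is adjacent to $u$ or to $v_j$ for $j\neq i$. Consequently, a $6$-cycle through $u$ in $G[N^2[u]]$ has the form $u v_i x y v_j u$ with $x\in A_i$, $y\in A_j$, $i\neq j$ and $xy\in E(G)$. It is determined by the edge $xy$. Hence the hypothesis says exactly that the set $F$ of ``cross edges'' between different sets $A_i$ has $|F|\le\lceil d/2\rceil$.

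We use colors $1,\dots,d+1$. Give $u$ color $d+1$ and $v_i$ color $i$, and choose for every $i$ a bijection $\sigma_i\colon A_i\to[d]\setminus\{i\}$. Such a coloring is proper on $N^2[u]$ if and only if $\sigma_i(x)\neq\sigma_j(y)$ for every cross edge $xy\in F$ with $x\in A_i$, $y\in A_j$. Indeed, $A_i$ is independent, avoids colors $i$ and $d+1$, and has no other edges inside $N^2[u]$.

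If this is achieved, each $v_i$ sees $d+1$ in $u$ and all of $[d]\setminus\{i\}$ in $A_i$, so every $v_i$ is a $b$-vertex. Since $u$ sees all of $[d]$, $u$ is a $b$-vertex as well, and $u$ is adjacent to a $b$-vertex of every other color. Thus $u$ is a $b^*$-vertex. Every vertex outside $N^2[u]$ has degree $d<d+1$, so we extend the coloring greedily to a proper $(d+1)$-coloring of $G$, and we obtain $b^*(G)\ge d+1$. The reverse inequality $b^*(G)\le b(G)\le d+1$ is immediate for $d$-regular $G$.

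The main step, and the expected obstacle, is choosing the bijections so that no cross edge is monochromatic. I would start from arbitrary bijections and repair conflicts one at a time by transpositions inside a single $A_i$. Suppose $xy\in F$ is monochromatic with $x\in A_i$. The edges of $F$ have at most $\lceil d/2\rceil\le d-2$ endpoints in $A_i$ (this uses $d\ge4$), so $A_i$ contains at least one vertex $x'$ incident to no cross edge. Swapping $\sigma_i(x)$ and $\sigma_i(x')$ creates no conflict at $x'$. We need the new color $\sigma_i(x')$ to avoid the colors of all cross-neighbours of $x$. To guarantee both this and that the number of conflicts strictly decreases, I would count: the free vertices of $A_i$ number at least $d-1-e_i$, where $e_i$ is the number of $F$-endpoints in $A_i$, and we must beat $\deg_F(x)$ forbidden colors.

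A cleaner route, if the counting gets tight, is a two-phase argument. First precolor only the $F$-endpoints, treating $F$ as a graph with at most $\lceil d/2\rceil$ edges. Each endpoint in $A_i$ gets list $[d]\setminus\{i\}$, with the extra requirement that endpoints inside the same $A_i$ get distinct colors. This is a proper coloring of $F$ plus cliques on each $A_i\cap V(F)$, and it can be done greedily because degrees there are at most $e_i-1+\deg_F(x)\le |F|\le\lceil d/2\rceil<d-1$ for $d\ge4$. Then complete each $\sigma_i$ arbitrarily on the vertices of $A_i$ not incident to $F$. Verifying this degree bound, together with the boundary case $d=4$ where $\lceil d/2\rceil=d-2$, is where I expect the real care to be needed.
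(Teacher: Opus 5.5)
\textbf{There is a genuine gap: your reformulation of the hypothesis is wrong, and everything after it depends on it.}

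The closed walk $u v_i x y v_j u$ has only five distinct vertices, so it is a $5$-cycle, not a $6$-cycle. Under $\girth(G)\ge 5$, a $6$-cycle through $u$ in $G[N^2[u]]$ has the form $u v_i x z y v_j u$ with $x\in A_i$, $z\in A_m$, $y\in A_j$ and $i,j,m$ pairwise distinct. So it corresponds to a vertex $z$ at distance $2$ from $u$ that is incident to two cross edges.

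The hypothesis therefore bounds $\sum_z\binom{\deg_F(z)}{2}$, the number of pairs of adjacent edges of $F$. It gives no bound on $|F|$. For instance, if every vertex at distance $2$ from $u$ has at most one cross neighbor, then $u$ lies on no such $6$-cycle, yet $F$ can be a large matching. In that case every vertex of $A_i$ may be an $F$-endpoint:
\begin{itemize}
\item There is no free vertex $x'$ for your transpositions.
\item In your second phase, a vertex of $A_i$ can have $e_i-1+\deg_F(x)=d-1$ conflicts against a list of size $d-1$, so greedy coloring is not guaranteed to succeed.
\end{itemize}
What your argument does prove is the analogous statement for $5$-cycles through $u$, i.e.\ under $|F|\le\lceil d/2\rceil$. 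That is a different theorem.

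The missing ingredient is a global way to choose the bijections when $F$ is large, together with the correct use of the hypothesis. The paper keeps your setup: the same colors on $u$ and $N(u)$, all $v_i$ made into $b$-vertices, and greedy completion. It adds two things.
\begin{itemize}
\item It first orders $N(u)$ so that the sets $A_i$ whose vertices have the largest degrees in $G[N^2(u)]$ come first. The order is lexicographic on the non-increasing degree sequences.
\item It then builds $\sigma_1,\dots,\sigma_d$ one at a time, each as a perfect matching obtained from Hall's theorem.
\end{itemize}

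At step $i$, each vertex of $A_i$ has at most $i-1$ previously colored cross neighbors, and each color is forbidden for at most $i-1$ vertices of $A_i$.

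\emph{Case $i\le\lceil d/2\rceil$.} Every vertex has degree at least $d-i$ in the bipartite graph, which handles $|W|\le d-i$. A set $W$ with $|W|\ge d-i+1>i-1$ cannot have a color forbidden for all of its vertices, so it sees every color.

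\emph{Case $i>\lceil d/2\rceil$.} Suppose some vertex of $A_i$ had two cross neighbors. By the ordering, some vertex of every $A_j$ with $j\le i$ would also have two cross neighbors. This gives $i>\lceil d/2\rceil$ distinct $6$-cycles through $u$, contradicting the hypothesis; this is where the hypothesis is used. Hence each vertex of $A_i$ loses at most one color, and Hall's condition holds for $|W|\le d-2$. For $W=A_i$ it holds because each color $c$ is forbidden for at most $d-2$ vertices, since $c$ does not occur in $A_c$.
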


\begin{corollary}\label{cor:d2}
If $G$ is a $d$-regular graph of girth at least $5$ such that $d \ge 11$ and there is a vertex $u$ in $G$ contained in at most $\lceil \frac{d}{2} \rceil > 5$ cycles of length 6 in $G[N^2[u]]$, then $b(G)=d+1$.
\end{corollary}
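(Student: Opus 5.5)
The plan is to derive Corollary~\ref{cor:d2} directly from Theorem~\ref{thm:d2} via the sandwich $b^*(H) \le b(H) \le d+1$. This holds for any $d$-regular graph $H$, because every $b^*$-coloring is a $b$-coloring, and a $b$-vertex needs degree at least $k-1$.

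The corollary does not assume that $G$ is connected, whereas Theorem~\ref{thm:d2} does. So the first step is to let $H$ be the connected component of $G$ containing the vertex $u$ from the statement. Then $H$ is connected and $d$-regular, and its girth is at least $\girth(G) \ge 5$. Since $H$ is a component, $N^2_H[u] = N^2_G[u]$ and $H[N^2[u]] = G[N^2[u]]$. Hence the number of $6$-cycles through $u$ in this induced subgraph is unchanged, and it is at most $\lceil d/2 \rceil$. Also $d \ge 11 \ge 4$. Theorem~\ref{thm:d2} therefore gives $b^*(H) = d+1$, and so $b(H) \ge b^*(H) = d+1$.

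The second step lifts a $b$-coloring from $H$ to $G$. This is the only point needing care, since $b$-colorings are not local in the way $b^*$-colorings are. Take a $b$-coloring $c$ of $H$ with $d+1$ colors. Every other component $H'$ of $G$ is $d$-regular, so it can be greedily colored with at most $d+1$ colors from the same palette $\{1,\dots,d+1\}$. Combining these colorings gives a proper coloring of $G$ using exactly $d+1$ colors, because all $d+1$ already appear in $H$. Every color class still contains its $b$-vertex from $H$, and the closed neighborhoods of those vertices lie entirely inside $H$, so they are unaffected. The result is a $b$-coloring of $G$ with $d+1$ colors, giving $b(G) \ge d+1$.

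Together with the trivial upper bound $b(G) \le \Delta(G)+1 = d+1$, this yields $b(G) = d+1$. The hypothesis $\lceil d/2 \rceil > 5$, i.e.\ $d \ge 11$, plays no role in the argument beyond guaranteeing $d \ge 4$. It only marks the range in which the corollary strictly strengthens Theorem~\ref{thm:DFPRZ}. I expect no real obstacle: the only subtle point is the component-wise extension of the $b$-coloring, which is handled by the palette argument above.
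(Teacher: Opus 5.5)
Your proposal is correct and matches the paper's implicit argument: the corollary follows from Theorem~\ref{thm:d2} together with $b^*(G) \le b(G) \le d+1$, and the assumption $d \ge 11$ is used only to guarantee $d \ge 4$. Restricting to the component of $u$ and greedily coloring the other $d$-regular components correctly covers the connectivity hypothesis that Theorem~\ref{thm:d2} states but the corollary omits; in fact the theorem's proof never uses connectivity, since Observation~\ref{obs:finish-regular} already extends the coloring greedily over all of $G$.
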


\subsection*{Contribution 2: Parameterized complexity}

The aforementioned \XP algorithm for \bcoloring parameterized by clique-width by Jaffke, Lima, and Lokshtanov~\cite{JaffkeLL24} unified previously known polytime algorithms~\cite{IRVING1999127,BonomoSSV15,CamposSMS09} and initiated the study of structural parameterizations of this problem.
They~\cite{JaffkeLL24} also proved that \bcoloring is \FPT when parameterized by the vertex cover number.
Subsequently, Jaffke, Lima, and Sharma~\cite{JaffkeLS23} proved that \bcoloring is \XNLP-complete when parameterized by path-width, \FPT when parameterized by neighborhood diversity or the twin cover number, and \paraNP-complete when parameterized by twin-width or mim-width.
Finally, Balabán~\cite{DBLP:conf/mfcs/Balaban26} proved that \bcoloring is \Wone-hard when parameterized by tree-depth and \FPT when parameterized by the feedback edge number, by distance to co-cluster or by some other parameters, see Figure~\ref{fig:hierarchy}.

\begin{figure}[t]
\scalebox{0.75}{
\begin{tikzpicture}[every node/.style={draw, rectangle}]
  \node[fill=red!30] (twin) at (0,-2) {Twin-width~\cite{JaffkeLS23}};
  \node[fill=red!30] (mim) at (-4,-2) {Mim-width~\cite{JaffkeLS23}};
  \node[fill=blue!30] (clique) at (-2,-3) {Clique-width~\cite{JaffkeLL24}};
  \node[fill=blue!30]  (tw) at (-2,-4) {Tree-width};
  \node[fill=blue!30] (pw) at (-2, -5) {Path-width~\cite{JaffkeLS23}};
  \node[fill=blue!30] (td) at (-2,-6) {Tree-depth~\cite{DBLP:conf/mfcs/Balaban26}};
  \node[fill=green!30] (vi) at (-2,-7) {Vertex integrity~\cite{DBLP:conf/mfcs/Balaban26}};
  \node [align=center, fill=green!30] (vc) at (-2,-8) {Vertex cover number~\cite{JaffkeLL24}};
  
  \node [align=center, fill=green!30] (nd) at (-9.3,-6.8) {Neighborhood \\diversity~\cite{JaffkeLS23}};
  \node [align=center, fill=green!30] (tc) at (-6.8,-6.8) {Twin cover\\ number~\cite{JaffkeLS23}};
   
  \node [align=center, fill=green!30] (fen) at (4.5,-7.8) {Feedback edge \\number~\cite{DBLP:conf/mfcs/Balaban26}};
  \node [align=center] (fvn) at (4.5,-5.8) {Feedback vertex \\number};
  
  \node [align=center, fill=green!30] (bw) at (1.5,-8) {Band-width~\cite{DBLP:conf/mfcs/Balaban26}};
  \node [align=center, fill=green!30] (cutw) at (1.5,-7) {Cut-width~\cite{DBLP:conf/mfcs/Balaban26}};
  \node [align=center, fill=green!30] (carving) at (1.5,-6) {Carving-width~\cite{DBLP:conf/mfcs/Balaban26}};
  
  \draw[{Latex[length=2mm]}-] (tw.south east) -- (carving.north);
  \draw[{Latex[length=2mm]}-] (carving.south) -- (cutw.north);
  \draw[{Latex[length=2mm]}-] (cutw.south) -- (bw.north);
  \draw[{Latex[length=2mm]}-] (pw.south east) -- (cutw.north west);
  
  \node (modw) at (-9.3, -5) {Modular width};
  
  \node[align=center] (dtc) at (-6.8, -4.8) {Distance \\to cluster};
  \node[align=center, fill=green!30] (dtcc) at (-4.5, -4.8) {Distance \\to co-cluster~\cite{DBLP:conf/mfcs/Balaban26}};

  \draw[{Latex[length=2mm]}-] (dtc.south) -- (tc.north);
  \draw[{Latex[length=2mm]}-] (clique.south west) -- (dtc.north);
  
  \draw[{Latex[length=2mm]}-] ([xshift=10pt]clique.south west) -- (dtcc.north);
  \draw[-{Latex[length=2mm]}] (vc.north west) -- (dtcc.south);

  \draw[{Latex[length=2mm]}-] (twin.south) -- (clique.north east);
  \draw[{Latex[length=2mm]}-] (mim.south) -- (clique.north west);
  
  \draw[{Latex[length=2mm]}-] (clique.south) -- (tw.north);
  \draw[{Latex[length=2mm]}-] (tw.south) -- (pw.north);
  \draw[{Latex[length=2mm]}-] (pw.south) -- (td.north);
  \draw[{Latex[length=2mm]}-] (td.south) -- (vi.north);
  \draw[{Latex[length=2mm]}-] (vi.south) -- (vc.north);
  
  \draw[{Latex[length=2mm]}-] (tw.east) -- (fvn.north);
  \draw[{Latex[length=2mm]}-] (fvn.south) -- (fen.north);
  
  \draw[{Latex[length=2mm]}-] (modw.south) -- (nd.north);
  \draw[{Latex[length=2mm]}-] (modw.south east) -- (tc.north);
 
  \draw[{Latex[length=2mm]}-] (nd.south east) -- (vc.west);
  \draw[{Latex[length=2mm]}-] (tc.south east) -- (vc.north west);
  
  \draw[{Latex[length=2mm]}-] (clique.west) -- (modw.north);
\end{tikzpicture}}

\caption{Parameterized complexity of \bcoloring under various structural parameters (replicated from~\cite{DBLP:conf/mfcs/Balaban26} with permission). A directed path from a parameter $\alpha$ to a parameter $\beta$ indicates that $\beta \le f(\alpha)$ for some computable function $f$. Green stands for \FPT, blue is \textsf{W[1]}-hard and \textsf{XP}, red is \textsf{paraNP}-complete, and white means that it is unknown whether the problem is \FPT or \textsf{W[1]}-hard under that parameterization. We prove that the complexity landscape for \bscoloring looks exactly the same.}
\label{fig:hierarchy}
\end{figure}

In Section~\ref{sec:par-complex}, we prove that all these results about the parameterized complexity of \bcoloring are true also for \bscoloring.
For the hardness results, it suffices to observe that the reduction showing \NP-hardness of \bscoloring (Proposition~\ref{prop:b-to-bs}) works also in our parameterized settings.
In addition, we observe that \bscoloring is \Wone-hard parameterized by the number of colors $k$, by reducing from the analogous result about \bcoloring~\cite{DBLP:journals/jcss/PanolanPS17}.

For vertex integrity and carving-width (and hence also for the parameters below them in Figure~\ref{fig:hierarchy}), the \FPT algorithms for \bcoloring work as follows.
If the number of colors $k$ is large compared to the parameter $p$, then we answer NO because there are not enough vertices which could become $b$-vertices.
Otherwise, $w + k$ is bounded by a function of $p$, where $w$ is the tree-width of the graph, and we may use dynamic programming.
We show that the same strategy works also for \bscoloring (we use the aforementioned concept of $m^*$-degree).

We transform the \XP algorithm for \bcoloring parameterized by clique-width~\cite{JaffkeLL24} into an algorithm for \bscoloring by first guessing a vertex $u$, which should become a $b^*$-vertex, and then ensuring that all $b$-vertices are in the closed neighborhood of $u$.
We remark that Zaker~\cite[Problem 3]{ZAKER2025370} asked for classes on which $b^*(G)$ can be determined in polynomial time.
Our result implies that any class of bounded clique-width admits such a polytime algorithm.

The \FPT algorithm for \bcoloring parameterized by neighborhood diversity~\cite{JaffkeLS23} works by solving a bounded number of Integer Linear Programming (ILP) instances. To transform it into an algorithm for \bscoloring, it suffices to restrict ourselves only to some of the ILP instances (those in which all $b$-vertices must be in the closed neighborhood of some vertex).
The \FPT algorithm for \bcoloring parameterized by the twin cover number~\cite{JaffkeLS23} works by applying a few reduction rules, which yields an equivalent instance of small neighborhood diversity.
We prove that the same reduction rules work also for \bscoloring.
To transform the \FPT algorithm parameterized by distance to co-cluster~\cite{DBLP:conf/mfcs/Balaban26}, it suffices to modify one definition, see Definition~\ref{def:candidates-and-flexible}.

Perhaps the most interesting parameterization is that by the feedback edge number.
The \FPT algorithm for \bcoloring using this parameterization~\cite{DBLP:conf/mfcs/Balaban26} is rather complex; it needs to handle several forbidden configurations, which are similar to the ``pivoted trees'' used in the polytime algorithm for \bchromatic on trees~\cite{IRVING1999127}.
However, our new \FPT algorithm for \bscoloring is much simpler: it uses one easy-to-prove lemma from~\cite{DBLP:conf/mfcs/Balaban26} but otherwise it is independent of the algorithm for \bcoloring.
The reason why our simpler algorithm exists is the locality of $b^*$-colorings: it ensures that the configurations considered in~\cite{DBLP:conf/mfcs/Balaban26} cannot occur.

\section{Preliminaries}\label{sec:prelims}
For integers $i$ and $j$, we define $[i,j] = \{a \in \mathbb Z \sep i \le a \le j\}$ and $[i] = [1, i]$.
For a set $S$, we denote the power set of $S$ by $2^S$, and $\bigcup S := \{u \sep \exists a \in S\colon u\in a\}$.
Let $f\colon A \rightarrow B$ be a partial function.
The \emph{domain} of $f$, denoted $\dom(f)$, is the subset of $A$ on which $f$ is defined.
We say that $f$ is \emph{total} if $\dom(f) = A$.
When we say a function, we mean a total function.
The \emph{range} of $f$, denoted $\range(f)$, is the set $\{b \in B\sep \exists a\in A\colon f(a) = b\}$.
For $A'\seq A$, we denote the function $\{(a, b) \in f \sep a \in A'\}$ by $f \upharpoonright A'$.
For $a \in A$ and $b \in B$, we define $f[a \mapsto b]$ to be the function $f \setminus \{(a, f(a))\} \cup \{(a, b)\}$.

Let $G$ be a graph (we will consider only finite simple graphs).
We assume familiarity with basic concepts in graph theory~\cite{Diestel}.
Given a set of vertices $U\seq V(G)$, we will use $\overline{U}$ to denote the set $V(G) \setminus U$, $G[U]$ to denote the graph induced on $U$, and $G - U$ to denote the graph $G[\overline{U}]$.
Similarly, for an edge set $F \seq E(G)$, $G-F$ denotes the graph obtained from $G$ by removing the edges in $F$.
We use $H \seq_i G$ to say that $H$ is an induced subgraph of $G$.

The \emph{length} of a path or a cycle is the number of edges it contains.
The \emph{girth} of $G$, denoted $\girth(G)$, is the length of the shortest cycle in $G$ (or $\infty$ if $G$ is acyclic).
The \emph{distance} between two vertices $u$ and $v$, denoted $\dist_G(u, v)$,
is the length of the shortest path between them (or $\infty$ if no such path exists).
If $v \in V(G)$ and $U \seq V(G)$ is non-empty, then $\dist_G(v, U) = \min\{\dist_G(u, v)\sep u \in U\}$.
For $U \seq V(G)$ and $v \in V(G)$, a $v$-$U$ path is a path $P$ in $G$ with endpoints $v$ and $u$ such that $V(P)\cap U = \{u\}$.

For $u\in V(G)$, we define the open neighborhood $N_G(u) := \{v \in V(G)\sep uv \in E(G)\}$, the closed neighborhood $N_G[u] := N_G(u) \cup \{u\}$, and their distance-2 variants $N^2_G[u] = \{v \in V(G)\sep \dist_G(u,v) \le 2\}$ and $N^2_G(u) = N^2_G[u] \setminus \{u\}$.
Similarly, for $S\seq V(G)$, we define $N_G(S) = \bigcup_{u \in S} N_G(u)$ and $N_G[S] = N_G(S) \cup S$.
We say that $u, v \in V(G)$ are \emph{twins} if $N_G(u) \setminus \{v\} = N_G(v) \setminus \{u\}$.
When $G$ is clear from the context, we will omit the subscript and write, e.g., $\dist(u, v)$ or $N(u)$.

The \emph{$m$-degree} of $G$, denoted $m(G)$, is the maximum integer $k$ such that there are at least $k$ vertices of degree at least $k-1$.
For a vertex $u \in V(G)$, we define the \emph{$m^*$-degree} of $u$, denoted $m^*(u)$, to be the maximum integer $k$ such that $u$ has at least $k$ neighbors of degree at least $k$.
Furthermore, we define the $m^*$-degree of $G$ as $m^*(G) = \max_{u \in V(G)} m^*(u)$.
Observe that $m^*(G) \le m(G) -1$.

We will use the famous Hall's theorem. 

\begin{theorem}[Hall's theorem, \cite{hall1987representatives}]\label{thm:Hall}
Let $G$ be a bipartite graph and let $X$ be one of the two parts.
There is a matching in $G$ covering $X$ if and only if $|N_G(W)| \ge |W|$ for all $W \seq X$.
\end{theorem}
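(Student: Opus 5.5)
Since this is the classical theorem of Hall, I would give the standard proof by induction on $|X|$ rather than derive it from anything else in the paper. Write $Y$ for the other part of the bipartite graph $G$.

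\emph{Necessity.} This direction is immediate. If $M$ is a matching covering $X$, then for every $W \seq X$ the $M$-partners of the vertices of $W$ are $|W|$ distinct vertices. All of them lie in $N_G(W)$, so $|N_G(W)| \ge |W|$.

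\emph{Sufficiency.} I would induct on $|X|$. The case $|X| \le 1$ is trivial: if $X = \{x\}$, Hall's condition gives $x$ a neighbour. For $|X| \ge 2$ I split into two cases according to whether some nonempty proper subset of $X$ is \emph{tight}, meaning $|N_G(W)| = |W|$.

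In the first case, every nonempty $W \subsetneq X$ satisfies $|N_G(W)| \ge |W| + 1$. I pick any $x \in X$ and any neighbour $y$ of $x$, which exists by the condition for $W = \{x\}$. Let $G' = G - \{x, y\}$. For every nonempty $W \seq X \setminus \{x\}$ we have
\[
|N_{G'}(W)| \ge |N_G(W)| - 1 \ge |W|,
\]
so Hall's condition holds in $G'$ for the part $X \setminus \{x\}$. By induction $G'$ has a matching covering $X \setminus \{x\}$, and adding the edge $xy$ gives the required matching.

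In the second case, there is a tight set $W_0$ with $\emptyset \ne W_0 \subsetneq X$. I split $G$ into two smaller instances.
\begin{itemize}
\item $G_1 = G[W_0 \cup N_G(W_0)]$ with part $W_0$. Hall's condition holds here because $N_{G_1}(W) = N_G(W)$ for all $W \seq W_0$. Induction gives a matching of $W_0$ into $N_G(W_0)$.
\item $G_2 = G - (W_0 \cup N_G(W_0))$ with part $X \setminus W_0$. For $Z \seq X \setminus W_0$, the key computation is
\[
|N_{G_2}(Z)| = |N_G(Z \cup W_0)| - |N_G(W_0)| \ge |Z| + |W_0| - |W_0| = |Z|.
\]
Here the equality holds because $N_G(Z \cup W_0)$ is the disjoint union of $N_G(W_0)$ and $N_G(Z) \setminus N_G(W_0) = N_{G_2}(Z)$. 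The inequality uses Hall's condition for $Z \cup W_0$ together with the tightness of $W_0$. Induction then gives a matching of $X \setminus W_0$ in $G_2$.
\end{itemize}
Both instances have a strictly smaller part, since $W_0$ is nonempty and proper. The two matchings are vertex-disjoint, so their union covers $X$.

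I expect the only delicate point to be this second case: the tightness of $W_0$ is exactly what makes Hall's condition survive the deletion of $N_G(W_0)$. The rest is routine bookkeeping.
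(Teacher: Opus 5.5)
Your proof is correct and complete. It is the standard Halmos--Vaughan induction on $|X|$, and both cases check out. In the first case, every nonempty $W \seq X \setminus \{x\}$ is a proper subset of $X$, so deleting $y$ costs at most one neighbour. In the second case, the tightness of $W_0$ gives exactly $|N_{G_2}(Z)| \ge |Z|$.

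The paper does not prove this statement; it simply cites Hall's classical result \cite{hall1987representatives}. It uses only the sufficiency direction, as a black box in the proof of Theorem~\ref{thm:d2}. There $|U_i| = |C_i| = d-1$, so a matching covering $U_i$ is the perfect matching $\mu$ needed to extend $\chi_{i-1}$. Your argument is therefore a valid self-contained replacement for that citation, not an alternative to a proof given in the paper.
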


\subsection{Colorings}
A (partial) \emph{$k$-coloring of $G$} is a (partial) function $\chi\colon V(G)\rightarrow [k]$.
We say that $\chi$ is \emph{proper} if $\chi(u)\ne\chi(v)$ for each $uv\in E(G)$.
For the following definitions, let us fix a graph $G$ and a $k$-coloring $\chi$.

We say that $u\in V(G)$ is a \emph{$b$-vertex} in $\chi$ (or a $\chi$-$b$-vertex) if $\chi(N[u]) = [k]$, and that $\chi$ is a \emph{$k$-$b$-coloring} if it is proper and for each color $c \in [k]$, there is a $b$-vertex $u$ in $\chi$ such that $\chi(u) = c$.
A coloring is a \emph{$b$-coloring} if it is a $k$-$b$-coloring for some integer $k$.
The \emph{$b$-chromatic number} of $G$, denoted $b(G)$, is the maximum integer $k$ such that $G$ admits a (total) $k$-$b$-coloring.

We say that $u$ is a \emph{$b^*$-vertex} in $\chi$ (or a $\chi$-$b^*$-vertex) if for each color $c \in [k]\setminus \{\chi(u)\}$, there is a $b$-vertex $v \in N(u)$ such that $\chi(v) = c$.
A proper (partial) $k$-coloring which contains a $b^*$-vertex is called a (partial) \emph{$k$-$b^*$-coloring.}
A coloring is a \emph{$b^*$-coloring} if it is a $k$-$b^*$-coloring for some integer $k$.
The \emph{$b^*$-chromatic number} of $G$, denoted $b^*(G)$, is the maximum integer $k$ such that $G$ admits a (total) $k$-$b^*$-coloring.
We say that $G$ is \emph{$b^*$-monotonic} if $b^*(H_1) \le b^*(H_2)$ for all $H_1 \seq_i H_2 \seq_i G$.
In some of our algorithms, we will need the following definition.

\begin{definition}[{\cite[Definition 2.3]{JaffkeLL24}}]
A \emph{potential $k$-$b$-coloring} of $G$ is a pair $(\chi, B)$, where $\chi$ is a proper $k$-coloring of $G$ and $B \seq V(G)$ is a set such that for every color $c \in [k]$, we have $|\chi^{-1}(c) \cap B| \le 1$.
\end{definition}

Informally, $B$ is a set of vertices that we want to become $b$-vertices in a coloring $\psi \supseteq \chi$ of some supergraph of $G$.
We remark that in~\cite{JaffkeLL24}, potential $k$-$b$-colorings are called \emph{partial $b$-colorings}; we chose to rename them to avoid confusion with our partial $b^*$-coloring (which is a partial coloring containing a $b^*$-vertex).

\subsection{Graph parameters}\label{sub:gp}

Let us fix a graph $G$.
We will not need the definition of clique-width.
Instead, we will use \emph{module-width}~\cite[defined under the name modular-width]{rao2008clique}, which is a parameter equivalent to clique-width (a graph class has bounded clique-width if and only if it has bounded module-width).
We denote the set of leaves of a tree $T$ by $L(T)$.
A \emph{rooted branch decomposition} of $G$ is a pair $(T, \ca L)$ of a rooted tree $T$ of maximum degree at most 3 and a bijection $\ca L \colon V(G) \to L(T)$.
Let us fix $t \in V(T)$. We denote by $T_t$ the subtree of $T$ rooted at $t$, and we define $V_t = \{v \in V(G) \sep \ca L(v) \in L(T_t)\}$ and $G_t = G[V_t]$.
We define an equivalence relation $\sim_t$ on $V_t$ as follows: $\forall u, v \in V_t \colon u \sim_t v \Leftrightarrow N_G(u) \cap \overline{V_t} = N_G(v) \cap \overline{V_t}$.
The \emph{module-width} of $(T, \ca L)$ is $\max_{t \in V(T)} |V_t/{\sim_t}|$ and the \emph{module-width of $G$} is the minimum module width over all rooted branch decompositions of $G$.

Now we define the remaining structural parameters.
An edge set $F \seq E(G)$ is called a \emph{feedback edge set} if $G-F$ is acyclic, and the \emph{feedback edge number} of $G$ is the size of a minimum feedback edge set in $G$.

An \emph{ND-partition} $\ca P$ is a partition of $V(G)$ such that each part of $\ca P$ is a clique or an independent set, and for all distinct parts $P_1, P_2 \in \ca P$, either $uv \in E(G)$ for all $u \in P_1, v \in P_2$ or $uv \notin E(G)$ for all $u \in P_1, v \in P_2$.
The \emph{neighborhood diversity} of $G$ is the minimum number of parts in any ND-partition of $G$.

A set $S \seq V(G)$ is a \emph{twin cover}, if for each edge $uv \in V(G)$, either $\{u, v\} \cap S \ne \emptyset$, or $u$ and $v$ are twins in $G$. The \emph{twin cover number} of
$G$ is the smallest size of any twin cover of $G$.

Given a graph class $\ca G$, we say that $S\seq V(G)$ is a \emph{$\ca G$-modulator of $G$} if $G-S \in \ca G$. An integer $p$ is the \emph{distance to $\ca G$ of $G$} if $p$ is the size of a minimum $\ca G$-modulator of $G$.
A \emph{cluster graph} is a disjoint union of cliques and a \emph{co-cluster graph} is a complement of a cluster graph, i.e., a complete multipartite graph.

Finally, we define tree-width.
A \emph{tree decomposition} of $G$ is a pair $(T, \ca B = \{B_t \mid t \in V(T)\})$, where $T$ is a tree and $B_t \seq V(G)$ for every $t \in V(T)$, satisfying the following conditions.
\begin{compactenum}
\item $\bigcup_{t \in V(T)} B_t = V(G)$.
\item For each $uv \in E(G)$, there is some $t \in V(T)$ such that $\{u, v\} \subseteq B_t$.
\item For each $v \in V(G)$, $T[\{t \in V(T) \sep v \in B_t\}]$ is connected.
\end{compactenum}
The \emph{width} of a tree decomposition is $\max_{t \in V(T)} |B_t| - 1$ and the \emph{tree-width} of $G$ is the minimum width over all its tree decompositions.
For a definition of a nice tree decomposition, see~\cite{CyganFKLMPPS15}.

\subsection{Parameterized complexity}

We assume familiarity with basics of parameterized complexity~\cite{flum2006parameterized}, namely with the classes \FPT, \XP, \Wone, and \paraNP.

The class \XNLP is defined as the class of those parameterized problems that can be solved by a non-deterministic algorithm that simultaneously uses time at most $f(p)\cdot n^{\ca O(1)}$ and space at most $f(p)\cdot \log n$, where $n$ is the length of the input, $p$ is the parameter, and $f$ is a computable function; see~\cite{bodlaender2024parameterized,elberfeld2015space} for more information on this class.

\section{Graphs of large girth}\label{sec:girth}

First, we prove a simple lemma, which gives us a partial $b^*$-coloring.

\begin{lemma}\label{lem:g6-init}
If $G$ is a graph of girth at least 6, then there is a partial $b^*$-coloring $\chi$ of $G$ using $k := m^*(G)+1$ colors such that $|\dom(\chi)| = k^2 -2k + 2$.
\end{lemma}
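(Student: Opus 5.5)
Let $k = m^*(G)+1$ and pick a vertex $u$ with $m^*(u) = k-1$. By the definition of $m^*$-degree, $u$ has $k-1$ neighbors $v_1,\dots,v_{k-1}$, each of degree at least $k-1$. For each $i$, choose $k-2$ neighbors $w_{i,1},\dots,w_{i,k-2}$ of $v_i$ different from $u$. This is possible since $\deg(v_i)\ge k-1$. The domain of $\chi$ will be $D = \{u\}\cup\{v_i\}\cup\{w_{i,j}\}$, which has size $1+(k-1)+(k-1)(k-2) = k^2-2k+2$, provided all these vertices are distinct.

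Distinctness comes from the girth. No $w_{i,j}$ equals some $v_{i'}$, since that would create a triangle $u v_i v_{i'}$. No $w_{i,j}=w_{i',j'}$ for $i\ne i'$, since that would create a $4$-cycle $u v_i w v_{i'}$. Girth at least $5$ already suffices for this. The assumption girth at least $6$ is used to control the edges inside $D$: the only edges of $G[D]$ are the tree edges $uv_i$ and $v_iw_{i,j}$, plus possibly edges between $w_{i,j}$ and $w_{i',j'}$ with $i\ne i'$. Such an edge would close a $5$-cycle $u v_i w_{i,j} w_{i',j'} v_{i'}$, which girth at least $6$ forbids. Edges $w_{i,j}w_{i,j'}$ would give triangles, and edges $u w_{i,j}$ would also give triangles. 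Hence $G[D]$ is exactly the tree of depth 2 rooted at $u$, so properness only needs to be checked along tree edges.

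Now define the coloring. Set $\chi(u)=k$ and $\chi(v_i)=i$ for $i\in[k-1]$. For each $i$, color $w_{i,1},\dots,w_{i,k-2}$ bijectively with $[k-1]\setminus\{i\}$. This is proper on the tree: $w_{i,j}\ne i$, and $u$'s color $k$ differs from all the $v_i$. Then $N[v_i]$ sees $i$ (itself), $k$ (via $u$), and all of $[k-1]\setminus\{i\}$, so each $v_i$ is a $b$-vertex. Consequently $u$ is adjacent to $b$-vertices of every color in $[k-1]=[k]\setminus\{k\}$, so $u$ is a $b^*$-vertex. Since $\chi$ is proper on $G[D]$ with $k$ colors, it is a partial $k$-$b^*$-coloring.

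The main obstacle is ensuring that no extra edges inside $D$ violate properness. The girth argument handles this, and it is the only place where girth $6$ rather than $5$ is needed.
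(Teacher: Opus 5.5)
Your proposal is correct and follows the paper's proof essentially step for step: you color the $k-1$ high-degree neighbors $v_i$ of $u$ with $1,\dots,k-1$, give each $v_i$ private further neighbors colored $[k-1]\setminus\{i\}$, color $u$ with $k$, and use girth at least $6$ to conclude that the colored vertices induce a tree. Your edge analysis is more explicit than the paper's one-line tree claim; the two cases you leave implicit, edges $v_iv_{i'}$ and $v_iw_{i',j}$ with $i\ne i'$, are ruled out by the same triangle and $4$-cycle arguments you already use.
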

\begin{proof}
By definition of $k$, there is a vertex $u \in V(G)$ and distinct vertices $u_1, \ldots, u_{k-1} \in N(u)$ such that each of them has degree at least $k-1$.
For each $i \in [k-1]$, we define $\chi(u_i) = i$, and for each $j \in [k-1] \setminus \{i\}$, we define $\chi(v) = j$ for a distinct, previously uncolored vertex $v \in N(u_i) \setminus \{u\}$.
Note that such a vertex $v$ can always be found since $N(u_i) \cap N(u_j) = \{u\}$ for all $1 \le i < j \le k$ (which holds since $G$ has high enough girth).
Finally, we define $\chi(u) = k$.

Clearly, $u$ is a $b^*$-vertex and $|\dom(\chi)| = k^2 -2k + 2$.
Since $G$ has girth at least 6, we know that the subgraph of $G$ induced by the vertices at distance at most 2 from $u$ is a tree.
Hence, $\chi$ is proper, which concludes the proof.
\qed\end{proof}

We will need the following definition: a pair $(u, U)$ is called a \emph{$\chi$-$b^*$-core} if $u \in V(G)$ and $U \seq V(G)$ is a minimal set such that $u$ is a $b^*$-vertex in $\chi \upharpoonright U$.
Observe that $u \in U$ and that $\dist(u, u') \le 2$ for every $u' \in U$.
Informally, the following lemma says that vertices far from a $b^*$-vertex $u$ can be safely recolored without causing $u$ to stop being a $b^*$-vertex.

\begin{lemma}\label{lem:recolor}
Let $G$ be a graph, let $\chi$ be a partial $k$-coloring of $G$, let $(u, U)$ be a $\chi$-$b^*$-core, and let $v \in \dom(\chi) \setminus U$.
If $v$ is not a $b$-vertex in $\chi$, 
then there is a color $c \in [k] \setminus \{\chi(v)\}$ such that $u$ is a $b^*$-vertex in $\chi' := \chi[v \mapsto c]$ and $\chi'$ is proper.
\end{lemma}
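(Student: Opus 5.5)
Since $v$ is not a $b$-vertex in $\chi$, some color in $[k]$ is missing from $\chi(N[v])$. Such a color cannot be $\chi(v)$, which does appear in $N[v]$. Let $c \in [k] \setminus \chi(N[v])$ be one such missing color; then $c \ne \chi(v)$. We set $\chi' := \chi[v \mapsto c]$ and verify the two required properties.

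Properness is immediate. The only edges whose endpoint colors change are those incident to $v$. By the choice of $c$, no neighbor of $v$ has color $c$ in $\chi$, and the colors of those neighbors are unchanged in $\chi'$. All other edges keep their endpoint colors, and $\chi$ was proper. (Partiality is harmless here: uncolored neighbors impose no constraint.)

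For the $b^*$ property, the key point is that $v \notin U$, so $\chi' \upharpoonright U = \chi \upharpoonright U$. Since $(u,U)$ is a $\chi$-$b^*$-core, $u$ is a $b^*$-vertex in $\chi \upharpoonright U$. By definition, this means that for each color $d \ne \chi(u)$ there is a neighbor $w \in N(u)$ with $\chi(w) = d$ that is a $b$-vertex in $\chi \upharpoonright U$. It remains to pass from the restriction $\chi'\upharpoonright U$ to $\chi'$ itself. Being a $b$-vertex only requires that certain colors appear in $N[w]$, so it is monotone under extending the coloring: if $\psi \seq \psi'$ as partial functions, every $\psi$-$b$-vertex is a $\psi'$-$b$-vertex. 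The colors $\chi'(u)=\chi(u)$ and $\chi'(w)=\chi(w)$ are unchanged because $u,w\in U$. Applying this monotonicity with $\psi = \chi'\upharpoonright U$ and $\psi' = \chi'$, each such $w$ is a $b$-vertex in $\chi'$ of the required color, so $u$ is a $b^*$-vertex in $\chi'$.

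The argument has no real obstacle. The one step needing care is the monotonicity observation just used, namely that $b$- and $b^*$-vertices of a restriction remain such in any extension that agrees on the restricted domain. With the definitions as stated in the paper (membership conditions on $\chi(N[\cdot])$ and on neighbors' colors), this is immediate. Note that minimality of $U$ is not even needed here; only $v \notin U$ is used.
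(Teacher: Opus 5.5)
Your proof is correct and matches the paper's argument essentially step for step: you pick a color $c$ missing from $\chi(N[v])$, observe that $\chi'\upharpoonright U = \chi\upharpoonright U$ because $v\notin U$, and note that the $b$-vertices of the restriction remain $b$-vertices in $\chi'$. Like the paper, you tacitly assume that $\chi$ itself is proper; the lemma does not state this, but it holds wherever the lemma is applied. You are also right that the minimality of $U$ plays no role.
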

\begin{proof}
Since $v$ is not a $b$-vertex, there is a color $c \in [k] \setminus \chi(N[v])$, and we define $\chi' = \chi[v \mapsto c]$.
Clearly, $\chi'$ is proper.
Let $w \in U$ be a $b$-vertex in $\chi \upharpoonright U$.
Since $v \notin U$, we have $\chi \upharpoonright U = \chi' \upharpoonright U$, which implies that $w$ is a $b$-vertex also in $\chi'$.
Hence, $u$ is a $b^*$-vertex also in $\chi'$, which concludes the proof.
\qed\end{proof}

Now we are ready to prove one of our main results.

\begin{theorem}\label{thm:girth7}
If $G$ is a graph of girth at least $7$, then $b^*(G) = m^*(G) + 1$ and $G$ is $b^*$-monotonic.
\end{theorem}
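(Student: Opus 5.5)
The plan is to prove the two inequalities $b^*(G) \le m^*(G)+1$ and $b^*(G) \ge m^*(G)+1$ separately, and then derive monotonicity from the formula.

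\emph{Upper bound (any graph).} Let $\chi$ be a $k$-$b^*$-coloring with $b^*$-vertex $u$. Then $u$ has, for each of the $k-1$ colors other than $\chi(u)$, a neighbor that is a $b$-vertex. These are $k-1$ distinct neighbors. Each of them sees all $k$ colors in its closed neighborhood, so it has degree at least $k-1$. Hence $m^*(u) \ge k-1$, and so $b^*(G) \le m^*(G)+1$.

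\emph{Monotonicity.} If $H_1 \seq_i H_2 \seq_i G$, then $\girth(H_2) \ge 7$. Moreover, $m^*$ cannot increase when passing to an induced subgraph, since degrees only drop. Given the formula, $b^*(H_1) = m^*(H_1)+1 \le m^*(H_2)+1 = b^*(H_2)$.

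\emph{Lower bound.} Set $k := m^*(G)+1$ and start from the partial $b^*$-coloring $\chi$ of Lemma~\ref{lem:g6-init}, with center $u$. Its domain $U$ forms a $\chi$-$b^*$-core, is contained in $N^2[u]$, and induces a tree. The key structural fact, and the only place where girth at least $7$ is used, is the following claim.
\begin{quote}
Every vertex $v \notin U$ has at most one neighbor in $U$.
\end{quote}
To see this, suppose $v$ has two neighbors in $U$. Each lies at distance at most $2$ from $u$ inside the tree on $U$. Joining them through $v$ therefore gives a cycle of length at most $2+2+2 = 6$, contradicting $\girth(G) \ge 7$. Small cases, such as $v \in N(u)$ or $v \in N(u_i)$, are checked the same way, since their neighbors in $U$ other than $u$ or $u_i$ would close a cycle of length at most $5$.

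The next step is to extend $\chi$ to a total proper coloring $\psi$ of $G$. On $\overline U$ we use colors from $[k]$ wherever possible and fresh colors $k+1, k+2, \ldots$ elsewhere. We then eliminate the extra colors one vertex at a time, never touching $U$, so that $u$ remains a $b^*$-vertex by the argument of Lemma~\ref{lem:recolor}.

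Consider a vertex $v \notin U$ with an extra color. If $v$ misses some color of $[k]$ in its neighborhood, we recolor $v$ to that color. Otherwise $v$ sees all of $[k]$. By the claim, at most one of these colors comes from $U$, so $v$ has neighbors $w \notin U$ carrying at least $k-1$ colors of $[k]$. Each such $w$ that is not a $b$-vertex with respect to $[k]$ can be recolored by Lemma~\ref{lem:recolor}, or moved to an extra color; we do this so as to free a color of $[k]$ for $v$.

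The step I expect to be the main obstacle is making this elimination terminate. The difficulty arises when all relevant neighbors $w$ are already $b$-vertices, or when recoloring them creates new conflicts. I would first try the following potential argument. Process the vertices of $\overline U$ in BFS order from $U$. Large girth keeps the neighborhoods of distinct vertices nearly disjoint, so a recoloring near $v$ never re-blocks an earlier vertex. Use a potential such as the number of vertices carrying an extra color, breaking ties by distance to $U$, and show that it strictly decreases. Should that fail, the fallback is to adapt the greedy argument of Campos, Lima and Silva for $b$-colorings of graphs of girth at least $7$~\cite{CamposLS15}. Their argument colors the rest of the graph from lists of size $k-1$ obtained by deleting the single forbidden color coming from $U$.
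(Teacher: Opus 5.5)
\paragraph{The gap: the lower bound is not proved.} Your upper bound $b^*(G)\le m^*(G)+1$ is correct. So is the deduction of monotonicity from the formula, and so is the claim that every vertex outside $U=\dom(\chi)$ has at most one neighbor in $U$. The gap is the lower bound $b^*(G)\ge m^*(G)+1$, which is the real content of the theorem, and your proposal does not prove it. You leave termination of the elimination open yourself. Nothing shows your potential decreases: moving a $b$-vertex to an extra color does not reduce the number of extra-colored vertices, and the distance tie-break is uncontrolled.

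Concretely, your toolkit has no move for the following configuration. A vertex $v\notin U$ carries an extra color, its unique $U$-neighbor $z$ has color $c$, and for every $d\in[k]\setminus\{c\}$ some neighbor of $v$ of color $d$ is a $b$-vertex. Then $z$ is frozen, Lemma~\ref{lem:recolor} does not apply to the other classes, and pushing such a $b$-vertex $w$ to an extra color recreates the same situation at $w$. Indeed, $w$ sees all of $[k]$ again once $v$ takes its old color. The hypothesis $m^*(G)=k-1$ does not forbid this configuration: the $b$-vertex neighbors have degree at least $k$, but $z$ may be a leaf of $U$ of degree at most $k-1$. More fundamentally, your extension phase never invokes $m^*(G)\le k-1$, yet it cannot work without it. For $C_7$ and $k=2$, the partial $2$-$b^*$-coloring of one edge satisfies everything you use, but it does not extend to a proper $2$-coloring.

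\paragraph{How the paper closes it.} The paper uses two ideas your plan lacks.
\begin{itemize}
\item It adds vertices one at a time and applies $m^*(G)=k-1$ locally. For an uncolored $v$, not every class $V_c=N(v)\cap\chi^{-1}(c)$ can contain a $b$-vertex, since those $b$-vertices would be $k$ neighbors of $v$ of degree at least $k$.
\item It does not insist on keeping $u$ as the $b^*$-vertex. Girth $7$ lets at most one class of $N(v)$ meet the current $b^*$-core, so the only hard case is when the unique $b$-vertex-free class $V_c$ meets the core. There the paper recolors the independent set $V_c$, core vertices included, and gives $v$ color $c$; $v$ then becomes the new $b^*$-vertex.
\end{itemize}
Your rule of never touching $U$ excludes exactly this move.

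\paragraph{Keeping $U$ fixed.} If you want to keep $U$ fixed, you need a global rather than a local argument. One route: call a vertex high if its degree is at least $k$. By $m^*(G)=k-1$, every vertex has at most $k-1$ high neighbors.
\begin{itemize}
\item Color the high vertices outside $U$ first, in order of decreasing distance to a vertex $w$ whose list $[k]\setminus\chi(N(w)\cap U)$ is larger than its number of high neighbors outside $U$.
\item Such a $w$ exists in every component. Equality forces $w$ to be adjacent to a low leaf of $U$, there are at most $(k-1)(k-2)^2$ such vertices, and a $(k-1)$-regular graph of girth at least $7$ has more vertices than that.
\item Finally, color the low vertices greedily; each has at most $k-1$ neighbors, so a color is always free.
\end{itemize}
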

\begin{proof}
Let $G$ be an $n$-vertex graph, let $k = m^*(G) + 1$, and let $i_0 = k^2 - 2k + 2$.
For each $i \in [i_0, n]$, we will construct a partial coloring satisfying the following invariant: $\chi_i$ is a $k$-$b^*$-coloring and $|\dom(\chi_i)| = i$. 
Then, the total coloring $\chi_n$ will certify that $b^*(G) = m^*(G) + 1$.
Let $\chi_{i_0}$ be the coloring provided by Lemma~\ref{lem:g6-init} and observe that it satisfies the invariant.

Suppose that $\chi_i$ is defined for some $i \in [i_0, n-1]$ and let $v \in \overline{\dom(\chi_i)}$.
For each $c \in [k]$, let $V_c = N(v) \cap \chi_i^{-1}(c)$.
First, suppose that for each $c \in [k]$, there is a $\chi_i$-$b$-vertex $w \in V_c$.
Observe that $w$ has degree at least $k$; it has $k-1$ neighbors colored by $\chi_i$ plus $v$.
Hence, $m^*(v) \ge k$, which is a contradiction with $m^*(G) = k-1$.
Hence, there is a color $c \in [k]$ such that $V_c$ contains no $\chi_i$-$b$-vertices; let us fix such a color $c$. Note that $V_c = \emptyset$ is possible.

Let $(u, U)$ be a $\chi_i$-$b^*$-core.
If $V_c \cap U = \emptyset$, then we successively apply Lemma~\ref{lem:recolor} to vertices in $V_c$ to obtain a proper coloring $\chi'$ such that $u$ is a $b^*$-vertex in $\chi'$ and $c \notin \chi'(N(v))$.
Note that after one application of the lemma, no other vertex in $V_c$ becomes a $b$-vertex because $V_c$ is an independent set in $G$, so the construction of $\chi'$ is valid.
Now the coloring $\chi_{i+1} := \chi' \cup \{(v, c)\}$ satisfies the invariant.
Hence, we may assume that $V_c \cap U \ne \emptyset$.

Suppose that there is a color $d \in [k] \setminus \{c\}$ such that $V_d$ contains no $\chi_i$-$b$-vertices.
Since $G[U]$ has diameter at most 4, $G$ has girth at least 7 and $V_c \cap U \ne \emptyset$, we have $V_d \cap U = \emptyset$.
Hence, we may, analogously to the previous paragraph, use Lemma~\ref{lem:recolor} to obtain a proper coloring $\chi'$ such that $u$ is a $b^*$-vertex in $\chi'$ and $d \notin \chi'(N(v))$, and the coloring $\chi_{i+1} = \chi' \cup \{(v, d)\}$ satisfies the invariant.

Finally, suppose that no such color $d$ exists, i.e., for each color $d \in [k] \setminus \{c\}$, there is a $\chi_i$-$b$-vertex $v_d \in V_d$.
Recall that by choice of $c$, $V_c$ contains no $\chi_i$-$b$-vertices.
Hence, for each $w \in V_c$, there is a color $c_w \in [k]$ such that $\chi_i[w \mapsto c_w]$ is proper.
Since $V_c$ is an independent set, the coloring $\chi' := \chi_i \setminus \{(w, c) \sep w \in V_c\} \cup \{(w, c_w) \sep w \in V_c\}$ is also proper.
Let us define $\chi_{i+1} = \chi' \cup \{(v, c)\}$, and observe that $\chi_{i+1}$ is proper.
Since $N(v)$ is an independent set in $G$, $v_d$ is a $b$-vertex colored with $d$ also in $\chi_{i+1}$, for every color $d \in [k] \setminus \{c\}$.
Hence, $v$ is a $b^*$-vertex in $\chi_{i+1}$, which means that $\chi_{i+1}$ satisfies the invariant.

We have proven that $b^*(G) = m^*(G) + 1$.
To show that $G$ is $b^*$-monotonic, it suffices to observe that $m^*(H_1) \le m^*(H_2)$ and $\girth(H_1) \le \girth(H_2)$ whenever $H_1$ and $H_2$ are two graphs satisfying $H_1 \seq_i H_2$.
\qed\end{proof}

An open question (see Question~\ref{question:girth}) is whether Theorem~\ref{thm:girth7} can be generalized to graphs of girth 6 or even 5 (for girth 4, it is false, see Footnote~\ref{fn:mono}).
For girth 6, the same strategy, i.e., showing that $b^*(G) = m^*(G) + 1$, might work.
However, for girth 5, we have a counterexample.

\begin{proposition}
There is a graph $G$ with girth 5 such that $b^*(G) < m^*(G) + 1$.
\end{proposition}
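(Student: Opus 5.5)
The natural candidate is the Petersen graph $P$. It is $3$-regular with girth $5$. Every vertex has three neighbors, each of degree $3$, so $m^*(u)=3$ for every $u$, and hence $m^*(P)=3$ and $m^*(P)+1=4$. Since every $b^*$-coloring is a $b$-coloring, we have $b^*(P)\le b(P)$. It therefore suffices to show that $P$ has no $b$-coloring with $4$ colors, i.e., $b(P)\le 3$.

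The cleanest route is to invoke the known fact that $b(P)=3$. This is exactly why the Petersen graph is excluded in Theorem~\ref{thm:bPeter} and Conjecture~\ref{conj:bd5}, and it is established in the literature on $b$-colorings of cubic graphs (e.g.\ the source of Theorem~\ref{thm:bPeter}). Combining the three facts gives $b^*(P)\le b(P)=3<4=m^*(P)+1$, which proves the statement.

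If a self-contained argument is preferred, I would argue by contradiction. Suppose $\chi$ is a $4$-$b$-coloring of $P$. Each $b$-vertex $x$ has degree $3$, so its three neighbors receive the three colors other than $\chi(x)$, each exactly once.

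Pick $b$-vertices $x_1,\dots,x_4$ with $\chi(x_i)=i$. By the Petersen graph's symmetry (it is distance-transitive, with diameter $2$ and no $4$-cycles), I would split into cases according to the subgraph induced by $\{x_1,\dots,x_4\}$, which is a forest on $4$ vertices since the girth is $5$. In each case I would propagate the forced colors on the neighborhoods. Because any two non-adjacent vertices of $P$ have exactly one common neighbor, the constraints on $N(x_i)$ overlap heavily, and every case should end with a vertex forced to have two colors or a monochromatic edge.

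This case analysis is the main obstacle. It is finite but fiddly, so I would first try to shorten it by exploiting the unique-common-neighbor property to pin down the colors on $N(x_1)\cup N(x_2)$ at once, and only fall back to a computer-free exhaustive check if that does not suffice.
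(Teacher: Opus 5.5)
Your proposal is correct and matches the paper's proof: the paper also takes the Petersen graph, notes $m^*(P)=3$, and cites Blidia, Maffray, and Zemir (the source of Theorem~\ref{thm:bPeter}) for the fact that $P$ has no $4$-$b$-coloring, so $b^*(P)\le b(P)\le 3<4=m^*(P)+1$. The fallback case analysis you sketch is not needed once that result is cited.
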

\begin{proof}
Let $G$ be the well-known Petersen graph.
By \cite[Theorem 1]{BLIDIA20091787}, $G$ admits no $4$-$b$-coloring, and hence also no $4$-$b^*$-coloring.
It can be easily seen that $m^*(G) = 3$.
Hence, $G$ is the desired graph.
\qed\end{proof}

\subsection{Regular graphs}

In the remainder of Section~\ref{sec:girth}, we focus on regular graphs.
First, we observe that in this setting, a partial $b^*$-coloring can be greedily extended into a total $b^*$-coloring.

\begin{observation}\label{obs:finish-regular}
If $G$ is a $d$-regular graph and there is a partial $b^*$-coloring $\chi$ with $d+1$ colors, then $b^*(G) = d +1$.
\end{observation}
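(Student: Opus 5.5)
We extend $\chi$ greedily, one uncolored vertex at a time, until the coloring is total. The invariant is that the current partial coloring $\chi'$ is proper, uses colors from $[d+1]$, contains $\chi$, and has the same $b^*$-vertex $u$ as $\chi$. First fix a $b^*$-vertex $u$ of $\chi$. For each color $c \ne \chi(u)$, fix a $b$-vertex $w_c \in N(u)$ in $\chi$ with $\chi(w_c) = c$. The vertices $u$ and $w_c$ witness that $\chi$ is a $b^*$-coloring.

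The key step is that adding colors never destroys this witness. Whether $w$ is a $b$-vertex depends only on the colors in $N[w]$. As long as we only assign colors to vertices that are currently uncolored and never recolor anything, a vertex $w$ with $\chi(N[w]) = [d+1]$ keeps $\chi'(N[w]) \supseteq \chi(N[w]) = [d+1]$, so $w$ remains a $b$-vertex. Hence each $w_c$ stays a $b$-vertex of color $c$ adjacent to $u$, and $u$ stays a $b^*$-vertex.

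The extension step goes as follows. Take any uncolored vertex $v$. Since $G$ is $d$-regular, $|N(v)| = d$, so at most $d$ colors appear on $N(v)$ under the current coloring. Some color in $[d+1]$ is therefore missing from $N(v)$, and we assign it to $v$. This keeps the coloring proper and uses no new colors. Iterating over all uncolored vertices yields a total proper coloring with colors from $[d+1]$ in which $u$ is a $b^*$-vertex.

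It remains to see that this is a $(d+1)$-$b^*$-coloring, i.e., that all $d+1$ colors are actually used. Every color in $[d+1]$ appears on $u$ or on some $w_c$, so they are. This gives $b^*(G) \ge d+1$. The reverse inequality $b^*(G) \le b(G) \le d+1$ holds for every $d$-regular graph, as noted earlier, since a $b$-vertex needs $k-1$ distinct colors in its open neighborhood and so $k-1 \le d$. There is no real obstacle here. The only point needing care is the monotonicity of the $b$-vertex property under extension, which is exactly why we never recolor, in contrast to Lemma~\ref{lem:recolor}.
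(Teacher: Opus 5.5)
Your proof is correct and is the paper's argument: the upper bound $b^*(G) \le d+1$, plus a greedy extension of $\chi$ to a total proper $(d+1)$-coloring. You just supply the details the paper leaves implicit, namely that an uncolored vertex always has a free color since $|N(v)| = d < d+1$, and that $u$ stays a $b^*$-vertex because nothing is recolored.
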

\begin{proof}
Clearly, we have $b^*(G) \le d+1$.
Since we can greedily extend $\chi$ into a proper $(d+1)$-$b^*$-coloring of $G$, we also have $b^*(G) \ge d+1$.
\qed\end{proof}

Combining Lemma~\ref{lem:g6-init} and Observation~\ref{obs:finish-regular}, we immediately obtain the $b^*$ analogue of Theorem~\ref{thm:bg6}.

\begin{proposition}\label{prop:bsrg6}
If $G$ is a $d$-regular graph of girth at least 6, then $b^*(G) = m^*(G) +1 = d + 1$.
\end{proposition}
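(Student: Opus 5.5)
We first compute $m^*(G)$ for a $d$-regular graph $G$, then combine Lemma~\ref{lem:g6-init} with Observation~\ref{obs:finish-regular}, and finish with the trivial upper bound.

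\emph{Step 1: $m^*(G) = d$.} Every vertex $u$ has exactly $d$ neighbors, each of degree $d$, so $m^*(u) \ge d$. Conversely, $u$ has only $d$ neighbors, so it cannot have $d+1$ neighbors of any degree, and hence $m^*(u) \le d$. (If $d = 0$, everything is trivial.) Thus $m^*(G) = d$.

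\emph{Step 2: the lower bound.} Since $\girth(G) \ge 6$, Lemma~\ref{lem:g6-init} gives a partial $b^*$-coloring $\chi$ of $G$ with $k = m^*(G)+1 = d+1$ colors. Observation~\ref{obs:finish-regular} then yields $b^*(G) = d+1$. The extension step is greedy: each uncolored vertex has only $d$ neighbors, so at least one of the $d+1$ colors is free for it. Recoloring nothing already colored, no existing $b$-vertex or $b^*$-vertex loses its status.

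\emph{Step 3: the upper bound.} We have $b^*(G) \le b(G) \le d+1$, because a $b$-vertex must see all $k$ colors in its closed neighborhood, which has size $d+1$. Combining the three steps gives $b^*(G) = m^*(G)+1 = d+1$.

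There is no real obstacle here. The only point needing care is the identity $m^*(G) = d$, which is immediate from regularity.
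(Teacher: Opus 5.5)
Your proposal is correct and follows the paper's proof: you compute $m^*(G)=d$ from regularity, take the partial $(d+1)$-color $b^*$-coloring from Lemma~\ref{lem:g6-init}, and extend it greedily via Observation~\ref{obs:finish-regular}. Your Step~3 is redundant, since the observation already contains the upper bound $b^*(G)\le d+1$, but it does no harm.
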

\begin{proof}
Let $\chi$ be the coloring provided by Lemma~\ref{lem:g6-init}.
Since $m^*(G) = d$, it suffices to apply Observation~\ref{obs:finish-regular} to $\chi$.
\qed\end{proof}

\newcounter{savedtheorem}
\setcounter{savedtheorem}{\value{theorem}}
\setcounter{theorem}{3}

Now we prove a variant of Theorem~\ref{thm:bPeter} for $b^*$-colorings.

\begin{theorem}
\thmPG
\end{theorem}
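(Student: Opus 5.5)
By Observation~\ref{obs:finish-regular}, it suffices to find a partial $(d+1)$-$b^*$-coloring of $G$. For $d \le 3$ with girth at least $5$, I would split into the cases $d \in \{0,1,2\}$ and $d=3$.

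For $d=0$, $G=K_1$ and the single vertex is trivially a $b^*$-vertex of a $1$-coloring. For $d=1$, $G=K_2$; coloring its ends $1,2$ makes both $b$-vertices, and each is adjacent to the other, so each is a $b^*$-vertex. For $d=2$, $G$ is a cycle $C_n$ with $n \ge 5$. Color five consecutive vertices $v_1,\dots,v_5$ by $2,1,3,2,1$ (with $n\ge 5$ the two ends are the only ones that may be adjacent, and $2\ne 1$ keeps this proper). Then $v_2$ and $v_4$ are $b$-vertices of colors $1$ and $2$, and $v_3$ (color $3$) is adjacent to both, so $v_3$ is a $b^*$-vertex.

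The main case is $d=3$. If $\girth(G)\ge 6$, Proposition~\ref{prop:bsrg6} applies. Otherwise $G$ has a $5$-cycle, and $G$ is a connected cubic graph of girth $5$ other than the Petersen graph. Fix a vertex $u$ with neighbors $u_1,u_2,u_3$, and let $u_i$ have the further neighbors $x_i,y_i$. Girth $5$ makes these six vertices distinct and pairwise nonadjacent within each pair $\{x_i,y_i\}$; the only possible edges among them join $\{x_i,y_i\}$ to $\{x_j,y_j\}$ for $i\ne j$. The plan is to color $u\mapsto 4$ and $u_i\mapsto i$, and to give $\{x_i,y_i\}$ the two colors of $[3]\setminus\{i\}$. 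This makes every $u_i$ a $b$-vertex and hence $u$ a $b^*$-vertex; the only constraint is that the edges between second-neighborhood pairs are properly colored. The graph $H$ on the six vertices $x_i,y_i$ formed by these cross edges has maximum degree $2$ (each vertex has only two neighbors outside $u_i$), and each pair $\{x_i,y_i\}$ has two orientations, so there are $8$ assignments. For each $i$ write $\{a_i,b_i\}=[3]\setminus\{i\}$. A cross edge between pair $i$ and pair $j$ is bad only if its endpoints get the common color $k\notin\{i,j\}$, and that color can appear in just one of the four possible $x_i/y_i$–$x_j/y_j$ positions per orientation pair. I would therefore show that some choice of orientations avoids all conflicts, using the freedom to choose $u$ itself.

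This $d=3$ step is the main obstacle. Monochromatic conflicts can be forced exactly when $H$ is ``too full,'' e.g.\ when it is a $6$-cycle in a specific configuration, which is what happens in the Petersen graph (vertex-transitive, with $N^2$ structure forcing failure at every $u$). To handle it I would do the following:
\begin{compactenum}
\item Encode each orientation as a Boolean variable and each cross edge as a forbidden combination of two variables. This is a 2-SAT instance on $3$ variables with at most $6$ clauses, which one can analyze exhaustively up to symmetry.
\item Show that a conflict at $u$ forces $|E(H)|=6$ with a rigid structure, namely $G[N^2[u]]$ being the Petersen graph minus the outer vertices' remaining structure.
\item If every vertex $u$ is bad, propagate this rigidity through connectivity to conclude that $G$ has $10$ vertices and is the Petersen graph, a contradiction.
\end{compactenum}
If direct propagation is messy, I would instead try choosing $u$ to be a vertex on a $5$-cycle with some neighbor not lying on many $5$-cycles. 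Alternatively, I would let some $u_i$ use a third neighbor's color freedom, i.e.\ allow $x_i$ or $y_i$ to be replaced by coloring other vertices so that $u_i$ still sees all colors.
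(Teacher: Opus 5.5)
Your reduction for $d=3$ is the same as the paper's: color $u\mapsto 4$ and $u_i\mapsto i$, and give $\{x_i,y_i\}$ the colors $[3]\setminus\{i\}$, so that only the cross edges $H$ among the six second neighbors matter. Your cases $d\le 2$ are fine. However, the decisive step is only announced, not proved. You say you ``would show'' that some orientation avoids all conflicts and that a conflict forces $|E(H)|=6$, but you give no argument for either. Your item~3 (assume \emph{every} $u$ is bad and propagate rigidity) and the appeal to ``the freedom to choose $u$'' suggest you have not seen why the bad case closes up at once. As written, the $d=3$ case, which is the real content of the theorem, is a plan rather than a proof.

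Three concrete facts are missing.

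(a) Girth $\ge 5$ gives more than maximum degree $2$. A vertex of $\{x_i,y_i\}$ has at most one neighbor in $\{x_j,y_j\}$, since two would close a $4$-cycle through $u_j$. So between any two pairs $H$ is a matching, and $|E(H)|\le 6$.

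(b) If $|E(H)|=6$, every vertex of $N^2[u]$ has all three of its neighbors inside $N^2[u]$. By connectivity $G=G[N^2[u]]$, a $10$-vertex cubic graph of girth $5$, i.e., the Petersen graph. Hence for \emph{any} $u$ in your $G$ we have $|E(H)|\le 5$, and neither a choice of $u$ nor any propagation is needed.

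(c) Every configuration with at most $5$ cross edges has a good orientation. The paper shows that the $5$-edge configuration is unique up to symmetry and exhibits an explicit coloring for it. It then handles fewer edges by adding edges, since more edges only make the coloring harder.

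In your 2-SAT language, (c) is also short. A perfect matching between pairs $i$ and $j$ determines the orientation of pair $j$ from that of pair $i$. A single edge forbids exactly one of the four joint orientations of its two pairs. With at most $5$ edges, at most two of the three index pairs $\{i,j\}$ carry a perfect matching. If two do, the two surviving assignments differ in every variable, so the remaining single edge kills at most one of them. If at most one does, a direct count of forbidden assignments still leaves at least two.
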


\setcounter{theorem}{\value{savedtheorem}}

\begin{proof}
If $d = 1$, then $G$ is a clique on two vertices, and the unique proper $2$-coloring of $G$ is a $b^*$-coloring.
If $d = 2$, then $G$ is a cycle of length at least 5.
Let $u_1u_2u_3u_4u_5$ be a path in $G$, and let $\chi = \{(u_1, 1), (u_2, 2), (u_3, 3), (u_4, 1), (u_5, 2)\}$ be a partial coloring of $G$.
Observe that $u_3$ is a $\chi$-$b^*$-vertex and that $\chi$ is proper (since $\chi(u_1) \ne \chi(u_5)$).
Hence, $b^*(G) = d+1$ by Observation~\ref{obs:finish-regular}.

Now suppose that $d = 3$.
Let $u \in V(G)$ be an arbitrary vertex, let $N(u) = \{u_1, u_2, u_3\}$, and for $i \in [3]$, let $\chi(u_i) = i$, $N(u_i) = \{u, v_i, w_i\}$, and $U_i = N(u_i) \setminus \{u\}$.
Moreover, let $\chi(u) = 4$, and let $U = U_1 \cup U_2 \cup U_3$.
Since $G$ has girth at least 5, a vertex $v \in U_i$, $i \in [3]$, has at most one neighbor in $U_j$, $j \in [3]$, and since $G$ is not the Petersen graph, $G[U]$ has at most 5 edges.
We proceed by induction on $5 - |E(G[U])|$.

First, suppose that $G[U]$ has exactly 5 edges.
If $v_i$ and $w_i$ had degree 1 in $G[U]$ for some $i \in [3]$, then there would have to be three edges in $G[U \setminus U_i]$, which is impossible.
Hence, without loss of generality, let $v_1$ and $v_2$ be the two vertices of degree 1 in $G[U]$ (and all other vertices of $U$ have degree 2 in $G[U]$).
Observe that $w_1$ (resp. $w_2$) has one neighbor in $U_2$ (resp. $U_1$) and one in $U_3$, and $v_3$ and $w_3$ have each one neighbor in $U_1$ and one in $U_2$.
Now it is routine to verify that $v_1$ must have a neighbor in $U_3$, say $v_3$, and that $E(G[U]) = \{v_1v_3, w_1w_3, w_2v_3, v_2w_3, w_1w_2\}$ is the only possibility which does not produce a cycle of length at most 4, see Figure~\ref{fig:notPetersen}.
Hence, we may define $\chi = \{(v_1, 3), (w_1, 2), (v_2, 3), (w_2, 1), (v_3, 2), (w_3, 1)\}$.
Clearly, $\chi$ is a partial $4$-$b^*$-coloring, and we conclude by Observation~\ref{obs:finish-regular}.

\begin{figure}[t]
\centering
\scalebox{0.75}{
\begin{tikzpicture}[
    scale=0.8,
    every node/.style={font=\large},
    dot/.style={circle, fill=black, inner sep=2.8pt, outer sep=0pt}
]

  \node[dot, label=above left:{$u$}] (w) at (0, 4) {};

  \node[dot, label=above left:{$u_1$}]  (w1) at (-3, 2) {};
  \node[dot, label=above left:{$u_2$}] (w2) at (0, 2) {};
  \node[dot, label=above right:{$u_3$}] (w3) at (3, 2) {};

  \node[dot, label=above left:{$v_1$}]  (v1) at (-5, 0) {};
  \node[dot, label=above left:{$w_1$}]  (u1) at (-3, 0) {};
  \node[dot, label=above left:{$v_2$}]  (u2) at (-1, 0) {};
  \node[dot, label=above right:{$w_2$}] (v2) at (1, 0) {};
  \node[dot, label=above right:{$v_3$}] (u3) at (3, 0) {};
  \node[dot, label=above right:{$w_3$}] (v3) at (5, 0) {};

  \draw (w) -- (w1);
  \draw (w) -- (w2);
  \draw (w) -- (w3);

  \draw (w1) -- (v1);
  \draw (w1) -- (u1);
  \draw (w2) -- (u2);
  \draw (w2) -- (v2);
  \draw (w3) -- (u3);
  \draw (w3) -- (v3);

  \draw (u1) to[bend right=45] (v2);
  \draw (v2) to[bend right=45] (u3);

  \draw (u2) to[bend right=55] (v3);
  \draw (u1) to[bend right=65] (v3);
  \draw (v1) to[bend right=75] (u3);

\end{tikzpicture}
}
\vspace*{-10pt}
\caption{An illustration of the proof of Theorem~\ref{thm:bsPeter}.}
\label{fig:notPetersen}
\end{figure}

Now suppose that $|E(G[U])| = m < 5$.
Intuitively, adding an edge can make coloring $G[U]$ only harder, so it suffices to show that an edge can always be added to $G[U]$ without violating the girth constraint.
Without loss of generality, assume that $v_1$ has degree at most 1 in $G[U]$ and that it has no neighbor in $U_2$.
Since at most one vertex in $U_2$ is adjacent to $w_1$, we may add an edge between $v_1$ and $U_2$, which concludes the proof.
\qed\end{proof}

\setcounter{savedtheorem}{\value{theorem}}
\setcounter{theorem}{6}

Finally, we prove a variant of Theorem~\ref{thm:DFPRZ} for $b^*$-colorings.
Recall that Theorem~\ref{thm:d2} implies a strengthening of Theorem~\ref{thm:DFPRZ}, see Corollary~\ref{cor:d2}.

\begin{theorem}
\thmd
\end{theorem}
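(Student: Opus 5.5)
By Observation~\ref{obs:finish-regular}, it suffices to construct a proper partial $(d+1)$-coloring $\chi$ of $G[N^2[u]]$ in which $u$ is a $b^*$-vertex. We fix the given vertex $u$ and set $N(u)=\{u_1,\dots,u_d\}$, $\chi(u)=d+1$ and $\chi(u_i)=i$. Let $U_i = N(u_i)\setminus\{u\}$, so $|U_i| = d-1$. Because $\girth(G)\ge 5$, the sets $U_i$ are pairwise disjoint, each $U_i$ is independent, and every vertex of $U_i$ has at most one neighbour in $U_j$ for $j\ne i$. Thus $H := G[U_1\cup\dots\cup U_d]$ is a graph whose edges join distinct parts, with at most a matching between any two parts.

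The task is to choose, for every $i$, a bijection $\phi_i\colon U_i\to[d]\setminus\{i\}$ such that $\phi_i(x)\ne\phi_j(y)$ for every edge $xy$ of $H$. Then every $u_i$ is a $b$-vertex, so $u$ is a $b^*$-vertex. The 6-cycles through $u$ in $G[N^2[u]]$ correspond exactly to the edges of $H$ (a cycle $u\,u_i\,x\,y\,u_j\,u$ for $xy\in E(H)$, $x\in U_i$, $y\in U_j$), since girth $5$ excludes other 6-cycles through $u$. The hypothesis therefore says $|E(H)|\le\lceil d/2\rceil$.

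I would colour the parts greedily, one at a time, using Hall's theorem (Theorem~\ref{thm:Hall}). When colouring $U_i$, form the bipartite graph between $U_i$ and the colours $[d]\setminus\{i\}$. In it, $x$ is joined to every colour except those already used on its coloured $H$-neighbours. A vertex $x$ loses at most $\deg_H(x)$ colours. Since the total number of edges is at most $\lceil d/2\rceil$, most vertices lose nothing. The plan is to check Hall's condition: a set $W\subseteq U_i$ with $|N(W)|<|W|$ must consist of vertices all missing a common large set of colours. That requires about $d-|W|$ edges leaving $W$ for each vertex, or, for $|W|$ large, at least one forbidden colour shared by everyone in $W$, i.e.\ $|W|$ edges. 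Counting edges should give a contradiction with $|E(H)|\le\lceil d/2\rceil$ for $d\ge4$.

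I expect the main obstacle to be the tight cases, where one part carries many edges or where a single colour is forbidden for all of $U_i$. The case $|W|=d-1$ (all of $U_i$) needs only $d-1$ colours, so it is harmful only if some colour is forbidden at every vertex of $U_i$. That costs $d-1>\lceil d/2\rceil$ edges for $d\ge4$, which is impossible. The intermediate $W$ need a more careful count.

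To avoid edge-case trouble, I would order the parts so that parts incident to the most edges are coloured first, and choose $\phi_i$ so that it does not create a forced conflict later. If plain greedy fails, the fallback is to pick one endpoint of each $H$-edge and recolour by swapping two colours inside its part. Each part has $d-1\ge3$ vertices, and each edge blocks only a few swaps, so a valid swap should exist by counting against $\lceil d/2\rceil$. Connectivity of $G$ is not needed beyond the use of Observation~\ref{obs:finish-regular}.
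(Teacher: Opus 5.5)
Your plan to colour $U_1,\dots,U_d$ part by part with Hall's theorem is the right skeleton, and it is the paper's. The argument breaks where you translate the hypothesis.

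\textbf{The hypothesis does not bound $|E(H)|$.} The cycle $u\,u_i\,x\,y\,u_j\,u$ has only five edges, so $|E(H)|$ counts the $5$-cycles through $u$, and the hypothesis says nothing about those. A $6$-cycle through $u$ in $G[N^2[u]]$ has the form $u\,u_a\,p\,q\,r\,u_b\,u$, where $p\,q\,r$ is a path in $H$. So the hypothesis actually reads $\sum_{q}\binom{\deg_H(q)}{2}\le\lceil d/2\rceil$. It limits the vertices with two or more $H$-neighbours, not the edges of $H$. If every vertex of $H$ has $H$-degree at most one, there is no $6$-cycle through $u$ at all. Yet every vertex of every $U_i$ may then have an $H$-neighbour and lose a colour.

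Hence your premise $|E(H)|\le\lceil d/2\rceil$ is false. With it go the claim that ``most vertices lose nothing'', your edge count for $W=U_i$, and the swap fallback. Even granting the premise, the Hall check for intermediate $W$, which is the real content, is only announced and never carried out.

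\textbf{What is missing} is a way to use the correct hypothesis. The paper does this in two regimes.

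First, order $u_1,\dots,u_d$ so that whenever $U_i$ contains a vertex with two $H$-neighbours, so does every $U_j$ with $j<i$. The paper sorts the degree sequences in $G[N^2(u)]$ lexicographically.

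For $i\le\lceil d/2\rceil$ the hypothesis is not needed. A vertex of $U_i$ has at most one neighbour in each earlier part, so it loses at most $i-1$ colours. A colour $c$ appears at most once in each earlier part, so it is forbidden on at most $i-1$ vertices of $U_i$. Sets with $|W|\le d-i$ are then handled by vertex degrees. For $|W|\ge d-i+1>i-1$, no colour is forbidden on all of $W$, so $W$ sees all $d-1$ colours.

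For $i>\lceil d/2\rceil$, suppose some vertex of $U_i$ had two $H$-neighbours. By the ordering, every $U_j$ with $j\le i$ then contains such a vertex $v_j$. Each $v_j$ is the middle vertex of a $6$-cycle through $u$, and these cycles are pairwise distinct. That gives $i>\lceil d/2\rceil$ of them, a contradiction. So every vertex of $U_i$ loses at most one colour, and Hall's condition holds for $|W|\le d-2$. For $W=U_i$, each colour $c$ is forbidden at most $|[i-1]\setminus\{c\}|\le d-2$ times, because $[i-1]\setminus\{c\}\subseteq[d]\setminus\{i,c\}$.
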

\begin{proof}
Let $G$, $d$, and $u$ be as in the statement.
Let $N(u) = \{u_1, \ldots, u_d\}$ and for each $i \in [d]$, let $N(u_i) = \{u, u_i^1, \ldots, u_i^{d-1}\}$, $U_i = N(u_i) \setminus \{u\}$, and $\Delta_i = (\delta_i^1, \ldots, \delta_i^{d-1})$, where $\delta_i^j$ is the degree of $u_i^j$ in $G[N^2(u)]$, for every $j \in [d-1]$.
We may assume that $\Delta_i$ is non-increasing for every $i \in [d]$, and that $\Delta_i$ is (non-strictly) lexicographically greater than $\Delta_j$ for all $1 \le i \le j \le d$, i.e., informally, neighbors of $u_1$ have the largest degrees.
Let us define $\chi_0(u_i) = i$ for $i \in [d]$, and $\chi_0(u) = d+1$.
For each $i \in [d]$, we will define a coloring $\chi_i$ satisfying the following invariant: $\chi_0 \seq \chi_i$ and for every $j \in [i]$, $u_j$ is a $\chi_i$-$b$-vertex.
Trivially, this invariant is satisfied also by $\chi_0$.

Let $i \in [d]$ and suppose that $\chi_{i-1}$ satisfies the invariant.
Let $H$ be a bipartite graph with parts $U_i$ and $C_i := [d] \setminus \{i\}$ such that $vc \in E(H)$ for $v \in U_i$ and $c \in C_i$ if and only if $c \notin \chi_{i-1}(N(v))$.
Observe that given a perfect matching $\mu$ of $H$, we could define $\chi_i = \chi_{i-1} \cup \{(v, c) \sep v$ is matched with $c$ in $\mu\}$.
Hence, it suffices to find $\mu$ using Hall's theorem, see Theorem~\ref{thm:Hall}.
Let $W \seq U_i$; it suffices to verify that $|N_H(W)| \ge |W|$.
Since $G$ has girth at least $5$, each vertex in $U_j$ for $j \in [i-1]$ has at most one neighbor in $U_i$, which implies that each color $c \in C_i$ is forbidden for at most $i-1$ vertices, i.e., $c$ has degree at least $d-1-(i-1) = d-i$ in $H$.
Analogously, a vertex $v \in U_i$ has at most $i-1$ forbidden colors, i.e., $v$ has degree at least $d-i$ in $H$.
Hence, Hall's condition is satisfied if $|W| \le d-i$.

Suppose that $|W| \ge d-i+1$. First, suppose that $i \le \lceil \frac{d}{2} \rceil$.
Observe that \[d > d-1 \ge 2\cdot  \left\lceil \frac{d}{2} \right\rceil - 2 \ge 2\cdot i -2,\] which implies $d-i+1 > i-1$.
Hence, no color can be forbidden for all vertices in $W$, which implies $|N_H(W)| = d-1 \ge |W|$ as desired.

Second, suppose that $i > \lceil \frac{d}{2} \rceil$.
Suppose for contradiction that some vertex $v_i \in U_i$ has degree at least 3 in $G[N^2(u)]$.
Since we ordered vertices from $N(u)$ based on the degrees of their neighbors, we know that there is a vertex $v_j \in U_j$ of degree at least 3 in $G[N^2(u)]$, for every $j \in [i]$.
Observe that for every $j \in [i]$, there is a cycle $C^j$ in $G[N^2[u]]$ such that $V(C^j) = \{v_j, x, u_k, u, u_\ell, y\}$, where $k, \ell  \in [d] \setminus \{j\}$, $k \ne \ell$, $x\in N(v_j) \cap U_k$ and $y \in N(v_j) \cap U_\ell$.
It is easy to see that $C^j \ne C^k$ for $1 \le j < k \le i$, which is a contradiction since there are at most $\lceil \frac{d}{2} \rceil < i$ such cycles by choice of $u$.

Hence, every vertex $v \in U_i$ has degree at most 2 in $G[N^2(u)]$.
Since one neighbor of $v$ in $G[N^2(u)]$ is $u_i$, we know that $v$ has degree at least $d-2$ in $H$.
Hence, Hall's condition is fulfilled whenever $|W| \le d-2$, so it suffices to consider the case $W = U_i$.
Let $c \in C_i$ be a color. Since $c \notin \chi_{i-1}(U_c)$, we know that $c$ is forbidden for at most $d-2$ vertices of $U_i$, i.e., $c$ has positive degree in $H$.
Hence, $N_H(W) = C_i$ as desired.
We have verified Hall's condition in all cases, so $\mu$ exists and $\chi_i$ can be defined.

To conclude, observe that $\chi_d$ is a partial $b^*$-coloring, which means that it suffices to apply Observation~\ref{obs:finish-regular} to obtain the desired $b^*$-coloring.
\qed\end{proof}

\setcounter{theorem}{\value{savedtheorem}}

\section{Parameterized complexity}\label{sec:par-complex}

In this section, we study the parameterized complexity of \bscoloring.

First, we recall the following relation between $b$-colorings and $b^*$-colorings. If $G$ is a graph, then $G \lor K_1$ denotes the graph obtained from $G$ by adding a universal vertex, i.e., a new vertex connected to all original vertices of $G$.

\begin{proposition}[{\cite[Proposition 4]{ZAKER2025370}}]\label{prop:b-to-bs}
For any graph $G$, there is a $b$-coloring of $G$ with $k$ colors if and only if there is a $b^*$-coloring of $G \lor K_1$ with $k+1$ colors.
In particular, $b(G) = b^*(G \lor K_1) - 1$.
\end{proposition}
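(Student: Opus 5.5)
We prove both directions of Proposition~\ref{prop:b-to-bs} by explicit transformations between colorings. Throughout, let $G' = G \lor K_1$ with universal vertex $z$.

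\emph{From $G$ to $G'$.} Let $\chi$ be a $k$-$b$-coloring of $G$. Define $\chi' = \chi \cup \{(z, k+1)\}$. This coloring is proper, since $z$ is the only vertex of color $k+1$.

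Now take any $\chi$-$b$-vertex $v$ of $G$. In $G'$ it also sees $z$, so $\chi'(N_{G'}[v]) = [k+1]$ and $v$ is a $\chi'$-$b$-vertex. The vertex $z$ is adjacent to every vertex and hence sees all colors, so it is a $\chi'$-$b$-vertex as well. Since $\chi$ has a $b$-vertex of every color $c \in [k]$, all of them adjacent to $z$, the vertex $z$ is a $b^*$-vertex. Thus $\chi'$ is a $(k+1)$-$b^*$-coloring of $G'$.

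\emph{From $G'$ to $G$.} Let $\psi$ be a $(k+1)$-$b^*$-coloring of $G'$. Because $z$ is universal, its color is used by no other vertex; we may rename colors so that $\psi(z) = k+1$. Then $\chi := \psi \upharpoonright V(G)$ is a proper coloring of $G$ using exactly the colors $[k]$.

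Let $x$ be a $\psi$-$b^*$-vertex. For each $c \in [k]\setminus\{\psi(x)\}$, the vertex $x$ has a neighbor that is a $\psi$-$b$-vertex of color $c$. We still need a $b$-vertex of color $\psi(x)$ itself whenever $x \neq z$; for this we use that every $b^*$-vertex is a $b$-vertex. So for every $c \in [k]$ there is a $\psi$-$b$-vertex $v_c \neq z$ with $\psi(v_c) = c$.

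It remains to check that $v_c$ is a $b$-vertex in $\chi$. We have $\psi(N_{G'}[v_c]) = [k+1]$, and the only vertex of color $k+1$ is $z$. Removing $z$ therefore leaves $\chi(N_G[v_c]) = [k]$. Hence $\chi$ is a $k$-$b$-coloring of $G$.

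\emph{Conclusion.} Together the two directions give the equivalence. Taking the maximum $k$ on each side yields $b(G) = b^*(G\lor K_1) - 1$; this uses that $b^*(G \lor K_1) \ge 2$ whenever $G$ is nonempty, so the maximizing coloring of $G'$ has at least two colors and corresponds to some $k \ge 1$. The only point needing care is producing the $b$-vertex of color $\psi(x)$ in the reverse direction, which the observation that $b^*$-vertices are $b$-vertices handles; everything else is routine.
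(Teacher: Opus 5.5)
Your proof is correct and follows the same idea the paper indicates (it defers the details to Zaker's Proposition 4): the universal vertex takes the new color $k+1$, $b$-vertices transfer in both directions, and in the reverse direction a $b^*$-vertex, being itself a $b$-vertex, supplies the $b$-vertex of its own color. Your side remark that $b^*(G \lor K_1) \ge 2$ is stated without proof but is not needed: your equivalence argument works for every $k \ge 0$, so the two maxima correspond directly.
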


We remark that \cite[Proposition 4]{ZAKER2025370} states only that $b(G) = b^*(G \lor K_1) - 1$ but the same proof works also for our more general statement.
Informally, the new universal vertex receives color $k+1$. 
Using Proposition~\ref{prop:b-to-bs}, we show the following hardness results.

\begin{proposition}
The \bscoloring problem, with input $(G, k)$, is:
\begin{enumerate}
\item \Wone-hard parameterized by $k$ or by the tree-depth (td) of $G$.
\item \XNLP-hard parameterized by the path-width (pw) of $G$.
\item \paraNP-hard parameterized by the mim-width (mw) of $G$ or by the twin-width (tww) of $G$.
\end{enumerate}
\end{proposition}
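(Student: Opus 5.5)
The plan is to reduce from the corresponding hardness results for \bcoloring via Proposition~\ref{prop:b-to-bs}. Given an instance $(G,k)$ of \bcoloring, we output the instance $(G \lor K_1, k+1)$ of \bscoloring. By Proposition~\ref{prop:b-to-bs}, $G$ has a $b$-coloring with exactly $k$ colors if and only if $G \lor K_1$ has a $b^*$-coloring with exactly $k+1$ colors. The construction clearly runs in polynomial time and in logarithmic space: we copy $G$ and add one vertex adjacent to everything. This matters for \XNLP, which is closed under parameterized logspace reductions. It then remains to check that each parameter of the new instance is bounded by a computable function of the same parameter of the old instance.

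For the parameter $k$, the new parameter is $k+1$. \Wone-hardness of \bcoloring parameterized by $k$ is known~\cite{DBLP:journals/jcss/PanolanPS17}, so the claim follows.

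For the structural parameters, we use standard facts about adding a universal vertex $x$.
\begin{compactenum}
\item \emph{Tree-depth.} Rooting an elimination forest of $G$ below $x$ gives $\td(G \lor K_1) \le \td(G)+1$.
\item \emph{Path-width.} Adding $x$ to every bag of a path decomposition gives $\pw(G\lor K_1)\le \pw(G)+1$.
\item \emph{Mim-width.} Adding a single vertex increases the mim-width by at most $1$, since any induced matching uses at most one edge at $x$.
\item \emph{Twin-width.} Adding a universal vertex changes twin-width by at most a constant. One way to see this is to keep $x$ as the last vertex of the contraction sequence; $x$ is adjacent to every part, so it never creates red edges. Alternatively, we can cite the known fact that adding an apex-like universal vertex preserves bounded twin-width.
\end{compactenum}
Combining these bounds with the hardness of \bcoloring for tree-depth~\cite{DBLP:conf/mfcs/Balaban26}, path-width (\XNLP)~\cite{JaffkeLS23}, and mim-width and twin-width (\paraNP)~\cite{JaffkeLS23} yields all three items. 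For \paraNP-hardness, note that those results show \bcoloring is \NP-hard on graphs of constant mim-width or twin-width. Our reduction maps such graphs to graphs of constant mim-width or twin-width respectively, so \bscoloring is \NP-hard for a constant parameter value.

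The main obstacle is the twin-width bound. In a contraction sequence of $G\lor K_1$ in which $x$ is never contracted, $x$ stays homogeneous to every part, because $x$ is complete to all of $V(G)$. Hence red degrees are unchanged, and $\tww(G\lor K_1)\le\tww(G)$ until the final contraction involving $x$, which is harmless. For mim-width, it suffices to take an optimal decomposition and attach the leaf for $x$ anywhere. Each cut gains at most one extra edge in an induced matching, so the width grows by at most $1$. The remaining bounds are routine.
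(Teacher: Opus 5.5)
Your proposal is correct and follows the paper's proof: you use the same reduction $(G,k)\mapsto(G\lor K_1,k+1)$ via Proposition~\ref{prop:b-to-bs} and the same cited hardness results, you note that the reduction is logspace so \XNLP-hardness transfers, and you show that each parameter in $\{k,\td,\pw,\mw,\tww\}$ grows by at most one, which the paper merely asserts and you actually justify. One small point: \cite{DBLP:journals/jcss/PanolanPS17} proves \Wone-hardness for \bchromatic (deciding $b(G)\ge k$), and the paper explicitly passes from that to \bcoloring rather than attributing the latter to the citation directly, so you should state that step too.
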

\begin{proof}
All hardness results use the same reduction from \bcoloring, namely $(G, k) \mapsto (G \lor K_1, k+1)$.
The \Wone-hardness of \bchromatic parameterized by $k$ is proven in~\cite{DBLP:journals/jcss/PanolanPS17}, which implies that \bcoloring
is \Wone-hard parameterized by $k$ as well.
The hardness (\Wone-hardness, \XNLP-hardness or \paraNP-hardness, respectively) of \bcoloring is proven for tree-depth in~\cite{DBLP:conf/mfcs/Balaban26} and for path-width, mim-width, and twin-width in~\cite{JaffkeLS23}. 
By Proposition~\ref{prop:b-to-bs}, the reduction is correct.
Moreover, observe that for every parameter $p \in \{k, \td, \pw, \mw, \tww\}$, we have $p(G \lor K_1) \le p(G) + 1$.
Hence, the construction provides a polytime and parameterized logspace reduction; the latter is needed to transfer \XNLP-hardness~\cite{elberfeld2015space}.
\qed\end{proof}

Now we present an \FPT algorithm parameterized by tree-width plus the number of colors.

\begin{proposition}\label{prop:ktw}
The \bscoloring problem, with input $(G, k)$, is \FPT parameterized by $k + w$, where $w$ is the tree-width of $G$, and it can be solved in time $2^{\ca O(k \cdot w)} \cdot n^2$, where $n = |V(G)|$.
\end{proposition}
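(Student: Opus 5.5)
Since $b^*$-colorings are local, the plan is to guess the intended $b^*$-vertex $u$ and then run a tree-decomposition dynamic program that checks everything else; the $n^2$ factor is exactly the $n$ choices of $u$ times a linear-time DP. Concretely, for each $u \in V(G)$ we decide whether $G$ has a proper $k$-coloring $\chi$ in which $u$ is a $b^*$-vertex. Such a $\chi$ exists iff, for some vertex set $B \subseteq N(u)$ with $|B| = k-1$ on which $\chi$ is injective and avoids $\chi(u)$, every vertex of $B$ is a $\chi$-$b$-vertex. We do not guess $B$ explicitly, since that would cost $n^{k}$; the DP will decide membership in $B$ locally. First compute a tree decomposition of width $\ca O(w)$ in time $2^{\ca O(w)} n$ (e.g.\ Korhonen's 2-approximation) and make it nice, with $\ca O(w n)$ nodes. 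Then add $u$ to every bag, which raises the width by one and keeps the decomposition valid. As a result, whenever the DP introduces or forgets a vertex $v$, the DP also knows $\chi(u)$ and whether $v \in N(u)$.

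A DP state at node $t$ with bag $B_t$ records four pieces of data. It stores a coloring $\chi\colon B_t \to [k]$, giving $k^{\ca O(w)}$ options. It stores a marking $\beta\colon B_t \cap N(u) \to \{0,1\}$ saying which bag vertices are designated $b$-vertices. For each marked vertex $v$ it stores the set $S_v \seq [k]$ of colors already seen in $N[v]$ among the processed vertices $V_t$, giving $2^{k}$ options per vertex and $2^{\ca O(k w)}$ overall. Finally it stores a set $D \seq [k] \setminus \{\chi(u)\}$ of colors already certified, namely colors $c$ for which some designated vertex of color $c$ has been forgotten with $S_v = [k]$. The constraint that at most one designated vertex per color is allowed does not matter: we only need each color covered at least once, so duplicates are harmless and injectivity can be dropped. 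Hence $D$ just accumulates colors. The total number of states is $k^{\ca O(w)} \cdot 2^{\ca O(kw)} \cdot 2^k = 2^{\ca O(kw)}$.

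The transitions are standard. At an introduce node for $v$, we choose $\chi(v)$ consistent with properness against its neighbors in the bag, and choose $\beta(v)$ if $v \in N(u)$. We then update every $S_x$ for $x \in N[v] \cap B_t$ marked, as well as $S_v$ itself from bag neighbors. Edges are all visible in some bag, and we must avoid double counting. This is harmless because $S$ is a set, so updates are idempotent. At a forget node for $v$, if $v$ is marked we require $S_v = [k]$ and add $\chi(v)$ to $D$; otherwise the state is discarded. Note that $u$ is never forgotten. At a join node, we combine states with equal $\chi$ and $\beta$ by taking $S_v := S_v^1 \cup S_v^2$ and $D := D^1 \cup D^2$. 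This costs $2^{\ca O(kw)}$ per pair, which is still $2^{\ca O(kw)}$ overall. At the root, with bag $\{u\}$, we accept iff $D = [k] \setminus \{\chi(u)\}$.

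The main point to verify is correctness of the forget rule. All neighbors of $v$ lie in $V_t$ when $v$ is forgotten, which holds by the tree-decomposition properties, so $S_v$ is then final and $v$ is a genuine $b$-vertex. Conversely, any valid $\chi$ with its $b$-vertices in $N(u)$ yields an accepting run: mark exactly those vertices. For running time, each of the $n$ guesses costs $2^{\ca O(kw)} \cdot n$, giving $2^{\ca O(kw)} n^2$ in total. Also, $k \le n$ can be assumed, and if $k > \Delta+1$ we may answer NO immediately.
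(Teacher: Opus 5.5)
Your proposal is correct and follows essentially the same route as the paper: guess the intended $b^*$-vertex $u$, then run the Jaffke--Lima--Lokshtanov tree-decomposition dynamic program for $b$-coloring with the designated $b$-vertices restricted to the neighborhood of $u$, which costs an extra factor of $n$. The differences are cosmetic. You write out the DP explicitly instead of citing it, you drop the at-most-one-designated-vertex-per-color constraint (which is indeed harmless), and you add $u$ to every bag; that last step is valid but not needed, since $|D| = k-1$ already forces $D = [k]\setminus\{\chi(u)\}$.
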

\begin{proof}
It suffices to slightly modify the dynamic programming algorithm for \bcoloring~\cite[Proposition 2.1]{JaffkeLL24}.
Before we run the DP algorithm, we first guess a vertex $u \in V(G)$.
Second, we compute a nice tree decomposition $(T, \ca B = \{B_t \sep t \in V(T)\})$ of $G$ of width $2w + 1$, see~\cite{Korhonen21,Kloks94}.
Similarly as in~\cite{JaffkeLL24}, for every $t \in V(T)$, we compute a set of entries $\tab(t)$ such that $(\gamma, C, P, \sigma) \in \tab(t)$ if there is a potential $b$-coloring $(\chi, B)$ of $G_t$ such that $B \seq N[u]$ and:
\begin{compactenum}
\item $\gamma \colon B_t \to [k]$ is a proper coloring with $\gamma = \chi \upharpoonright B_t$.
\item $P = B_t \cap B$.
\item $\sigma \colon P \to 2^{[k]}$ is a map such that for each $p \in P$, $\sigma(p) = \chi(N(p))$.
\item $C = \chi(B)$ and each vertex in $B \setminus P$ is a $\chi$-$b$-vertex.
\end{compactenum}

These requirements are the same as in~\cite{JaffkeLL24}, except that we added the requirement that $B \seq N[u]$.
This addition clearly ensures that if the algorithm finds a $k$-$b$-coloring, then it is a $k$-$b^*$-coloring.
The running time increases by a multiplicative factor of $n$ because of guessing the vertex $u$.
\qed\end{proof}

Using Proposition~\ref{prop:ktw}, we obtain several \FPT algorithms.

\begin{proposition}
The \bscoloring problem, with input $(G, k)$, is \FPT parameterized by $p$ if $p$ is the vertex cover number, vertex integrity, band-width, cut-width or carving-width of $G$.
\end{proposition}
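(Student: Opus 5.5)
The plan is to reduce everything to Proposition~\ref{prop:ktw}. For each listed parameter $p$, I will show that either $k$ is so large compared to $p$ that the answer is trivially NO, or $k + w$ is bounded by a function of $p$, where $w$ is the tree-width of $G$. In the second case I will run the $2^{\ca O(k\cdot w)} \cdot n^2$ algorithm.

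Tree-width itself needs only the standard bounds. Each of the listed parameters bounds tree-width: $\tw \le$ vertex cover number, $\tw \le \td \le$ vertex integrity, $\tw \le \pw \le$ band-width, $\pw \le$ cut-width, and $\tw$ is linearly bounded by carving-width (see Figure~\ref{fig:hierarchy}). So it suffices to bound $k$ in terms of $p$ whenever the answer might be YES.

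The bound on $k$ uses the $m^*$-degree. If $G$ has a $k$-$b^*$-coloring with $b^*$-vertex $u$, then $u$ has $k-1$ neighbors that are $b$-vertices, and each has degree at least $k-1$. Hence $m^*(G) \ge k-1$, and in fact $k \le m(G)$, since the $k$ $b$-vertices all have degree at least $k-1$. It therefore suffices to bound $m^*(G)$ (or $m(G)$) by $f(p)$ and to answer NO when $k > f(p)+1$. I will handle the parameters one at a time:
\begin{compactenum}
\item \emph{Vertex cover number $p$.} Among the $k$ $b$-vertices, at most $p$ lie in the cover. Any $b$-vertex outside the cover has all its neighbors in the cover, so it has degree at most $p$, giving $k-1 \le p$. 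If all $b$-vertices are in the cover, then $k \le p$.
\item \emph{Vertex integrity $p$.} There is a set $S$ with $|S| \le p$ such that every component of $G-S$ has at most $p$ vertices. A vertex outside $S$ then has degree at most $2p$, so at most $p$ $b$-vertices can have degree $\ge 2p+1$. This yields $k \le 2p+1$ or so.
\item \emph{Band-width, cut-width, carving-width.} For these the maximum degree is bounded by $2p$, $p$, and $p$ respectively, so $k \le \Delta(G)+1 \le 2p+1$.
\end{compactenum}

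In every case $k+w \le g(p)$, and Proposition~\ref{prop:ktw} yields an \FPT running time. The only delicate step is vertex integrity, since there the maximum degree is unbounded. I will handle it by counting high-degree vertices, which must lie in $S$, exactly as in the argument for $b$-colorings. The other parameters follow immediately from the maximum degree bound.
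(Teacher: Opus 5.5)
Your proposal is correct and follows the paper's proof. Both bound the tree-width by the parameter, answer NO when $k$ exceeds a bound derived from $b^*(G) \le m^*(G)+1 \le m(G) \le f(p)$, and then invoke Proposition~\ref{prop:ktw}; the only difference is that you derive the bounds on $k$ directly rather than citing prior work, and one small slip is harmless: cut-width $p$ gives maximum degree at most $2p$ (a star $K_{1,2p}$ has cut-width $p$), not $p$, but your final bound $k \le \Delta(G)+1 \le 2p+1$ is unaffected.
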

\begin{proof}
Let $p$ be one of the five parameters from the statement.
It has been shown that there is a computable function $f_p$ such that $\tw(G) \le f_p(p(G))$ and $m(G) \le f_p(p(G))$, where $\tw$ is tree-width and $m$ is the $m$-degree, see~\cite{JaffkeLL24} for the vertex cover number, and~\cite{DBLP:conf/mfcs/Balaban26} for the other parameters.
If $k > f_p(p(G))$, then we answer NO because $b^*(G) \le m^*(G)+1 \le m(G)$ for every graph $G$ (recall that $G$ cannot have a $b^*$-coloring with more than $b^*(G)$ colors).
Otherwise, it suffices to use the algorithm parameterized by $\tw(G) + k$ from Proposition~\ref{prop:ktw}.
\qed\end{proof}

\subsection{Clique-width}

Let us fix an $n$-vertex graph $G$, an integer $k$, and a rooted branch decomposition $(T, \ca L)$ for $G$ of module-width $w$.
Recall that module-width and related terminology was defined in Section~\ref{sub:gp}.

First, we recall necessary definitions from~\cite{JaffkeLL24}.
Informally, we assign a $t$-type $(\phi, \xi)$ to every color $c \in [k]$ based on its behavior in a potential $k$-$b$-coloring $(\chi, B)$ of $G_t$: the bit $\xi$ says whether $c$ is the color of some vertex in $B$, and the function $\phi$ specifies, for each $Q \in V_t/{\sim_t}$, whether $Q$ contains a vertex colored with $c$ ($\ccontains$), whether some vertex in $B \cap Q$ misses $c$ in its neighborhood ($\cdemand$), or none of these two possibilities ($\cnone$).

\begin{definition}[{\cite[Definition 3.1, typo fixed]{JaffkeLL24}}]
If $t \in V(T)$ is a node, then a \emph{$t$-type} is a pair $(\phi, \xi)$ of a map $\phi \colon V_t/{\sim_t} \to \{\cnone,\ccontains,\cdemand\}$ and a bit $\xi \in \{0, 1\}$. We denote the set of all $t$-types by $\ctypes_t$.
\end{definition}

\begin{definition}[{\cite[Definition 3.2]{JaffkeLL24}}]\label{def:cw-has-type}
Let $t \in V(T)$, let $(\chi, B)$ be a potential $k$-$b$-coloring of $G_t$, let $c \in [k]$ be a color, and let $\tau = (\phi, \xi) \in \ctypes_t$ be a $t$-type. We say that \emph{$c$ has $t$-type $\tau$ in $(\chi, B)$} if, for $C = \chi^{-1}(c)$:
\begin{compactenum}
\item $\xi = |C \cap B|$ and
\item for each $Q \in V_t/{\sim_t}$,
\begin{compactenum}
\item if $Q \cap C \neq \emptyset$, and there is no $v \in (B \setminus C) \cap Q$ 	such that $N(v) \cap C = \emptyset$, then $\phi(Q) = \ccontains$,
\item if $Q \cap C = \emptyset$ and there exists some $v \in (B \setminus C) \cap Q$ such that $N(v) \cap C = \emptyset$, then $\phi(Q) = \cdemand$, and
\item if $Q \cap C = \emptyset$, and there is no $v \in (B \setminus C) \cap Q$ such that $N(v) \cap C = \emptyset$, then $\phi(Q) = \cnone$.
\end{compactenum}
\end{compactenum}
\end{definition}

Now we define a \emph{$t$-signature}, which says how many colors of each $t$-type should be present.

\begin{definition}[{\cite[Definitions 3.3 and 3.4]{JaffkeLL24}}]\label{def:cw-signature}
If $t \in V(T)$, then a \emph{$t$-signature} is a map $\csig_t \colon \ctypes_t \to \{0, 1, \ldots, k\}$ such that  $\sum_{\tau \in \ctypes_t}\csig_t(\tau) = k$.

If $(\chi, B)$ is a potential $k$-$b$-coloring of $G_t$, then $(\chi, B)$ \emph{has signature} $\csig_t$ if for each $t$-type $\tau \in \ctypes_t$, there are precisely $\csig_t(\tau)$ colors in $(\chi, B)$ that have $t$-type $\tau$ in $(\chi, B)$.
\end{definition}

The following lemma says that a set of $t$-signatures can be computed by combining a set of $s$-signatures and a set of $r$-signatures, where $s$ and $r$ are the children of $t$. 

\begin{lemma}[{\cite[implicit in the proof of Theorem 3.16]{JaffkeLL24}}]\label{lem:cw-black-box}
Let $t \in V(T)$ be a node with children $s, r \in V(T)$.
For $q \in \{s, r\}$, let $A_q$ be a set of $q$-signatures.
There is an algorithm, running in time $n^{2^{\ca O(w)}}$, that computes a set $A_t$ of $t$-signatures such that $\csig_t \in A_t$ if and only if there is a potential $b$-coloring $(\chi, B)$ of $G_t$ with signature $\csig_t$ and for all $q \in \{s, r\}$, there is a signature $\csig_q \in A_q$ such that $(\chi \upharpoonright V(G_q), B \cap V(G_q))$ has  signature $\csig_q$.
\end{lemma}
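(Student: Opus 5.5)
This lemma is cited as implicit in the proof of \cite[Theorem 3.16]{JaffkeLL24}, so the plan is to extract the join step of that algorithm rather than design a new one. The key point is that a $t$-signature of $(\chi, B)$ depends only on the $s$- and $r$-types of each color together with the bipartite adjacency pattern between $V_s/{\sim_s}$ and $V_r/{\sim_r}$. The pattern is well defined because all vertices of one $\sim_q$-class have the same neighbours outside $V_q$. Also, each $\sim_t$-class is a union of $\sim_s$- and $\sim_r$-classes, since $\overline{V_t} \seq \overline{V_q}$.

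First I will define, for a pair of types $(\tau_s, \tau_r) \in \ctypes_s \times \ctypes_r$, a combined $t$-type $\tau_s \oplus \tau_r$, or declare the pair incompatible. Write $\tau_q = (\phi_q, \xi_q)$.
\begin{compactenum}
\item The pair is incompatible if $\xi_s + \xi_r > 1$, since $B$ may contain at most one vertex of each color.
\item The pair is incompatible if some $\ccontains$ class of $s$ is fully adjacent to some $\ccontains$ class of $r$, since the coloring must be proper.
\item A $\cdemand$ on a class $Q_s$ is \emph{resolved} if $Q_s$ is adjacent to some $Q_r$ with $\phi_r(Q_r) = \ccontains$. Resolution in the other direction is defined symmetrically.
\end{compactenum}
Then $\xi_t = \xi_s + \xi_r$. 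For each $Q \in V_t/{\sim_t}$, let $\phi_t(Q) = \ccontains$ if some constituent class is $\ccontains$. Otherwise let $\phi_t(Q) = \cdemand$ if some constituent class carries an unresolved $\cdemand$, and $\cnone$ if not. One point needs care: a class can be $\ccontains$ for $c$ while still holding a vertex of $B$ that misses $c$. Definition~\ref{def:cw-has-type} folds this case into $\ccontains$. I will handle it as in \cite{JaffkeLL24}, where the types are set up so that unresolved demands inside $\ccontains$ classes either cannot reach the root or are irrelevant there. I will verify this against the definition.

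Next comes correctness of the combination. I will show that if $(\chi, B)$ is a potential $b$-coloring of $G_t$ and $c$ has $q$-type $\tau_q$ in the restriction to $G_q$ for each $q \in \{s, r\}$, then $c$ has $t$-type $\tau_s \oplus \tau_r$. Conversely, if colors of compatible $s$- and $r$-types are identified in pairs, the union of the two colorings is proper and is a potential $b$-coloring. Both directions are a direct unfolding of Definition~\ref{def:cw-has-type}, using that adjacency between the two sides is determined by the classes.

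The algorithm is then as follows. For each $\csig_s \in A_s$ and $\csig_r \in A_r$, enumerate all matrices $M \colon \ctypes_s \times \ctypes_r \to \{0, \ldots, k\}$ whose row sums equal $\csig_s$ and whose column sums equal $\csig_r$, with $M$ supported only on compatible pairs. Output the signature $\csig_t(\tau) = \sum_{\tau_s \oplus \tau_r = \tau} M(\tau_s, \tau_r)$. Since $|\ctypes_q| = 2 \cdot 3^{w} = 2^{\ca O(w)}$, there are at most $(k+1)^{2^{\ca O(w)}} \le n^{2^{\ca O(w)}}$ matrices. The sets $A_s$ and $A_r$ have size at most $n^{2^{\ca O(w)}}$, so the total running time is $n^{2^{\ca O(w)}}$.

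For the "if and only if", in one direction a potential coloring of $G_t$ induces such a matrix by counting colors by their pair of types. In the other direction, given a matrix together with witnessing colorings for $\csig_s$ and $\csig_r$, I will permute the colors of the $r$-coloring so that the colors are paired according to $M$. Signatures are invariant under color permutation, so this is allowed. The combination lemma then yields a potential $b$-coloring with signature $\csig_t$. I expect the main obstacle to be the exact handling of the $\ccontains$-versus-$\cdemand$ interaction in the type combination. For that step I will follow \cite{JaffkeLL24} closely instead of reinventing it.
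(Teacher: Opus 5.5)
Your overall plan matches the join in \cite{JaffkeLL24}. You combine each color's $s$-type and $r$-type into a $t$-type, using that adjacency between a $\sim_s$-class and a $\sim_r$-class is complete or empty. You then enumerate the at most $(k+1)^{|\ctypes_s|\cdot|\ctypes_r|}=n^{2^{\ca O(w)}}$ pairing matrices and glue witnesses after permuting colors. Your conditions on $\xi$ and on properness, and your notion of a resolved demand, are correct.

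\textbf{The gap.} It lies in exactly the case you flagged, and the rule you state for it is wrong. Definition~\ref{def:cw-has-type} does not fold ``$Q$ contains $c$ but holds a vertex of $B$ missing $c$'' into $\ccontains$. Clause (a) explicitly requires that no $v\in(B\setminus C)\cap Q$ has $N(v)\cap C=\emptyset$. The paper relies on this at the root, where $\csig_r^*$ must certify that every vertex of $B$ is a $b$-vertex.

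This case also cannot be resolved later. Suppose $x\in Q\cap C$ and $v\in(B\setminus C)\cap Q$ misses $c$ in $G_t$. Then $x\sim_t v$, so every neighbor of $v$ outside $V_t$ is adjacent to $x$ and can never receive color $c$.

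What is missing is a third incompatibility condition. Reject $(\tau_s,\tau_r)$ whenever some $\sim_t$-class has one constituent class with $\ccontains$ and another constituent class with an unresolved $\cdemand$. With this rule, $\phi_t(Q)$ agrees with clauses (a)--(c), and both directions of your argument go through.

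\textbf{Why this matters.} With your rule the output contains signatures that no potential $b$-coloring realizes, and the dynamic programming becomes unsound. Take $G=P_3$ with center $y$ and leaves $x,z$, and let $k=3$. Use a decomposition whose node $t$ joins the leaves of $x$ and $z$, so $\{x,z\}$ is a single $\sim_t$-class.
\begin{itemize}
\item Glue $x\mapsto 1$ and $z\mapsto 3$ with $B=\{x,z\}$. Color $1$ gets $\ccontains$ from $\{x\}$ and an unresolved $\cdemand$ from $\{z\}$, so your rule gives it $t$-type $(\{x,z\}\mapsto\ccontains,\,1)$.
\item Color $3$ is symmetric, and color $2$ gets $(\{x,z\}\mapsto\cdemand,\,0)$.
\item No potential $b$-coloring of $G_t$ has this signature.
\item Joining with $y\mapsto 2$, $y\in B$, yields $\csig_r^*$ at the root.
\end{itemize}
The algorithm would therefore report a $3$-$b$-coloring of $P_3$, and, guessing $u=y$, a $3$-$b^*$-coloring. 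Neither exists.

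\textbf{A smaller point.} Your ``only if'' direction uses witnessing colorings for $\csig_s\in A_s$ and $\csig_r\in A_r$, but the lemma allows $A_q$ to be arbitrary. You should add the hypothesis that every signature in $A_q$ is realized by some potential $b$-coloring of $G_q$. This holds for the tables $\tab(q)$ by induction.
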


Now we are ready to modify the \XP algorithm for \bcoloring parameterized by module-width into an analogous algorithm for \bscoloring.

\begin{theorem}
The \bscoloring problem, with input $(G, k)$, is \XP parameterized by the clique-width $w$ of $G$, and it can be solved in time $n^{2^{\ca O(w)}}$.
\end{theorem}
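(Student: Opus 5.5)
We follow the strategy announced in the introduction: guess the vertex $u$ that should become the $b^*$-vertex, and then run the signature-based dynamic programming of~\cite{JaffkeLL24} over the rooted branch decomposition $(T,\ca L)$. The only change is that we restrict potential $b$-colorings $(\chi,B)$ to those with $B\seq N[u]$. Since a rooted branch decomposition of module-width $2^{\ca O(w)}$ can be computed from clique-width $w$ (via a known approximation of clique-width), we may assume we are given $(T,\ca L)$ of module-width $w'$ with $\log w' = \ca O(w)$. The final running time is then $n$ guesses times $n^{2^{\ca O(w)}}$, which is still $n^{2^{\ca O(w)}}$.

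\emph{Leaves.} For a fixed $u$, we compute bottom-up, for each node $t$, a set $A_t$ of $t$-signatures. The invariant is that $\csig_t\in A_t$ iff there is a potential $k$-$b$-coloring $(\chi,B)$ of $G_t$ with $B\seq N[u]\cap V_t$ and signature $\csig_t$. At a leaf $t$ with $V_t=\{v\}$, we enumerate the colorings $\chi(v)\in[k]$ and the choices $B\in\{\emptyset,\{v\}\}$, allowing $B=\{v\}$ only if $v\in N[u]$. For each choice we record its signature directly from Definitions~\ref{def:cw-has-type} and~\ref{def:cw-signature}.

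\emph{Internal nodes.} At an internal node $t$ with children $s,r$, we apply Lemma~\ref{lem:cw-black-box} as a black box. Its output characterizes exactly the potential $b$-colorings of $G_t$ whose restrictions have signatures in $A_s$ and $A_r$. We need two checks here. First, the condition $B\seq N[u]$ is preserved under this combination: $B\cap V_s\seq N[u]$ and $B\cap V_r\seq N[u]$ together give $B\seq N[u]$. Second, conversely, every coloring of $G_t$ with $B\seq N[u]$ restricts to colorings of the children with the same property. Hence the invariant propagates with no change to the lemma, because the restriction is imposed only at the leaves and is hereditary.

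\emph{Root.} At the root, $V_t=V(G)$, and $V_t/{\sim_t}$ consists of a single class since $\overline{V_t}=\emptyset$. Following~\cite{JaffkeLL24}, we accept iff $A_{\mathrm{root}}$ contains the signature in which all $k$ colors have type $(\phi,1)$ with $\phi\ne\cdemand$. This means every color class contains a vertex of $B$ and no vertex of $B$ misses a color, i.e.\ we have a $k$-$b$-coloring $\chi$ whose $b$-vertices $B$ lie in $N[u]$, one per color. To make $u$ a $b^*$-vertex we still need, for each color $c\ne\chi(u)$, a $b$-vertex of color $c$ in $N(u)$. The only element of $B$ that is not in $N(u)$ is $u$ itself, and it has color $\chi(u)$. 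So the representative of every color $c\neq\chi(u)$ is a neighbor of $u$, and $u$ is a $b^*$-vertex. Conversely, given a $k$-$b^*$-coloring with $b^*$-vertex $u$, choose one $b$-vertex neighbor of $u$ for each other color and take $u$ itself for its own color. This yields $B\seq N[u]$, so the guess of $u$ succeeds.

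The main obstacle is verifying that Lemma~\ref{lem:cw-black-box}, stated only implicitly in~\cite{JaffkeLL24}, really is insensitive to the extra leaf-level restriction. This amounts to checking that its ``iff'' quantifies over all potential colorings whose children-restrictions are realized in the given sets $A_s,A_r$. If so, restricting the leaf sets suffices, and the rest of the argument is bookkeeping.
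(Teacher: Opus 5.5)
Your plan matches the paper's: guess $u$, keep the leaf signature $\csig_t^2$ (i.e.\ $v\in B$) only when $v\in N[u]$, combine children with Lemma~\ref{lem:cw-black-box}, and accept on the root signature in which all $k$ colors have type $(\phi,1)$ with $\phi(V(G))=\ccontains$. Your root analysis is correct, and so is the completeness direction (a $k$-$b^*$-coloring with $B\seq N[u]$ restricts to children colorings with the same property).

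\textbf{Gap in the internal-node step (soundness).} This half does not follow from the lemma. The lemma only says that the witness $(\chi,B)$ for $\csig_t$ has restrictions whose \emph{signatures} lie in $A_s$ and $A_r$. A signature counts colors per type and does not record which vertices are in $B$. So your invariant at $s$ only provides \emph{another} potential coloring of $G_s$ with the same signature and some $B'\seq N[u]$; it says nothing about $B\cap V_s$ for the witness itself. Hence ``$B\cap V_s\seq N[u]$ and $B\cap V_r\seq N[u]$ together give $B\seq N[u]$'' is a non sequitur. The ``main obstacle'' you name is also misdiagnosed: the lemma does quantify over all colorings whose restrictions are realized in $A_s,A_r$, and that is not enough.

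The missing ingredient is the exchange property behind the correctness proof of~\cite{JaffkeLL24}. Two potential colorings of $G_s$ with equal $s$-signatures are interchangeable inside a potential coloring of $G_t$, after permuting colors to match types, without changing the $t$-signature. This holds because each $\sim_s$-class has a uniform neighborhood outside $V_s$, so a color's $s$-type determines all its interactions across the cut (properness and satisfied demands). With this property you can glue the children witnesses that satisfy $B\seq N[u]$, and your invariant propagates.

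The paper argues soundness top-down instead: it traces the root witness to the leaf of each $v\in B$, where only $\csig_t^2$ fits, and this forces $v\in N[u]$. That argument tacitly needs the witness to restrict into the tables at every node of the path, which is the same fact imported from~\cite{JaffkeLL24}.

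\textbf{Running-time bookkeeping.} You lose an exponential here. Lemma~\ref{lem:cw-black-box} costs $n^{2^{\ca O(w')}}$ in the module-width $w'$ of the decomposition. So $w'=2^{\ca O(w)}$ yields $n^{2^{2^{\ca O(w)}}}$, not $n^{2^{\ca O(w)}}$. This still proves membership in \XP. The stated bound, however, needs what the paper implicitly assumes: a decomposition of module-width $\ca O(w)$, for instance the one read off a given $w$-expression, whose label classes at each node refine $\sim_t$.
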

\begin{proof}
Let $n = |V(G)|$ and $(T, \ca L)$ be a rooted branch decomposition for $G$ of module-width $w$.
First, we guess a vertex $u \in V(G)$: we will use Lemma~\ref{lem:cw-black-box} to decide whether there is a $k$-$b^*$-coloring in which $u$ is a $b^*$-vertex.
Our algorithm is a dynamic programming along $(T, \ca L)$, analogously to~\cite[Theorem 3.16]{JaffkeLL24}.
For every node $t \in V(T)$, we will define a set of $t$-signatures $\tab(t)$.

Let $t \in V(T)$ be a leaf of $T$ and let $v \in V(G)$ be such that $\ca L(v) = t$.
Clearly, the only equivalence class of $\sim_t$ is $\{v\}$.
For $\alpha \in \{\cnone,\ccontains,\cdemand\}$, let $\phi_\alpha$ be a function such that $\phi_\alpha(\{v\}) = \alpha$.
There are two possible signatures: either $v$ should not be in $B$ or it should.
The former case corresponds to the signature $\csig_t^1 = \{((\phi_\ccontains, 0), 1), ((\phi_\cnone, 0), k-1)\}$ and the latter case to the signature $\csig_t^2 = \{((\phi_\ccontains, 1), 1), ((\phi_\cdemand, 0), k-1)\}$.
In~\cite{JaffkeLL24}, $\tab(t)$ is defined as $\{\csig_t^1, \csig_t^2\}$.
However, we need to ensure that $B \seq N[u]$, so we define $\tab(t) = \{\csig_t^1, \csig_t^2\}$ if $v \in N[u]$, and $\tab(t) = \{\csig_t^1\}$ otherwise.

If $t \in V(T)$ is an internal node of $T$, then $\tab(t)$ is computed using Lemma~\ref{lem:cw-black-box}.
Let $r \in V(T)$ be the root of $T$ and let $\csig_r^* = \{((\{(V(G), \ccontains)\}, 1), k)\}$ be an $r$-signature.
The algorithm answers YES if $\csig_r^* \in \tab(r)$, and otherwise it answers NO.
Observe that there are $n$ choices for $u$, $2n-1$ nodes of $T$ and for each of them, the computation takes time at most $n^{2^{\ca O(w)}}$ by Lemma~\ref{lem:cw-black-box}, which means that the total running time is $n^{2^{\ca O(w)}}$, as desired.

What remains is to prove correctness.
First, suppose that the algorithm answers YES. By Lemma~\ref{lem:cw-black-box}, there is potential $b$-coloring $(\chi, B)$ with signature $\csig_r^*$.
By definition of $\csig_r^*$, $\chi$ is a $b$-coloring and each vertex $v \in B$ is a $\chi$-$b$-vertex.
Let $t \in V(T)$ be a leaf such that $\ca L(v) = t$, and let $P$ be the $r$-$t$ path in $T$.
By inductively applying Lemma~\ref{lem:cw-black-box} along $P$, there is, for every $q \in V(P)$, a $q$-signature $\csig_q \in \tab(q)$ such that $(\chi \upharpoonright V(G_q), B \cap V(G_q))$ has signature $\csig_q$.
Let $c = \chi(v)$ and observe that $|\chi^{-1}(c) \cap B \cap V(G_t)| = 1$.
Hence, by Definitions~\ref{def:cw-has-type} and~\ref{def:cw-signature}, $\csig_t^2 \in \tab(t)$, which implies $v \in N[u]$.
Therefore, $u$ is a $\chi$-$b^*$-vertex, as desired.

Conversely, suppose that there is a $k$-$b^*$-coloring $\chi$.
Let $u$ be a $\chi$-$b^*$-vertex and let $B \seq N[u]$ be such that for each color $c \in [k]$, there is a $\chi$-$b$-vertex in $B$.
We prove by ``leaves-to-root'' induction that for every $t \in V(T)$, there is a signature $\csig_t \in \tab(t)$ such that $(\chi \upharpoonright V(G_t), B \cap V(G_t))$ has signature $\csig_t$.
If $t \in V(T)$ is a leaf in $T$ and $\ca L^{-1}(t) = v$, then $\csig_t = \csig_t^1$ if $v \notin B$, and $\csig_t = \csig_t^2$ if $v \in B$.
If $t \in V(T)$ is an internal node of $T$, then $\csig_t$ exists by induction hypothesis and Lemma~\ref{lem:cw-black-box}.
Hence, $\csig_t \in \tab(t)$ for every $t \in V(T)$.
Finally, observe that $\csig_r = \csig_r^*$, which concludes the proof.
\qed\end{proof}

\subsection{Feedback edge number}

\newcommand{\red}{\textsf{red}}

Let us fix a graph $G$ and an integer $k \ge 4$.
Let $p_G$ be the feedback edge number of $G$.
We will need the following definition, which is illustrated in Figure~\ref{fig:fen-core}.

\begin{definition}[{\cite[Definition 17]{DBLP:conf/mfcs/Balaban26}}]\label{def:fen-core}
Let $S \seq V(G)$ be a set.
An \emph{$S$-outer path} $P$ is a path in $G$ such that $V(P) \cap S = \emptyset$ and $V(P) \cap N(S) = \{u, v\}$, where $u$ and $v$ are the endpoints of $P$.
We say that $S$ is a \emph{fen-core} if each cycle in $G$ has at least one edge in $G[S]$, there are at most two $(u, S)$-paths for each $u \in \overline{S}$, each $S$-outer path has length at least 7, and $|S| \le 32p_G$.
\end{definition}

\begin{figure}[h]
\scalebox{1.5}{
\begin{tikzpicture}
\tikzset{
ff/.style = {draw,black, circle, fill=white, minimum width=4pt, inner sep=0pt}}    
\tikzset{st4/.style={postaction={decorate,
decoration={markings,
mark= at position 1/5 with {\node [ff]{};},
mark= at position 2/5 with {\node [ff]{};},
mark= at position 3/5 with {\node [ff]{};}, 
mark= at position 4/5 with {\node [ff]{};}, 
}}}}
\tikzset{st2/.style={postaction={decorate,
decoration={markings,
mark= at position 1/3 with {\node [ff]{};},
mark= at position 2/3 with {\node [ff]{};},
}}}}
\definecolor{Hcolor}{RGB}{182,255,193}
\tikzmath{ \cir = 0.8;}
\fill[gray!20!white] (0,0) circle (0.85 cm);
\draw (0,0) circle (\cir cm);
\def\n{8}
\node[red] at (0,0) {\large $S$};
\begin{scope}[every node/.style={draw, circle, minimum width=4pt, inner sep=0pt}]
\node[fill=white] (a1) at (360/\n*1: \cir cm) {};
\node[fill=white] (a2) at (360/\n*2: \cir cm) {};
\node[fill=green] (a3) at (360/\n*3: \cir cm) {};
\node[fill=white] (a4) at (360/\n*4: \cir cm) {};
\node[fill=white] (a5) at (360/\n*5: \cir cm) {};
\node[fill=white] (a6) at (360/\n*6: \cir cm) {};
\node[fill=white] (a7) at (360/\n*7: \cir cm) {};
\node[fill=green] (a8) at (360/\n*8: \cir cm) {};
\node[above left = 5pt of a2] (a21) {};
\node[above right = 5pt of a2] (a22) {};
\node[fill=green, above left = 5pt of a21] (a211) {};
\node[above right= 5pt of a211] (e2) {};
\draw (a211)--(e2);
\node[below right= 5pt of a211] (e3) {};
\draw (a211)--(e3);
\node[fill=white, left= 5pt of a211] (e4) {};
\draw (a211)--(e4);
\node[fill=white, above left= 5pt of a211] (e5) {};
\draw (a211)--(e5);
\node[fill=white, below left= 5pt of a211] (e6) {};
\draw (a211)--(e6);
\node[above= 5pt of a211] (e7) {};
\draw (a211)--(e7);
\node[below= 5pt of a211] (e8) {};
\draw (a211)--(e8);
\node[right= 5pt of a211] (e9) {};
\draw (a211)--(e9);
\draw (a21)--(a2)--(a22);    
\draw (a211)--(a21);   
\node[fill=green, right = 40pt of a1] (b1) {};
\node[right = 40pt of a7] (c1) {};      
\draw[st4] (a1) to[out=45,in=135, distance=15pt] (b1);
\draw[st2] (b1) to[out=-45,in=45, distance=15pt] (c1);
\draw[st4] (a7) to[out=-45,in=-135, distance=15pt] (c1);
\node[right= 5pt of a8] (d1) {};
\draw (a8)--(d1);
\node[above right= 5pt of a8] (d2) {};
\draw (a8)--(d2);
\node[below right= 5pt of a8] (d3) {};
\draw (a8)--(d3);
\node[fill=white, left= 5pt of a8] (d4) {};
\draw (a8)--(d4);
\node[fill=white, above left= 5pt of a8] (d5) {};
\draw (a8)--(d5);
\node[fill=white, below left= 5pt of a8] (d6) {};
\draw (a8)--(d6);
\node[right = 8pt of b1] (b3) {};
\draw (b1)--(b3);    
\node[above right = 8pt of b1] (b4) {};
\draw (b1)--(b4);     
\node[below = 8pt of b1] (b51) {};
\draw (b1)--(b51);
\node[left = 8pt of b1] (b31) {};
\draw (b1)--(b31);    
\node[below left = 8pt of b1] (b41) {};
\draw (b1)--(b41);     
\node[above = 8pt of b1] (b5) {};
\draw (b1)--(b5);
\node[fill=green, above right=20pt of a22, xshift=10pt] (b2) {};
\draw[st2] (a22) to[out =45, in = -135, distance=15] (b2);
\node[above right= 5pt of b2] (f2) {};
\draw (b2)--(f2);
\node[below right= 5pt of b2] (f3) {};
\draw (b2)--(f3);
\node[fill=white, above left= 5pt of b2] (f5) {};
\draw (b2)--(f5);
\node[fill=white, left= 5pt of b2] (f6) {};
\draw (b2)--(f6);
\node[above= 5pt of b2] (f7) {};
\draw (b2)--(f7);
\node[below= 5pt of b2] (f8) {};
\draw (b2)--(f8);
\node[right= 5pt of b2] (f9) {};
\draw (b2)--(f9);
\draw (a21)--(a2)--(a22);
\draw (a5)--(a3)--(a1);
\draw (a6)--(a3)--(d5);
\node[above left = 5pt of a3] (g) {};
\node[left = 5pt of a3] (g2) {};
\draw (g)--(a3)--(g2);
\node[left = 0.5cm of a4] (h) {};
\draw[st4] (g) to[out=135, in=90, distance = 10pt] (h);
\draw[st4] (a5) to[out=-135, in=-90, distance = 10pt] (h);
\end{scope}
\end{tikzpicture}}
\centering
\caption{A depiction of the fen-core $S$, see Definition~\ref{def:fen-core} (replicated from~\cite{DBLP:conf/mfcs/Balaban26} with permission). If $k \le 9$, then the green vertices may be used as $b$-vertices. Note that in contrast to this figure, $G[S]$ may be disconnected.}
\label{fig:fen-core}
\end{figure}

The fen-core can be easily found.

\begin{lemma}[{\cite[Lemma 18]{DBLP:conf/mfcs/Balaban26}}]\label{lem:S-exists}
A fen-core exists and it can be found in polynomial time.
\end{lemma}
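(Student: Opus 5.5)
We need to prove Lemma~\ref{lem:S-exists}: a fen-core exists and can be found in polynomial time. The plan is a standard kernel-style construction for bounded feedback edge number. We start from the 2-core of $G$, peel leaves, and obtain a subgraph $G'$ of minimum degree at least 2 containing every cycle. In $G'$, the vertices of degree at least 3 number at most $2p_G-2$ per component-type bound; precisely, there are at most $2(p_G-1)$ of them. The degree-2 vertices form at most $3p_G$ maximal ``threads'' (paths whose interior vertices have degree 2 in $G'$). All of this is computable in polynomial time, as is a minimum feedback edge set (complement of a spanning forest).

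We then build $S$ as follows. Put every vertex of degree at least 3 in $G'$ into $S$, and for each thread proceed by its length. If the thread is short (at most some constant, say 14 edges), put all its vertices into $S$. If it is long, put into $S$ only a constant number of vertices at each end, say its first and last 2 vertices. If a component of $G'$ is a single cycle, treat it as a thread from a chosen vertex to itself. The size bound follows: at most $2p_G$ branch vertices plus at most $3p_G$ threads, each contributing a constant number (at most about 8 if we tune the short-thread threshold carefully), giving at most $32p_G$. The constants need to be matched to $32$, for example by letting short threads contribute their interior of at most a few vertices. If the arithmetic does not fit, we instead keep only the endpoints plus one neighbour on each side of long threads and merge short threads.

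Next we verify the properties. Every cycle of $G$ lies in $G'$, and it either passes through a branch vertex using a thread edge adjacent to it, or it is a cycle component. Since we placed two consecutive vertices at each thread end into $S$, the cycle contains an edge of $G[S]$. For $u\in\overline S$: if $u$ is inside a long thread, its only routes to $S$ are the two directions along the thread, possibly after climbing a tree hanging off $G'$. If $u$ lies in a pendant tree attached to a vertex $x$, its paths to $S$ go through $x$, and $x$ is either in $S$ or in a thread, giving at most two. An $S$-outer path must run inside a long thread between its two $S$-ends, or inside a pendant tree; the latter is impossible since trees hang off a single vertex. Its length is then the thread length minus the end portions, which is at least 7 by the threshold.

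The main obstacle is the pendant-tree bookkeeping for the ``at most two $(u,S)$-paths'' condition, together with fitting the constant $32$. First we check that a vertex in a pendant tree attached to a vertex of $S$ has exactly one path, and one attached inside a long thread has exactly two, so the condition holds. Then we tune the thread threshold, requiring length at least $7+4$ for long threads, to reconcile the outer-path length with the size budget.
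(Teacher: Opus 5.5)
Your argument is correct. This paper gives no proof of the lemma; it imports it from the cited MFCS paper, so there is nothing in the text to compare your route against. Your 2-core and degree-2-thread construction is a valid self-contained proof of all four fen-core conditions.

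The write-up should commit to one set of constants instead of hedging:
\begin{itemize}
\item Call a thread $t_0t_1\cdots t_L$ long if and only if $L\ge 11$.
\item Put into $S$ the branch vertices, all vertices of short threads, and the two vertices $t_1,t_{L-1}$ next to the ends of each long thread.
\item For a cycle component, put the whole cycle into $S$ if its length is at most $10$; otherwise put a chosen vertex and its two neighbours.
\end{itemize}

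With this choice, every component of $G-S$ is one of two kinds:
\begin{itemize}
\item a tree with at most one edge to $S$;
\item the middle segment $t_2\cdots t_{L-2}$ of a long thread together with its pendant trees. This component has exactly two edges to $S$, and it meets $N(S)$ only in $t_2$ and $t_{L-2}$, which are at distance $L-4\ge 7$.
\end{itemize}
This gives the two-path condition and the outer-path condition at once.

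For the size bound, a component of the 2-core with cyclomatic number $p_i\ge 2$ contributes at most $2(p_i-1)$ branch vertices. It also has at most $3(p_i-1)$ threads, each adding at most $9$ further vertices. A cycle component contributes at most $10$ vertices. Hence $|S|\le 29p_G\le 32p_G$. Your first threshold of $14$ would only give $41p_G$, and the fallback of ``merging short threads'' is unnecessary, so drop both.

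Finally, you implicitly (and necessarily) read an $S$-outer path as having two distinct endpoints. Under the literal reading, any single vertex of $N(S)\setminus S$ would be an outer path of length $0$, and no fen-core would exist for, e.g., a cycle of length $100$.
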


Let us fix a fen-core $S$ and let $p = |S|$. Informally, a $\chi$-candidate is a vertex that can become a $b$-vertex in some supercoloring of $\chi$.

\begin{definition}[{\cite[Definition 19]{DBLP:conf/mfcs/Balaban26}}]\label{def:chi-candidates}
Let $U \seq V(G)$ and let $\chi$ be a proper coloring of $G[U]$.
The \emph{$\chi$-redundancy} of a vertex $v \in V(G)$ is the integer $\red_\chi(v) = |N[v]\setminus U| + |\chi(N[v])| - k$.
If $\red_\chi(u) \ge 0$, we say that $u$ is a \emph{$\chi$-candidate.}
If $\red_\chi(u) = 0$, we say that $u$ is \emph{$\chi$-tight.}
\end{definition}

Now we define a \emph{valid $S$-profile}, which is an object whose existence implies the existence of a $k$-$b^*$-coloring, see Lemma~\ref{lem:fen-find-coloring}.

\begin{definition}\label{def:fen-profile}
If $\chi\colon S \rightarrow [p]$ is a coloring of $G[S]$, $u \in V(G)$, and $B \seq S \cap N[u]$, then $(\chi, u, B)$ is an \emph{$S$-profile}.
An $S$-profile $(\chi, u, B)$ is \emph{valid} if $\chi$ is proper, each vertex in $B$ is a $\chi$-candidate, $|B| = |\chi(B)|$, $u \in S$ implies $u \in B$, and there are $k-|B|$ $\chi$-candidates in $N[u] \setminus S$.
\end{definition}

\begin{lemma}\label{lem:fen-find-coloring}
Given a valid $S$-profile $\zeta = (\chi, u, B)$, a $k$-$b^*$-coloring of $G$ can be computed in polynomial time.
\end{lemma}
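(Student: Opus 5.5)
Starting from a valid $S$-profile $\zeta = (\chi, u, B)$, I will extend $\chi$ step by step to a total coloring $\psi$ in which $u$ is a $b^*$-vertex. The $b$-vertices of $\psi$ will be the vertices of $B$ together with a set $B'$ of $k-|B|$ $\chi$-candidates in $N[u]\setminus S$, which exist by validity.

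\textbf{Step 1: assign target colors.} Give the vertices of $B'$ pairwise distinct colors from $[k]\setminus\chi(B)$; this works because $|B|=|\chi(B)|$. If $u\notin S$ and $u\in B'$, its color is fixed this way. If $u\notin B\cup B'$, then $u\notin S$, and I color $u$ last among the relevant vertices. Its color must avoid the colors of its neighbors and must be a color whose $b$-vertex is $u$ itself or not needed. It is simplest to always put $u$ into $B'$ when $u\notin S$ and $u$ is a $\chi$-candidate, adjusting $B'$ if necessary. Coloring $u$ requires care so that $u$ does not clash with a $b$-vertex of the same color in $N(u)$. Since every $b$-vertex lies in $N[u]$, they all have distinct colors, and $u$ is adjacent to all of them or equal to one. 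So I need $u\in B\cup B'$, which is consistent with the rule that $u\in S$ implies $u\in B$.

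\textbf{Step 2: complete neighborhoods of $b$-vertices.} Each vertex $w\in B\cup B'$ is a $\chi$-candidate, so $|N[w]\setminus S| \ge k-|\chi(N[w])|$. Its uncolored neighbors can therefore receive exactly the missing colors, distinct from each other. The key structural fact I will use is the fen-core structure: $G-S$ is a forest, every vertex outside $S$ has at most two paths to $S$, and $S$-outer paths have length at least $7$. Vertices of $B'$ lie in $N(u)$ or equal $u$, so their uncolored neighbors lie within distance $2$ of $u$. I will argue that no uncolored vertex is forced to take two different colors, and that no two uncolored vertices that receive colors are adjacent with equal colors. For $u\notin S$, the graph $G[N^2[u]\setminus S]$ is a tree and $N^2[u]$ touches $S$ in a controlled way, because outer paths are long and the two-path condition holds. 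I expect this to need a greedy order: first color $N(u)\setminus S$ not in $B'$ as needed, then the second neighborhoods. Vertices of $N(w)\setminus S$ for distinct $w$ are disjoint and pairwise nonadjacent except through $u$ and $S$, since a short cycle would otherwise arise outside $G[S]$ — every cycle must use an edge of $G[S]$. A Hall-type or greedy assignment then gives each $w$ its missing colors while avoiding conflicts with colored neighbors in $S$. Candidacy counts exactly the colors available, and $S$-colors are already excluded via $\chi(N[w])$.

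\textbf{Step 3: finish.} Every remaining uncolored vertex $x$ is recolored greedily: if $x$ has fewer than $k$ distinct colors in its neighborhood, it takes a free color. The obstacle is vertices whose neighborhood already sees all $k$ colors. I will avoid this by coloring the remaining vertices in an order along the forest $G-S$ so that each vertex has at most two colored neighbors when it is colored, plus its neighbors in $S$. Since $k\ge 4$, this leaves a free color, except for vertices adjacent to $S$ with many $S$-neighbors. For those I rely on the fen-core property that each outside vertex has at most two paths to $S$, hence at most two $S$-neighbors. So $x$ sees at most $2+2<k$ colors when the order is chosen leaves-first in each tree component with roots attached appropriately. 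Already-colored vertices do not lose $b$-status because we only add colors. I expect Step 2, the conflict-freeness near $u$, to be the main obstacle. Each step is polynomial.
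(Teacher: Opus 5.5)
Your three-phase plan (fix the colors of the intended $b$-vertices in $N[u]$, complete their neighborhoods, color the rest greedily) matches the paper's skeleton. However, the care that makes each phase work is missing, and the last phase is wrong as stated.

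In Step~1 you assign the colors $[k]\setminus\chi(B)$ to $B'$ arbitrarily. In Step~2 you still argue with $\chi$-candidacy, even though Step~1 has already colored neighbors of the intended $b$-vertices. This can fail in two ways:
\begin{itemize}
\item If $u\notin S$ and $B=\{v\}$ with $v$ $\chi$-tight, giving $u$ a color already in $\chi(N[v])$ uses up an uncolored neighbor of $v$ without adding a new color. Then $v$ can no longer become a $b$-vertex.
\item If some $w\in B'$ has a neighbor $x\in S$ with $\chi(x)\notin\chi(B)$, an arbitrary assignment may give $w$ the color $\chi(x)$, which is improper. Or, when $w\neq u$, it may give $u$ that color, which is fatal if $w$ is tight.
\end{itemize}
The paper handles this by showing that at most one such $w$ exists and that then $B=\emptyset$. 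It gives $w$ and $u$ colors different from $\chi(x)$, and it chooses the color of $u$ outside $\chi(N[v])$ whenever possible. You also only assume that $u$ can be put into $B'$ when $u\notin S$. This needs the count $|N[u]\setminus S|\ge k-|B|+1$ and $|\chi(N[u])|\ge|B|$, which shows $u$ is a $\chi$-candidate.

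In Step~2, the claim that $S$-colors are already excluded via $\chi(N[w])$ is false. An uncolored neighbor $y$ of $w$ may be adjacent to some $x\in S\setminus N[w]$ whose color is among those $w$ misses. If $w$ is tight, every uncolored neighbor of $w$ must take one of these colors. You need three facts:
\begin{itemize}
\item each $w$ has at most one such $y$;
\item $y$ has a unique neighbor in $S$;
\item in that case $w\notin N[S]$, so the only colored vertices of $N[w]$ are $w$ and $u$.
\end{itemize}
Then $w$ misses $k-2\ge 2$ colors, and $y$ can take one different from $\chi(x)$.

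In Step~3 your count gives at most $2+2=4$ colored neighbors, and $4<k$ is false for $k=4$, which the lemma must cover. The bound of two colored forest-neighbors is also not justified. With a leaves-first order, an internal vertex sees all its already colored children, and a component of $G-S$ may contain several precolored regions. What is needed is a bound of $3$ on the total number of colored neighbors. This comes from combining the forest structure with the two-paths condition, not from bounding forest-neighbors and $S$-neighbors separately.

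The paper colors the remaining vertices in BFS order from the set colored so far. It shows that no vertex $v$ has four neighbors that are precolored or earlier. Four such neighbors would give four pairwise disjoint shortest paths to the precolored set, since every cycle has an edge in $G[S]$. At most two of these paths can meet $S$, since there are at most two $v$-$S$ paths. The other two, extended through $N[u]$ to $u$, would either create a third $v$-$S$ path or close a cycle through $v$ with no edge in $G[S]$. With at most three colored neighbors and $k\ge 4$, a free color always exists.
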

\begin{proof}
Let $(\chi, u, B)$ be a valid $S$-profile.
Let $b = |B|$; we may assume that $\chi(B) = [b]$ (we permute the colors in $\chi$ if necessary).
Observe that $b \le p$.
Let $U \seq N[u] \setminus S$ be a set such that $|U| = k-b$ and every vertex in $U$ is a $\chi$-candidate ($U$ exists by Definition~\ref{def:fen-profile}).
Our algorithm will construct a $k$-$b^*$-coloring $\psi$ of $G$ such that $\chi \seq \psi$, $u$ is a $\psi$-$b^*$-vertex, and each vertex $v \in B \cup U$ is a $\psi$-$b$-vertex.

We begin by defining a coloring $\chi_1 \supseteq \chi$ so that $\dom(\chi_1) = S \cup U$ and $\chi_1(N[u]) = [k]$.
Let $U' = \{v \in U \sep N(v) \cap S \nsubseteq \{u\}\}$.
Observe that $|U'| \le 1$ since otherwise there would be an $S$-outer path of length at most 2, which would contradict Definition~\ref{def:fen-core}.
For the same reason, if $U' \ne \emptyset$, then $B = \emptyset$ (here we use the fact that $u \in S$ implies $u \in B$).

First, suppose that $U' = \{v\}$, and let $c \in \chi(N(v))$; in fact, $\chi(N(v)) = \{c\}$ because $v$ cannot have more than one neighbor in $S$.
Let $c_1, c_2 \in [k] \setminus \{c\}$ be distinct colors (they exist since $k \ge 4$).
Now we define $\chi_1(v) = c_1$, $\chi_1(u) = c_2$, and for $w \in U \setminus \{u ,v\}$, we define $\chi_1(w)$ arbitrarily so that $\chi_1(U) = [b+1, k]$.
Clearly, $\chi_1$ is proper, and every $w \in U \setminus \{v\}$ is a $\chi_1$-candidate.
Moreover, $v$ is a $\chi_1$-candidate since $\chi_1(u) \notin \chi(N(v))$.
Hence, all vertices in $B \cup U$ are $\chi_1$-candidates (recall that $B = \emptyset$ in this case).

Second, suppose that $U' = \emptyset$.
If $B = \emptyset$ or $u \in B$, then we define $\chi_1$ arbitrarily so that $\chi_1(U) = [b+1, k]$.
Otherwise, we have $u \notin B$ and $|B| = 1$ since $u$ cannot have more than one neighbor in $B \seq S$.
Let $\{v\} = B$.
Now we define $\chi_1(u)$ so that, if possible, $\chi_1(u) \notin \chi(N[v])$, and for $w \in U \setminus \{u\}$, we define $\chi_1(w)$ arbitrarily so that $\chi_1(U) = [b+1, k]$.
Again, $\chi_1$ is clearly proper, and every $w \in (B \cup U) \setminus \{v\}$ is a $\chi_1$-candidate.
Suppose for contradiction that $v$ is not a $\chi_1$-candidate.
In particular, we have $\chi_1(N[v]) \ne [k]$; see Definition~\ref{def:chi-candidates}.
Let $c \in [k] \setminus \chi_1(N[v])$.
Since $v$ is a $\chi$-candidate, there is a color $c' \in \chi(N[v]) \cap \chi_1(N[v] \cap U)$.
By Definition~\ref{def:fen-core}, $N(v) \cap U \seq \{u\}$ (otherwise there would an $S$-outer path of length 1).
Hence, $c' = \chi_1(u)$, which contradicts the construction of $\chi_1$ because $\chi_1(u)$ could be defined to be $c$ instead of $c'$.
Hence, all vertices in $B \cup U$ are $\chi_1$-candidates.

\medskip
Second, we define a partial coloring $\chi_2 \supseteq \chi_1$ such that $u$ is a $\chi_2$-$b^*$-vertex.
Let $v \in B \cup U$, $C_v = [k] \setminus \chi_1(N[v])$, and let $U_v \seq N(v)$ be such that $\chi_1(U_v) = \emptyset$ and $|U_v| = |C_v|$; informally, $v$ can be made into a $b$-vertex by assigning colors from $C_v$ to $U_v$.
We remark that $U_v$ exists because $v$ is a $\chi_1$-candidate.
Moreover, $U_v \cap U_w = \emptyset$ for all $w \in (B \cup U) \setminus \{v\}$ by Definition~\ref{def:fen-core}.
Let $U_v' = U_v \cap N(S \setminus \{v\})$ and observe that $|U_v'| \le 1$ by Definition~\ref{def:fen-core}.
If $U_v' = \emptyset$, then we assign colors in $C_v$ to vertices in $U_v$ arbitrarily.
Suppose that $w \in U_v'$ and observe that $v \notin N[S]$, $|N(w) \cap S| = 1$, and $N(v) \cap U \seq \{u\}$.
Let $x \in N(w) \cap S$ and let $c \in [k] \setminus \chi_1(\{u, v, x\})$ be a color; $c$ exists because $k \ge 4$.
Observe that $C_v = [k] \setminus \chi_1(\{u, v\})$; in particular, $c \in C_v$.
Now we define $\chi_2(w) = c$, and we assign the colors in $C_v \setminus \{c\}$ to vertices in $U_v \setminus \{w\}$ arbitrarily.
Hence, $\chi_2$ can be defined, and it is easy to see that every $v \in B \cup U$ is a $\chi_2$-$b$-vertex and $u$ is a $\chi_2$-$b^*$-vertex.

Suppose for contradiction that $wx \in E(G)$ and $c := \chi_2(w) = \chi_2(x)$.
Since $\chi_1$ is proper, we may assume that $w \in U_v$ for some $v \in B\cup U$ and $x \in N[B \cup U] \cup S$.
If $x \in N[B \cup U]$, then we obtain a contradiction with Definition~\ref{def:fen-core} (there would have to be a cycle of length at most 5 not contained in $G[S]$).
If $x \in S$, then $x \ne v$ (because $\chi_2(w) \in C_v$ by construction) and $w \in U_v'$, which means that we have a contradiction with the construction of $\chi_2$.
Hence, $\chi_2$ is proper.

\medskip
Finally, we define a total $k$-$b^*$-coloring $\psi \supseteq \chi_2$.
If $T$ is a connected component of $G$ such that $\chi_2(V(T)) = \emptyset$, then $T$ is a tree, and it is trivial to properly color $T$.
From now on, assume that there is no such component $T$.
We color the remaining vertices in a BFS order $\prec$ from $S_2 := \dom(\chi_2)$, i.e., if $v, w \in \overline{S_2}$ and $\dist(v, S_2) < \dist(w, S_2)$, then $v \prec w$.

Let $v \in \overline{S_2}$ and suppose that $\psi(w)$ is defined for all $w \prec v$.
Let $W = \{w \in N(v) \sep w \prec v$ or $w \in S_2\}$.
Suppose for contradiction that $w_1, w_2, w_3, w_4 \in W$ are distinct vertices.
For $i \in [4]$, let $P_i$ be a shortest $w_i$-$S_2$ path, and observe that $v \notin V(P_i)$.
By Definition~\ref{def:fen-core}, every cycle in $G$ must have an edge in $G[S]$, which means that $V(P_i) \cap V(P_j) = \emptyset$ for distinct $i,j\in [4]$.
Since there are at most two $v$-$S$ paths (again by Definition~\ref{def:fen-core}), we may assume, without loss of generality, that $V(P_i) \cap S = \emptyset$ for both $i \in [2]$.
Let us fix $i \in [2]$.
If $P_i$ has length 0, then we define $x_i = w_i$, and otherwise we define $x_i$ be the endpoint of $P_i$ distinct from $w_i$. Observe that $x_i$ belongs to $N[U]$.
Let $P_i'$ be the $v$-$u$ path obtained from $P_i$ by adding at most three edges as follows: we add $vw_i$ and if $x_i \ne u$, then either we add $x_iu$ if $x_i \in U$ or we add the two edges connecting $x_i$ and $u$.
If $V(P_i') \cap S \ne \emptyset$, then there would be three $v$-$S$ paths, namely $P_3$, $P_4$, and the $v$-$S$ subpath of $P_i'$, which would again be a contradiction.
Hence, $V(P_i') \cap S = \emptyset$, and the cycle obtained by combining $P_1'$ and $P_2'$ has no edge in $G[S]$, which is again contradiction.

We have proven $|W| \le 3$, which implies $|\chi_2(N(v))| \le 3$.
Since $k \ge 4$, there is a color which can be assigned to $v$.
After all such vertices $v$ are processed, we obtain the desired  $k$-$b^*$-coloring $\psi$.
\qed\end{proof}

Finally, we are ready to describe the desired algorithm.

\begin{theorem}
The \bscoloring problem, with input $(G, k)$, is \FPT parameterized by the feedback edge number $p$ of $G$, and it can be solved in time $2^{\ca O(p \log p)} \cdot n^{\ca O(1)}$, where $n= |V(G)|$.
\end{theorem}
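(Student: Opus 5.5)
The algorithm will enumerate all $S$-profiles, test each for validity, and answer YES exactly when some profile is valid; Lemma~\ref{lem:fen-find-coloring} then turns a valid profile into a $k$-$b^*$-coloring. Before that we handle small $k$. The fen-core machinery assumes $k \ge 4$. For $k \le 3$ the treewidth is at most $p+1$ (a forest plus $p$ edges), so Proposition~\ref{prop:ktw} runs in time $2^{\ca O(p)} \cdot n^2$.

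Next we bound $k$. Since $b^*(G) \le m^*(G)+1 \le m(G)$, we may answer NO if $k > m(G)$. By the results of~\cite{DBLP:conf/mfcs/Balaban26}, graphs of feedback edge number $p$ have only $\ca O(p)$ vertices of degree $\ge 3$, so $m(G)$ is bounded. The enumeration itself does not need this bound.

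The fen-core $S$ is computed via Lemma~\ref{lem:S-exists}, and $|S| = \ca O(p)$. The enumeration ranges over:
\begin{compactenum}
\item colorings $\chi\colon S\to[p]$, of which there are $p^{\ca O(p)} = 2^{\ca O(p\log p)}$;
\item vertices $u$, of which there are $n$;
\item sets $B \seq S \cap N[u]$, of which there are $2^{\ca O(p)}$.
\end{compactenum}
Validity of each profile is checked in polynomial time by computing redundancies and counting candidates in $N[u]\setminus S$. Soundness is exactly Lemma~\ref{lem:fen-find-coloring}.

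The remaining step is completeness, which is the main obstacle. Take a $k$-$b^*$-coloring $\psi$ with $b^*$-vertex $u$. Choose a set $B^*\seq N[u]$ containing one $b$-vertex of each color, and include $u$ itself as the representative of its own color. Set $B = B^* \cap S$; because of that choice, $u\in S$ implies $u \in B$.

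The coloring $\psi\upharpoonright S$ may use up to $k$ colors, possibly more than $p$. So we take $\chi$ to be $\psi\upharpoonright S$ composed with an injective renaming of $\psi(S)$ into $[p]$; this is possible since $|\psi(S)| \le |S| = p$. Properness and $|B| = |\chi(B)|$ are preserved because the renaming is injective.

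It then remains to check the candidate conditions. Every $\psi$-$b$-vertex $v$ is a $\chi$-candidate, because the colors of $N[v]$ outside $S$ contribute at most $|N[v]\setminus S|$ further distinct colors. Hence
\[
|N[v]\setminus S| + |\chi(N[v]\cap S)| \ge k .
\]
The subtle point is that $\red_\chi$ uses $\chi(N[v])$ with renamed colors; since renaming is injective, the count of distinct colors is unchanged. The vertices of $B^*\setminus S$ are $k-|B|$ distinct candidates in $N[u]\setminus S$, which supplies the last validity condition.

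The total running time is $2^{\ca O(p\log p)}n^{\ca O(1)}$.
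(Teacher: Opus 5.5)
Your proof is correct and follows the paper's argument: you handle $k \le 3$ via Proposition~\ref{prop:ktw}, enumerate $S$-profiles with soundness from Lemma~\ref{lem:fen-find-coloring}, and get completeness by restricting a $b^*$-coloring to $S$. Your choice $B = B^* \cap S$ is equivalent to the paper's, and your injective renaming of $\psi(S)$ into $[p]$ is a bit of care the paper skips when it simply sets $\chi = \psi \upharpoonright S$.

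The aside that $m(G)$ is bounded by a function of the feedback edge number is false. Trees already have arbitrarily many high-degree vertices, and by Theorem~\ref{thm:girth7} a tree can have arbitrarily large $b^*$-chromatic number. Since you never use this claim, simply delete it.
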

\begin{proof}
If $k \le 3$, then we use the tree-width-based algorithm from Proposition~\ref{prop:ktw}.
Since the tree-width of $G$ is at most $p+1$, the running time in this case is $2^{\ca O(p)} \cdot n^2$.
From now on, suppose that $k \ge 4$.

First, we compute a fen-core $S$ of $G$ using Lemma~\ref{lem:S-exists}.
Second, we iterate over all $S$-profiles $(\chi, u, B)$, and for each of them, we test whether it is valid.
If we find a valid $S$-profile, we compute a $k$-$b^*$-coloring of $G$ using Lemma~\ref{lem:fen-find-coloring}. Otherwise, we answer that no such coloring exists.

Let us argue for correctness of the algorithm.
If no $k$-$b^*$-coloring exists, then we obviously cannot find one.
Conversely, suppose that $\psi$ is a $k$-$b^*$-coloring of $G$.
Let $\chi = \psi \upharpoonright S$, let $u$ be a $\psi$-$b^*$-vertex, and let $B_0 = \{v \in S \cap N[u]\sep v$ is a $\psi$-$b$-vertex$\}$.
For every color $c \in \chi(B_0)$, let $v_c \in B_0 \cap \psi^{-1}(c)$, and let $B = \{v_c \sep c \in \chi(B_0)\}$.
Since $\psi$ is proper, $\chi$ is proper as well.
Observe that every $\psi$-$b$-vertex is a $\chi$-candidate and that $|B| = |\chi(B)|$.
Finally, $u$ is the unique $\psi$-$b$-vertex for color $\psi(u)$ in $N[u]$, which means that $u \in S$ implies $u \in B$.
Hence, $(\chi,u,B)$ is a valid $S$-profile, and a $\psi$-$b^*$-vertex is found by Lemma~\ref{lem:fen-find-coloring}.

By Lemmas~\ref{lem:S-exists} and~\ref{lem:fen-find-coloring}, the running time is \FPT if we show that the number of $S$-profiles $(\chi, u, B)$ is small.
By Definition~\ref{def:fen-core}, we know that $|S| \in \ca O(p)$.
Hence, there are $p^{\ca O(p)}$ choices for $\chi$, $n$ choices for $u$, and $2^{\ca O(p)}$ choices for $B$.
Hence, there are $2^{\ca O(p \log p)} \cdot n \cdot 2^{\ca O(p)}$ $S$-profiles, which concludes the proof.
\qed\end{proof}

\subsection{Neighborhood diversity}

We will need the following lemma. In a nutshell, the algorithm described in the lemma works by constructing an ILP instance based on $B$.

\begin{lemma}[{\cite[implicit in the proof of Theorem 16]{JaffkeLS23}}]\label{lem:nd}
Let $G$ be an $n$-vertex graph, let $\ca P = (P_1, \ldots, P_d)$ be an ND-partition of $G$, let $B \seq [d]$ be such that $P_i$ induces a clique in $G$ for every $i \in B$, and let $P_B = \bigcup_{i\in B} P_i$.
Given $G$, $\ca P$, $B$, and an integer $k$, it can be decided, in time $2^{\ca O(d \log d)} \log n$, whether there is a $k$-$b$-coloring $\chi$ of $G$ such that 
\begin{compactenum}
\item all vertices in $P_B$ are $\chi$-$b$-vertices, and
\item for every color $c \in [k]$, $P_B$ contains a $\chi$-$b$-vertex of color $c$.
\end{compactenum}
We say that $B$ is the \emph{$b$-certificate set} for $\chi$.
\end{lemma}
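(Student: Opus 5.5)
The plan is to encode the existence of $\chi$ as an integer linear program whose number of variables and constraints depends only on $d$, and then solve it with Lenstra's algorithm (in Kannan's or Frank--Tardos's improved form). The key observation is that, since $\ca P$ is an ND-partition, a proper coloring of $G$ is determined up to isomorphism by how its color classes meet the parts. For a color $c$, define its \emph{type} as the set $T_c = \{i \in [d] \sep \chi^{-1}(c) \cap P_i \neq \emptyset\}$. Properness forces two things: $T_c$ is an independent set in the quotient graph on $[d]$, and if $P_i$ is a clique then $|\chi^{-1}(c) \cap P_i| \le 1$. Independent parts, on the other hand, may contain many vertices of the same color.

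The first step is to introduce, for each admissible type $T \seq [d]$, a variable $x_T \ge 0$ counting the colors of type $T$, together with the following constraints.
\begin{compactenum}
\item $\sum_T x_T = k$.
\item For each clique part $P_i$: $\sum_{T \ni i} x_T = |P_i|$, since every vertex of the clique receives its own color.
\item For each independent part $P_i$: $1 \le \sum_{T \ni i} x_T \le |P_i|$ whenever $P_i \neq \emptyset$. Any such count can be realized, because the vertices of $P_i$ can be distributed surjectively onto the colors whose type contains $i$.
\end{compactenum}
The two conditions of the lemma are enforced by restricting the admissible types. All vertices of a part $P_i$ with $i \in B$ share the same closed neighborhood $P_i \cup \bigcup_{j \in N(i)} P_j$ (here we use that $P_i$ is a clique). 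So condition~1 holds exactly when every type $T$ with $x_T > 0$ meets $N[i]$ in the quotient graph, for every $i \in B$. Condition~2 holds exactly when every used type meets $B$. Given condition~1, this already yields a $b$-vertex of every color inside $P_B$, which also makes $\chi$ a $k$-$b$-coloring.

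The second step is correctness in both directions. From $\chi$ we read off the $x_T$ directly. Conversely, from a feasible $x$ we construct $\chi$ by assigning colors to types and then distributing the vertices of each part as described above.

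The main obstacle is the running time. The natural formulation has up to $2^d$ variables, and applying Lenstra naively gives a bound of the form $2^{2^{\ca O(d)}}\log n$, not $2^{\ca O(d\log d)}\log n$. My first attempt would be to shrink the variable set. The ILP has only $\ca O(d)$ constraints, so by a Carathéodory/Eisenbrand--Shmonin-type argument some feasible solution, if one exists, has support of size $\ca O(d)$. One could then try to guess that support cleverly, or reformulate with $\ca O(d)$ variables per part. Failing that, I would consult the exact counting in~\cite{JaffkeLS23}. In every case the numbers appearing in the ILP are at most $n$, which accounts for the $\log n$ factor.
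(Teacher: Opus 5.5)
Your ILP formulation is correct and matches the ILP-built-from-$B$ approach the paper alludes to. Restricting to types $T$ that are independent in the quotient graph $H$, with $T\cap B\neq\emptyset$ and $T\cap N_H[i]\neq\emptyset$ for every $i\in B$, encodes conditions~1 and~2 exactly; this uses that all vertices of a clique part share the same closed neighborhood. Bijections on clique parts and surjections on independent parts then recover $\chi$ from a feasible $x$.

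The gap is the running time, which is the real content of the lemma beyond FPT membership. You only obtain $2^{2^{\mathcal O(d)}}\log n$, and your fallback does not repair this. Suppose some feasible solution has support of size $s$; Eisenbrand--Shmonin gives $s=\mathcal O(d\log d)$ here. Guessing which $s$ of the up to $2^d$ columns form that support already costs $\binom{2^d}{s}$, which is $2^{\Omega(d^2)}$ as soon as $s=\Omega(d)$. So this route ends at roughly $2^{\mathcal O(d^2\log d)}\log n$. The alternative of ``$\mathcal O(d)$ variables per part'' is not substantiated.

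The missing observation is that the number of variables is the wrong measure for your program. Add slack variables for the two-sided bounds on independent parts. The program then has the form $Ax=b$, $x\ge 0$, with only $m\le 2d+1$ rows, all entries of $A$ in $\{-1,0,1\}$, and $\|b\|_\infty\le n$ (reject at once if $k>n$).

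For such programs you should use an algorithm for ILPs with few constraints, whose running time is exponential only in $m$ and the maximum coefficient $\Delta$ and linear in the number $N$ of columns. An example is the algorithm of Jansen and Rohwedder, which runs in time $(m\Delta)^{\mathcal O(m)}\log\|b\|_\infty+\mathcal O(Nm)$. With $m=\mathcal O(d)$, $\Delta=1$ and $N\le 2^{d}+2d$, this gives $d^{\mathcal O(d)}\log n+2^{\mathcal O(d)}=2^{\mathcal O(d\log d)}\log n$, which is the claimed bound. Lenstra-type algorithms applied to your $2^d$-variable formulation cannot give it, so this ingredient is what your proof is missing.
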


Using Lemma~\ref{lem:nd}, we are able to prove the following.

\begin{theorem}\label{thm:nd}
The \bscoloring problem, with input $(G, k)$, is \FPT parameterized by the neighborhood diversity $d$ of $G$. Given an optimal ND-partition of $G$, the algorithm runs in time $2^{\ca O(d \log d)} \log n + \ca O(n)$, where $n = |V(G)|$.
\end{theorem}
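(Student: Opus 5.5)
The plan is to reduce to Lemma~\ref{lem:nd} by enumerating candidate $b$-certificate sets $B \seq [d]$ that are compatible with some $b^*$-vertex. Fix an optimal ND-partition $\ca P = (P_1,\ldots,P_d)$.

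The first step is to restrict the possible certificate sets. Given a $k$-$b^*$-coloring $\chi$ with $b^*$-vertex $u$, choose one $b$-vertex of each color inside $N[u]$, with $u$ itself used for its own color. Let $B'$ be the set of indices of parts containing a chosen vertex. Two $b$-vertices of the same color are non-adjacent and thus lie in an independent part or in different parts, so I first need to pass to a set of parts in which every vertex is a $b$-vertex, as Lemma~\ref{lem:nd} requires.

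The standard argument from the $b$-coloring ND algorithm~\cite{JaffkeLS23} handles this. Whenever a part $P_i$ contains a $b$-vertex $v$ of color $c$, every vertex $v'$ of $P_i$ sees the same colors in $N(v') \setminus P_i$. If $P_i$ is a clique, every vertex of $P_i$ is then a $b$-vertex, since its closed neighborhood colors coincide with those of $v$. If $P_i$ is independent, then $\chi(N[v']) = \chi(N(v)) \cup \{\chi(v')\}$ and $\chi(N(v))$ already contains every color except $c$. So $v'$ is a $b$-vertex unless $\chi(v')$ is some $c' \ne c$ missing from $\chi(N(v))$, which is impossible because only $c$ is missing. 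Hence every vertex of $P_i$ is a $b$-vertex either way.

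The obstacle is that Lemma~\ref{lem:nd} requires the parts in $B$ to be cliques. I would avoid this by refining. An independent part $P_i$ in which all vertices are $b$-vertices can be split in the proof into singleton-like classes: each single vertex forms a clique part, and splitting an independent part into a singleton plus the rest preserves the ND-partition property. Doing this for one chosen vertex per relevant independent part gives a partition with at most $2d$ parts, which keeps the running time $2^{\ca O(d\log d)}\log n$.

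Next comes the locality condition. All chosen parts must lie in $N[u]$. In an ND-partition, adjacency between parts is uniform, so whether a part lies within $N[u]$ depends only on the part $P_j$ of $u$, plus $u$'s own part when it is a clique. So we can enumerate the part $j$ of $u$ (at most $d$ choices) and, for the refined partition, the sets $B$ consisting of parts complete to $P_j$. We also allow the singleton $\{u\}$, or the whole of $P_j$ if $P_j$ is a clique. Each such $B$ guarantees that any coloring certified by Lemma~\ref{lem:nd} has all $k$ colors represented by $b$-vertices in $N[u]$, with $u$ itself being a $b$-vertex when $P_j$ (or its singleton) is in $B$. One detail needs care: $u$ must see $b$-vertices of the other colors in $N(u)$. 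If the witness of some color $c \ne \chi(u)$ is $u$'s own singleton or a vertex of $u$'s clique part, that vertex is still adjacent to $u$, except when it is $u$ itself, whose color is $\chi(u)$. So the condition holds.

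Conversely, the $B$ constructed from an actual $b^*$-coloring has exactly this form. There are $2^{\ca O(d)}$ choices in total, each solved in $2^{\ca O(d\log d)}\log n$ time. The $\ca O(n)$ term covers computing part sizes and the refinement. The main step needing care is the refinement argument that keeps Lemma~\ref{lem:nd} applicable with independent parts.
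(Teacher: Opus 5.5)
Your proposal is correct and follows essentially the same route as the paper: split each independent part into a singleton and the rest (so $d' \le 2d$), enumerate sets $B$ of clique parts that all lie in $N[u]$ for a vertex $u$ of a chosen ``center'' part, and invoke Lemma~\ref{lem:nd} on each. The differences are minor. You justify a fixed designated singleton (which the algorithm must choose before knowing $\chi$) by observing that an independent part containing a $b$-vertex of color $c$ is entirely colored $c$ and consists of $b$-vertices, so its designated vertex can serve as the witness or as $u$; the paper instead shows that no two witnesses share an independent part and then relabels twins.
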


\begin{proof}
Let $\ca P = (P_1, \ldots, P_d)$ be an ND-partition of $G$ with the smallest number of parts.
As in~\cite[Theorem 16]{JaffkeLS23}, we define $\ca P' = (P_1', \ldots, P'_{d'})$ to be the partition obtained from $\ca P$ by splitting every part $P_i$ that induces an independent set of size $s \ge 2$ into two parts: one part of size $s-1$ and one of size $1$.
Observe that $\ca P'$ is an ND-partition of $G$ and that $d' \le 2d$.

Let $H$ be the quotient graph $G/\ca P'$, i.e., $V(H) = [d']$ and for distinct $i,j \in [d']$, we have $ij \in E(H)$ if and only if there is an edge (or, equivalently, all possible edges) between $P_i'$ and $P_j'$ in $G$.
We say that a set $B \seq [d']$ is \emph{valid} if $P_i'$ induces a clique in $G$ for every $i \in B$ and there is an index $i \in B$ such that $ij \in E(H)$ for all $j \in B \setminus \{i\}$.
The algorithm iterates over all valid sets $B$ and for each of them, it decides using Lemma~\ref{lem:nd} whether there is a $k$-$b$-coloring $\chi$ of $G$ with a $b$-certificate set $B$.
If such a coloring exists for some valid set $B$, we output YES, and otherwise we output NO.

For a set $B \seq [d']$, we will use a shorthand $P_B = \bigcup_{i\in B} P_i'$.
First, suppose that for some valid set $B$, we find a $k$-$b$-coloring $\chi$ of $G$ with $b$-certificate set $B$.
Let $i \in B$ be such that $ij \in E(H)$ for all $j \in B \setminus \{i\}$, and let $u \in P_i'$.
Since $P_i'$ induces a clique in $G$, we have $P_B \seq N_G[u]$.
Since every color $c \in [k]$ has a $\chi$-$b$-vertex in $P_B$, we obtain that $u$ is a $\chi$-$b^*$-vertex, as desired.

Conversely, let $\chi$ be a $k$-$b^*$-coloring of $G$.
Let $u_k$ be a $\chi$-$b^*$-vertex; we may assume that $\chi(u_k) = k$.
Let $u_1, \ldots, u_{k-1} \in N(u_k)$ be such that $\chi(u_i) = i$ and $u_i$ is a $\chi$-$b$-vertex for every $i \in [k-1]$.
For every $j \in [k]$, let $i_j \in [d]$ be such that $u_j \in P_{i_j}$.
Suppose for contradiction that $i_j = i_\ell$ for some $1 \le j < \ell \le k$, and $P_{i_j}$ induces an independent set in $G$.
Since $u_j$ is a $\chi$-$b$-vertex, there is a vertex $v \in N(u_j) \cap \chi^{-1}(\ell)$.
By definition of an ND-partition, $u_{\ell}v \in E(G)$, which is a contradiction since $\chi$ is proper.
Hence, no such indices $j, \ell \in [k]$ exist.

For every $j \in [k]$, let $i_j' \in [d']$ be such that $u_j \in P_{i_j'}'$.
If $P_{i_j}$ induces an independent set in $G$ for some $j \in [k]$, then by definition of $\ca P'$, we may assume that $P_{i_j'}' = \{u_j\}$ (otherwise we may simply relabel vertices in $P_{i_j}$).
Hence, $P_{i_j'}'$ induces a clique in $G$ for every $j \in [k]$.
Let $B = \{i_j' \sep j \in [k]\}$.
Now it suffices to show that $B$ is a $b$-certificate set for $\chi$.
Suppose for contradiction that a vertex $v \in P_B$ is not a $\chi$-$b$-vertex, i.e., there is a color $c \in [k] \setminus \chi(N(v))$.
By definition of $B$, there is an index $j \in [k]$ such that $v$ and $u_j$ are twins and $vu_j \in E(G)$.
Hence, $c \ne j$ and $c \notin \chi(N(u_j))$, which is a contradiction since $u_j$ \emph{is} a $\chi$-$b$-vertex.
Hence, condition~1 stated in Lemma~\ref{lem:nd} is satisfied.
Moreover, condition~2 holds since $P_B$ contains a $\chi$-$b$-vertex for every color $c \in [k]$, namely $u_c$.
Hence, $B$ is indeed a $b$-certificate set for $\chi$.

We have shown the correctness of the algorithm. Finally, let us prove the bound on the running time. Observe that computing $\ca P'$ takes time $\ca O(n)$. Moreover, there are at most $2^{d'} \in 2^{\ca O(d)}$ valid sets $B$.
For each such set, Lemma~\ref{lem:nd} is used once, which concludes the proof.
\qed\end{proof}

\subsection{Twin cover number}

Let us fix a graph $G$ and an integer $k$. Let $S$ be a twin cover of $G$ and let $\chi$ be a coloring of $G[S]$.
For $A \seq S$, let $\ca K_A$ be the set of maximal cliques in $G-S$ whose neighborhood in $S$ is $A$, and let $c_A^\chi$ be the number of colors used on $A$ by $\chi$.
We say that $\chi$ is \emph{extendable} if it is proper and $|K| \le k - c_A^\chi$ for every clique $K \in \ca K_A$.
Assume that $\chi$ is extendable since otherwise there is no proper coloring of $G$ extending $\chi$.
The following statement is easy to observe.

\begin{observation}[{\cite[Observation 18, typo fixed]{JaffkeLS23}}]\label{obs:tc-b}
Let $K \in \ca K_A$. A vertex $v \in K$ is a $b$-vertex of some $k$-$b$-coloring of $G$ that extends $\chi$ if and only if $|K| = k - c^\chi_A$.
\end{observation}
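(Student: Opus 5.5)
The plan is to show that $N_G[v]$ is determined by the pair $(K, A)$. From that it follows that, in \emph{every} proper $k$-coloring $\psi \supseteq \chi$, the number of colors seen by $v$ is exactly $c^\chi_A + |K|$. Both directions of the equivalence then come from the single identity $|\psi(N[v])| = c^\chi_A + |K|$.

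\textbf{Step 1 (structure of $G-S$).} I will show that every connected component of $G-S$ is a clique, and that all its vertices have the same neighborhood in $S$. Suppose $vw, wx \in E(G-S)$ with $x \ne v$. The edge $vw$ avoids $S$, so by the definition of a twin cover $v$ and $w$ are twins. Then $x \in N(w)\setminus\{v\} = N(v)\setminus\{w\}$, so $vx \in E(G)$. Repeating this along paths shows that each component is a clique. Every edge inside a component joins twins, so all vertices of the component have the same neighbors in $S$. Consequently the maximal cliques of $G-S$ are exactly these components. For $v \in K \in \ca K_A$ this gives $N_G[v] = K \cup A$, where $K \cap S = \emptyset$ and every vertex of $K$ is adjacent to every vertex of $A$.

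\textbf{Step 2 (counting colors).} Let $\psi$ be any proper $k$-coloring of $G$ with $\chi \seq \psi$. The vertices of $K$ receive pairwise distinct colors because $K$ is a clique. None of these colors lies in $\chi(A)$, because $K$ is complete to $A$. Hence $|\psi(N[v])| = |\psi(K)| + |\chi(A)| = |K| + c^\chi_A$. Therefore $v$ is a $\psi$-$b$-vertex, i.e.\ $\psi(N[v]) = [k]$, if and only if $|K| = k - c^\chi_A$. The consequences for the two directions are as follows.
\begin{itemize}
\item \textbf{``Only if''.} If $v$ is a $b$-vertex of some $k$-$b$-coloring extending $\chi$, that coloring is a proper extension of $\chi$, so the identity gives $|K| = k - c^\chi_A$.
\item \textbf{``If''.} Assume $|K| = k - c^\chi_A$. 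Then $v$ is a $b$-vertex in \emph{every} $k$-$b$-coloring extending $\chi$, and indeed in every proper $k$-coloring extending $\chi$. So $v$ is a $b$-vertex of some $k$-$b$-coloring extending $\chi$ as soon as such a coloring exists, which is the standing situation in which this observation is applied.
\end{itemize}

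\textbf{Step 3 (existence of a proper extension).} To make clear that the ``if'' direction is not vacuous with respect to properness, I will check that a proper extension of $\chi$ always exists because $\chi$ is extendable. Each clique $K' \in \ca K_{A'}$ can be colored independently of the others, since distinct components of $G-S$ are non-adjacent. To color $K'$, use distinct colors from $[k]\setminus\chi(A')$; there are $k - c^\chi_{A'} \ge |K'|$ such colors by extendability.

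The only non-routine point is Step 1, namely that the twin-cover condition forces the components of $G-S$ to be cliques with a common neighborhood in $S$. Once this is in place, the equivalence is just the color count of Step 2.
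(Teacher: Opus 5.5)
Your proof is correct and is exactly the argument the paper leaves implicit; the paper only calls the statement easy to observe. The twin-cover condition forces $N[v] = K \cup A$ with $K$ complete to $A$, so every proper $\psi \supseteq \chi$ satisfies $|\psi(N[v])| = |K| + c^\chi_A$. Your caveat on the ``if'' direction is also justified: read literally, it additionally needs some $k$-$b$-coloring extending $\chi$ to exist. For example, take $G = P_3$, let $S$ be the two leaves colored $1,2$, and let $k=3$; then $|K| = 1 = k - c^\chi_A$, but $P_3$ has no $3$-$b$-coloring. The paper only ever uses the ``only if'' direction.
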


Let $K_A^{max} \in \ca K_A$ be a clique of maximum cardinality.
The first reduction rule we need is the following.

\begin{rr}[{\cite[Reduction Rule 19]{JaffkeLS23}}]\label{rr:1}
If there exists $A \seq S$ such that $|\bigcup \ca K_A| \ge k - c_A^\chi + 1$, then delete a vertex $v \in \bigcup \ca K_A \setminus K_A^{max}$ from the graph.
\end{rr}

The following lemma shows the correctness of Reduction rule~\ref{rr:1}.

\begin{lemma}\label{lem:rr1}
If $G'$ is the graph obtained from $G$ by applying Reduction rule~\ref{rr:1}, then $G$ admits a $k$-$b^*$-coloring extending $\chi$ if and only if $G'$ does.
\end{lemma}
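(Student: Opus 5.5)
We prove both directions separately, using that the deleted vertex $v$ lies in some clique $K \in \ca K_A$ with $K \ne K_A^{max}$, and that the reduction rule does not change $G[S]$, $\chi$, or $c_A^\chi$.

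For the direction from $G'$ to $G$, let $\psi'$ be a $k$-$b^*$-coloring of $G'$ extending $\chi$, with $b^*$-vertex $u$. We extend $\psi'$ to $v$. The neighborhood of $v$ is $A \cup (K\setminus\{v\})$. Because $\chi$ is extendable, $|K| \le k - c_A^\chi$, so these neighbors use at most $c_A^\chi + |K| - 1 \le k-1$ colors, and some color is free for $v$. Adding $v$ changes only the neighborhoods of vertices in $N(v) = A \cup (K \setminus \{v\})$, and only by adding a color. Therefore every $\psi'$-$b$-vertex remains a $b$-vertex, and $u$, together with its $b$-vertex neighbors, still certifies the $b^*$ property. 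We also check that $u \ne v$ is automatic, since $u \in V(G')$.

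For the direction from $G$ to $G'$, let $\psi$ be a $k$-$b^*$-coloring of $G$ extending $\chi$, with $b^*$-vertex $u$ and $b$-vertices $u_c \in N[u]$, one for each color $c$. Restrict $\psi$ to $G'$. Three things can break.
\begin{compactenum}
\item $v$ itself is some $u_c$, or $v = u$.
\item Some $b$-vertex loses the color $\psi(v)$ from its neighborhood.
\item The restriction is no longer a surjective $k$-coloring.
\end{compactenum}
The plan is to recolor inside $\bigcup \ca K_A$ so that $v$ becomes inessential, using the hypothesis $|\bigcup\ca K_A| \ge k - c_A^\chi + 1$ and the maximality of $K_A^{max}$. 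All cliques in $\ca K_A$ have the same neighborhood $A$ and are pairwise nonadjacent. So any permutation of the colors of $[k]\setminus\psi(A)$ applied inside a single clique keeps the coloring proper and preserves the $b$-vertex status of every vertex outside that clique. By Observation~\ref{obs:tc-b}, a $b$-vertex in $\ca K_A$ must lie in a clique of size exactly $k - c_A^\chi$, so $K_A^{max}$ has this size whenever any clique of $\ca K_A$ contains a $b$-vertex.

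The key step is replacement. If $v$ (or $u$) plays a role, move that role into $K_A^{max}$. If $v$ is in $K$ and $|K| = k - c_A^\chi = |K_A^{max}|$, then $K$ and $K_A^{max}$ are isomorphic with identical neighborhoods. We can swap the colorings of $K$ and $K_A^{max}$, or, if $K_A^{max}$ is already used, map $v$'s role to a twin in the same clique. The aim is that every $u_c$ and $u$ lie outside $K \setminus \{v'\}$ for some chosen vertex $v'$, and we relabel so that the deleted vertex is that $v'$. This is justified because $G'$ is defined only up to the choice of $v$ in $\bigcup\ca K_A\setminus K_A^{max}$, and all such choices give isomorphic graphs whenever the cliques involved have equal sizes. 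Otherwise, if $|K| < k - c_A^\chi$, then vertices of $K$ are not $b$-vertices, so they are neither any $u_c$ nor $u$, because a $b^*$-vertex is a $b$-vertex.

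For item 2, any $b$-vertex $w$ adjacent to $v$ lies in $A$, because vertices of $K$ are handled above. Here we use $|\bigcup\ca K_A| \ge k-c_A^\chi+1$. The colors available on $\bigcup\ca K_A$ number $k - c_A^\chi$, and $K_A^{max}$ together with the other cliques provide more vertices than that. If $K_A^{max}$ already uses all available colors, then $\psi(v)$ also appears on $K_A^{max} \subseteq N(w)$. If it does not, we recolor $K_A^{max}$ by a permutation so that it contains $\psi(v)$. This is possible if $|K_A^{max}| \ge 1$, choosing the permutation to keep needed colors, and it is the delicate point. The same covering argument shows all $k$ colors remain used, which handles item 3.

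The main obstacle is the bookkeeping in this last step: guaranteeing simultaneously, for every $b$-vertex in $A$, that each color lost by deleting $v$ reappears in $N(w)\cap \bigcup\ca K_A$. I would handle it by first normalizing $\psi$ so that colors in $\bigcup \ca K_A$ are assigned greedily to $K_A^{max}$ first. I would then show that, under the rule's hypothesis, $v$'s color is then already present on $K_A^{max}$, following the argument for \cite[Reduction Rule 19]{JaffkeLS23}.
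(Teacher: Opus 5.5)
Your direction from $G'$ to $G$ is correct and matches the paper's argument. The gap is in the direction from $G$ to $G'$, at your item~2. You leave it as ``the delicate point'', and the way you propose to settle it does not work.

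Recoloring $K_A^{max}$ by a permutation so that it contains $d=\psi(v)$ can remove a color that a $b$-vertex $w\in A$ sees only once. For example, take available colors $\{1,2,3,4\}$, so $k-c_A^\chi=4$. Let $\mathcal K_A$ consist of $K_A^{max}$ colored $\{1,2\}$, a clique $K$ colored $\{3,4\}$ with $v$ its vertex of color $3$, and a single vertex of color $4$. Then $|\bigcup\mathcal K_A|=5$, so the rule applies. If $w$ sees $1,2,3$ only inside $\bigcup\mathcal K_A$, every recoloring of $K_A^{max}$ that introduces $3$ removes $1$ or $2$ from $N(w)$.

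A normalization that fills $K_A^{max}$ ``greedily first'' does not force $d$ onto $K_A^{max}$ either; here $K_A^{max}$ has room for only two of the four colors. Arranging it would require choosing the normalization with $v$ in mind, which is exactly the argument that is missing.

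What is missing is a single pigeonhole recoloring \emph{outside} $K_A^{max}$. Start with your relocation step, stated cleanly:
\begin{itemize}
\item A $b$-vertex in some $K'\in\mathcal K_A\setminus\{K_A^{max}\}$ forces $|K'|=|K_A^{max}|=k-c_A^\chi$. Then every vertex of $K_A^{max}$ is a $b$-vertex with color set $[k]\setminus\psi(A)$, so you can move $u$ and all roles there at once.
\item Your fallback of handing $v$'s role to a twin in $K$ fails when $u\in K$, since then every vertex of $K$ is needed.
\item Relabeling the deleted vertex via an isomorphism is unnecessary.
\end{itemize}

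Now restrict $\psi$ to $G'$. Any chosen $u_c$ that stops being a $b$-vertex lies in $A$ and has lost exactly $d$. Moreover, $v$ was the unique vertex of color $d$ in $N(u_c)\supseteq\bigcup\mathcal K_A$.

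Only $k-c_A^\chi$ colors are available on $\bigcup\mathcal K_A$, so two vertices $x,y$ there share a color. They lie in different cliques, so take $x\notin K_A^{max}$. Uniqueness of $d$ gives $x,y\neq v$, and also that no vertex of $\bigcup\mathcal K_A\setminus\{v\}$ has color $d$.

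Recolor $x$ to $d$. This is proper and restores $d$ to every affected $u_c$. It keeps $\psi(x)$ visible to every vertex of $A$ through $y$, and it changes no chosen $b$-vertex. So $u$ remains a $b^*$-vertex, which also disposes of your item~3. In the example above, this recolors the lone vertex of color $4$ to $3$.
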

\begin{proof}
Let $v \in V(G)$ be the vertex deleted by Reduction rule~\ref{rr:1}, and let $A \seq S$ and $K \in \ca K_A$ be such that $v \in K$.
First, suppose that there is a $k$-$b^*$-coloring $\psi'$ of $G'$ that extends $\chi$.
Since $\chi$ is extendable, there are $|K \setminus \{v\}| + c_A^\chi \le k-1$ colors used on $N(v)$, which means that there is a color $c \in [k] \setminus \psi'(N(v))$.
Define $\psi = \psi' \cup \{(v, c)\}$.
Since every $\psi'$-$b^*$-vertex is also a $\psi$-$b^*$-vertex, $\psi$ is a $k$-$b^*$-coloring of $G$, as desired.

Second, suppose that there is a $k$-$b^*$-coloring $\psi$ of $G$ that extends $\chi$, and let $u \in V(G)$ be a $\psi$-$b^*$-vertex: if possible, choose $u$ so that for every color $c \in [k]$, there is a $\psi$-$b$-vertex $u_c \in N[u]$ such that $\psi(u_c) = c$ and $u_c \notin \bigcup \ca K_A \setminus K_A^{max}$.
Suppose for contradiction that \[C := \{c \in [k] \sep u_c \in K_c \text{ for some }K_c \in \ca K_A \setminus \{K_A^{max}\}\} \ne \emptyset.\]
Let us fix $c \in C$.
Since $u_c$ is a $\psi$-$b$-vertex, we have $|K_c| = k - c^\chi_A$ by Observation~\ref{obs:tc-b}.
Hence, $|K_c| = |K_A^{max}|$, so we have $\psi(K_A^{max}) = \psi(K_c) = [k] \setminus \psi(A)$.
Hence, every color $d \in \psi(K_c)$ has a $b$-vertex also in $K_A^{max}$.
Now either $u \in A$ or $u \in K_c$.
In the former case,  $u_c$ could have been chosen in $K_A^{max}$,
and in the latter case, $u$ and $u_d$ for every $d \in C$ could have been chosen in $K_A^{max}$.
In both cases, we obtain a contradiction.
Hence, for all $c \in [k]$, $u_c \notin \bigcup \ca K_A \setminus K_A^{max}$ and, in particular, $u_c \notin K$.
Let $U = \{u_c \sep c \in [k]\}$ and observe that $v \notin U$ because $v \in K$.

Let $\psi' = \psi \upharpoonright V(G')$.
If $u$ is a $\psi'$-$b^*$-vertex, then we are done.
Hence, suppose that it is not, i.e., $C := \{c \in [k] \sep u_c$ is not a $\psi'$-$b$-vertex$\} \ne \emptyset$.
Let $d = \psi(v)$.
Since $v \notin U$, we know that for every $c \in C$, $v$ is the unique vertex colored with $d$ by $\psi$ in $N(u_c)$.
Let $c \in C$ be a color.
Since $u_c \notin K$, we have $u_c \in A$.
Since $|\bigcup \ca K_A| \ge k - c_A^\chi + 1$ (see Reduction rule~\ref{rr:1}), we may use the pigeonhole principle to deduce that there are two vertices $x, y \in \bigcup \ca K_A$ such that $\psi(x) = \psi(y)$.
Since $\psi$ is proper, we may assume that $x \notin K_A^{max}$.
Let $\psi'' = \psi'[x \mapsto d]$.
Clearly, $u_c$ is a $\psi''$-$b$-vertex for every $c \in C$.
Moreover, for $c \in [k] \setminus C$, we have $u_c \notin \bigcup \ca K_A \setminus K_A^{max}$, which means that $u_c$ is a $\psi''$-$b$-vertex as well.
Hence, $u$ is a $\psi''$-$b^*$-vertex, as desired.
Finally, $\psi''$ is proper because $\bigcup \ca K_A \seq N(u_c)$ for any $c \in C$.
\qed\end{proof}

The second reduction rule is the following.

\begin{rr}[{\cite[Reduction Rule 22]{JaffkeLS23}}]\label{rr:2}
For some $A \seq S$, delete the cliques in $\ca K_A \setminus \{K_A^{max}\}$ and add a new clique of size $|\bigcup \ca K_A \setminus K_A^{max}|$ whose neighborhood in $G$ is exactly $A$.
\end{rr}

The following lemma shows the correctness of Reduction rule~\ref{rr:2}.

\begin{lemma}\label{lem:rr2}
If Reduction rule~\ref{rr:1} is no longer applicable to $G$ and $G'$ is the graph obtained from $G$ by applying Reduction rule~\ref{rr:2}, then $G$ admits a $k$-$b^*$-coloring extending $\chi$ if and only if $G'$ does.
\end{lemma}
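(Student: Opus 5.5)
The plan is to set up a correspondence between colorings of $G$ and of $G'$ that keeps the colors on $S$ and on $K_A^{max}$ unchanged. Write $R = \bigcup \ca K_A \setminus K_A^{max}$ and let $K'$ be the new clique in $G'$, so $|K'| = |R|$. Because Reduction rule~\ref{rr:1} is not applicable, $|\bigcup \ca K_A| \le k - c_A^\chi$, and hence $|R| \le k - c_A^\chi - |K_A^{max}|$.

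\textbf{Step 1: properness.} In any proper coloring $\psi$ extending $\chi$, all of $\bigcup \ca K_A$ sees exactly the colors $\chi(A)$ outside its own clique. The cliques of $\ca K_A$ are pairwise non-adjacent and have the same neighborhood $A$. So we may recolor $R$ so that all its vertices receive pairwise distinct colors from $[k] \setminus (\chi(A) \cup \psi(K_A^{max}))$. Enough colors exist by the bound on $|R|$ above. The recolored assignment is proper whether $R$ is a union of cliques (in $G$) or the single clique $K'$ (in $G'$). This gives a bijective transfer of colorings: identify $R$ with $K'$ and keep all other colors.

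\textbf{Step 2: $b$-vertices.} By Observation~\ref{obs:tc-b}, a vertex of a clique $K \in \ca K_A$ can be a $b$-vertex only if $|K| = k - c_A^\chi$. In that case $K$ alone already exhausts the budget $|\bigcup \ca K_A| \le k-c_A^\chi$, so $K = K_A^{max}$ and $R = \emptyset$. Then the rule does nothing apart from possibly renaming, and the statement is trivial.

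Otherwise no vertex of $\bigcup\ca K_A$ or of $K'$ is a $b$-vertex in either graph. The only remaining effect of the recoloring on $b$-vertices concerns vertices $a \in A$, whose neighborhood changes in which colors appear on $R$ versus $K'$. I would therefore first take a $k$-$b^*$-coloring $\psi$ of $G$ with $b^*$-vertex $u$ and witnesses $u_c$. Then I choose the colors on $K'$ to be exactly $\psi(R)$. This needs $|\psi(R)| \le |R|$; if there are fewer distinct colors, I pad with unused colors from $[k]\setminus(\chi(A)\cup\psi(K_A^{max}))$. This guarantees that every $a\in A$ sees a superset of its old colors. The vertices $u$ and $u_c$ lie outside $R$: they are in $S$ or in cliques of other neighborhood sets, or possibly $u\in K_A^{max}$. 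Their neighborhoods therefore keep all their colors, so $u$ remains a $b^*$-vertex in $G'$.

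\textbf{Step 3: the converse.} Given a $b^*$-coloring of $G'$, the vertices of $K'$ carry $|R|$ distinct colors. I distribute these colors over the cliques of $\ca K_A\setminus\{K_A^{max}\}$ injectively, which is possible since their total size is $|R|$. Then every $a\in A$ sees the same color set as before, and all other vertices are unaffected.

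The main obstacle is Step 2, in particular making sure $\psi(R)$ fits into $K'$ properly. This relies on $\psi(R)$ being disjoint from $\chi(A)$ and $\psi(K_A^{max})$. Disjointness from $\chi(A)$ is automatic. Disjointness from $\psi(K_A^{max})$ is not, so I would use the padding argument: the non-$b$ status of all clique vertices means only the color sets seen by vertices of $A$ matter, and $K_A^{max}$ already contributes its colors to each such vertex.
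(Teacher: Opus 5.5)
Your proposal is correct and follows the paper's argument. You identify $\bigcup \ca K_A \setminus K_A^{max}$ with the new clique, and you use Observation~\ref{obs:tc-b} together with $|\bigcup \ca K_A| \le k - c_A^\chi$ to see that none of these vertices can be $b$-vertices, so the $G'$-to-$G$ direction is immediate. In the other direction you recolor them with distinct colors avoiding $\psi(A)$. Your requirement that $K'$ keep the colors $\psi(R)\setminus\psi(K_A^{max})$ makes explicit a step the paper calls ``clearly'': an arbitrary recoloring could strip a $b$-vertex in $A$ of a color it saw only on $R$. The disjointness from $\psi(K_A^{max})$ you list as the main obstacle is not needed, since $K'$ and $K_A^{max}$ are non-adjacent in $G'$.
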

\begin{proof}
We may assume that $V(G) = V(G')$ and $E(G) \seq E(G')$.
Let $A \seq S$ and let $W = \bigcup \ca K_A \setminus K_A^{max}$.
In addition, observe that by Observation~\ref{obs:tc-b}, no vertex in $W$ may be a $b$-vertex in any coloring extending $\chi$. Hence, if $\psi$ is a $k$-$b^*$-coloring of $G'$, then $\psi$ is also a $k$-$b^*$-coloring of $G$.

On the other hand, assume that $\psi$ is a $k$-$b^*$-coloring of $G$.
Since Reduction rule~\ref{rr:1} is no longer applicable to $G$, we know that $|\bigcup \ca K_A| \le k - c_A^\chi$.
Hence, we may recolor vertices in $W$ so that each of them has a different color not present in $\psi(A \cup K_A^{max})$.
The obtained coloring $\psi'$ is clearly a $k$-$b^*$-coloring of $G'$.
\qed\end{proof}

\begin{lemma}[{\cite[Lemma 23]{JaffkeLS23}}]\label{lem:tcn-nd}
When Reduction rules~\ref{rr:1} and~\ref{rr:2} are no longer applicable, the neighborhood diversity of $G$ is at most $2^{t+1} + t$.
\end{lemma}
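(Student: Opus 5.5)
We need to bound the neighborhood diversity of $G$ once neither reduction rule applies, where $t = |S|$ is the twin-cover size. The plan is to exhibit an explicit ND-partition with at most $2^{t+1} + t$ parts.

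\textbf{The partition.} Put each vertex of $S$ into its own singleton part; this gives $t$ parts. Next, for every $A \seq S$ with $\ca K_A \neq \emptyset$, take two parts:
\begin{compactenum}
\item the clique $K_A^{max}$;
\item the union $\bigcup \ca K_A \setminus K_A^{max}$.
\end{compactenum}
After Reduction rule~\ref{rr:2} has been applied for this $A$, the second set is a single clique, or it is empty. Since there are at most $2^t$ choices of $A$, we get at most $2 \cdot 2^t + t = 2^{t+1} + t$ parts.

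\textbf{Checking it is an ND-partition.} First, each part induces a clique or is a singleton. For parts inside $G - S$ this follows from maximality and the structure of $G-S$. Because $S$ is a twin cover, every edge of $G-S$ joins two twins. Hence each connected component of $G-S$ is a clique of pairwise twins, and all its vertices have the same neighborhood $A$ in $S$. So the components of $G-S$ are exactly the cliques in $\bigcup_A \ca K_A$.

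Second, consider two distinct parts.
\begin{compactenum}
\item Two cliques taken from $G-S$ are different components of $G-S$, so there are no edges between them.
\item A clique with neighborhood $A$ is complete to each singleton $\{s\}$ with $s \in A$ and anticomplete to every other singleton.
\item Two singletons trivially satisfy the all-or-nothing condition.
\end{compactenum}
So this is an ND-partition with the required number of parts.

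\textbf{Main obstacle.} The only subtle point is that Reduction rule~\ref{rr:2} must actually have been applied exhaustively, for every $A$. Otherwise $\bigcup \ca K_A \setminus K_A^{max}$ could consist of several non-adjacent cliques, and then it is neither a clique nor an independent set. The hypothesis ``no longer applicable'' should be read as guaranteeing that $\ca K_A$ has at most two members for every $A$. We also need the new clique added by the rule to be a maximal clique of $G-S$ with neighborhood exactly $A$. This holds because the rule attaches it only to $A$, so it is again a component of $G-S$.
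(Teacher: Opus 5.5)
Your proposal is correct and is essentially the standard argument behind this bound; the paper itself imports Lemma~\ref{lem:tcn-nd} from \cite{JaffkeLS23} without proof. You use that once Reduction rule~\ref{rr:2} is exhausted each $\ca K_A$ has at most two members, that every component of $G-S$ is a clique module with uniform neighborhood in $S$, and that adding the $t$ singletons of $S$ gives at most $2^{t+1}+t$ parts. Your description of $G-S$ also silently needs $S$ to remain a twin cover of the reduced graph, so state that both rules preserve this: Reduction rule~\ref{rr:1} only deletes vertices from twin cliques, and the clique added by Reduction rule~\ref{rr:2} consists of pairwise twins with neighborhood exactly $A$.
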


\begin{theorem}
The \bscoloring problem, with input $(G, k)$, is \FPT parameterized by the twin cover number $t$ of $G$, and it can be solved in time $2^{2^{\ca O(t)}} n + \ca O(m)$, where $n = |V(G)|$ and $m = |E(G)|$.
\end{theorem}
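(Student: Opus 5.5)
The approach is to follow the template of the twin cover algorithm for \bcoloring in~\cite{JaffkeLS23}, plugging in the $b^*$-versions of the ingredients proved above.

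\textbf{Step 1: compute a twin cover and guess its coloring.} We compute a minimum twin cover $S$ of $G$ with $|S| = t$ in time $2^{\ca O(t)}\cdot n + \ca O(m)$ by standard methods. We may assume $k \le n$, since otherwise we answer NO. We then enumerate all proper colorings $\chi$ of $G[S]$. Only the partition of $S$ into color classes matters, together with which classes receive "new" colors relative to the symmetric remainder, so up to permuting colors there are at most $t^{t} = 2^{\ca O(t\log t)}$ choices. For each choice we test extendability, that is, $|K| \le k - c^\chi_A$ for all $A$ and all $K \in \ca K_A$. This test takes linear time after grouping the cliques of $G-S$ by their neighborhood in $S$. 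Non-extendable colorings are discarded.

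\textbf{Step 2: reduce.} For a fixed extendable $\chi$, we exhaustively apply Reduction rule~\ref{rr:1} and then Reduction rule~\ref{rr:2} for every $A \seq S$. Lemmas~\ref{lem:rr1} and~\ref{lem:rr2} show that the existence of a $k$-$b^*$-coloring extending $\chi$ is preserved. Note that extendability of $\chi$ is preserved as well, because deleting vertices cannot enlarge cliques. Also, the new clique produced by Rule~\ref{rr:2} has size at most $|\bigcup\ca K_A| - |K^{max}_A| \le k-c^\chi_A$, since Rule~\ref{rr:1} no longer applies. Both rules can be implemented in total time $\ca O(n+m)$ by bucketing the cliques by $A$. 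After this step, Lemma~\ref{lem:tcn-nd} bounds the neighborhood diversity of the reduced graph $G'$ by $d := 2^{t+1}+t$, and an optimal ND-partition of $G'$ is computable in linear time.

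\textbf{Step 3: solve and sum up.} We now run the algorithm of Theorem~\ref{thm:nd} on $G'$, restricted to colorings extending $\chi$. This restriction is the main obstacle. Theorem~\ref{thm:nd} as stated ignores $\chi$, while the correctness lemmas speak about colorings extending $\chi$.

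The first fix to try is to sidestep the restriction entirely. Observe that any $k$-$b^*$-coloring of $G$ extends its own restriction $\chi$ to $S$, so the union over all $\chi$ of the "extending $\chi$" instances is exactly the original question. It then suffices to check, for each $\chi$, whether $G'_\chi$ admits \emph{any} $k$-$b^*$-coloring whose restriction to $S$ is compatible. To enforce compatibility, we refine the ND-partition so that every vertex of $S$ is its own part; this keeps $d' \le d + t$. In the ILP from Lemma~\ref{lem:nd}, singleton parts can have their colors fixed by adding constraints, at the cost of only a constant factor in the number of variables and constraints.

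If this refinement proves awkward, the fallback is to note that the forward direction of Lemma~\ref{lem:rr1} produces colorings of $G$ from colorings of $G'$ regardless of $\chi$. The direction from $G'$ back to $G$ for Rule~\ref{rr:2} likewise only used that no vertex of $W$ is a $b$-vertex. So an unconstrained YES answer on some $G'_\chi$ can be lifted to $G$ whenever the found coloring extends $\chi$, and that condition is what the singleton-part constraints guarantee.

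The running time is then $2^{\ca O(t\log t)}$ choices of $\chi$, each costing $2^{\ca O(d'\log d')}\log n + \ca O(n)$ with $d' = 2^{\ca O(t)}$. This gives $2^{2^{\ca O(t)}}\, n + \ca O(m)$ overall, as claimed.
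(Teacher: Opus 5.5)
Your outline matches the paper's proof: a minimum twin cover $S$, a guess of $\chi$ on $S$ up to color permutation, exhaustive application of Reduction rules~\ref{rr:1} and~\ref{rr:2} justified by Lemmas~\ref{lem:rr1} and~\ref{lem:rr2}, the neighborhood-diversity bound, and a final call to the neighborhood-diversity algorithm. You are right that this final call must be restricted to colorings extending $\chi$. The paper simply runs Theorem~\ref{thm:nd} on $G'$, and the restriction is genuinely necessary. For example, let $G$ consist of $a,a'$, both adjacent to $y_0,y_1,y_2$, with a pendant vertex $z$ at $a$ and $z'$ at $a'$, and let $k=4$. Only $a$ and $a'$ have degree at least $3$, so $G$ has no $4$-$b$-coloring. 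Yet for the guess $\chi(a)=\chi(a')$ (here $S=\{a,a'\}$), Rule~\ref{rr:1} does not fire and Rule~\ref{rr:2} adds the edge $y_1y_2$. The reduced graph then has the $4$-$b^*$-coloring $a\mapsto 1$, $a'\mapsto 2$, $y_0,y_1\mapsto 3$, $y_2\mapsto 4$, $z\mapsto 2$, $z'\mapsto 1$, with $b^*$-vertex $y_1$.

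What is not justified is your fix. It edits the ILP inside Lemma~\ref{lem:nd}, which you and the paper only have as a black box from~\cite{JaffkeLS23}. So the claim that colors of singleton parts can be fixed at constant-factor cost is an assertion about a formulation you have not seen. Your fallback is not an independent argument either, since it ends by invoking the same constraints.

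A fix using only what is in the paper: run the proof of Theorem~\ref{thm:nd} on the natural partition of $G'$, namely the singletons of $S$, the cliques $K_A^{max}$, and the cliques $W_A$ created by Rule~\ref{rr:2}. All parts are cliques, and there are at most $t+2^{t+1}$ of them. Then enumerate only valid sets $B$ containing no $W_A$.

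\emph{Soundness.} Vertices outside the sets $W_A$ have the same neighborhoods before and after Rule~\ref{rr:2}, so the $b$-vertices in $P_B$ and the $b^*$-vertex survive undoing it. The vertices deleted by Rule~\ref{rr:1} can then be re-inserted for \emph{any} proper coloring $\psi$. A vertex $v$ deleted from $K\in\ca K_A$ sees at most $|K|-1+c_A^{\psi}\le |K_A^{max}|-1+c_A^{\psi}\le k-1$ colors, where $c_A^\psi$ counts the colors $\psi$ uses on $A$. The last inequality holds because $K_A^{max}$ is never deleted and is complete to $A$. This is the argument your remark that Lemma~\ref{lem:rr1} works ``regardless of $\chi$'' needs; the paper's proof of that direction uses extendability of $\chi$.

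\emph{Completeness.} In a coloring extending $\chi$, once Rule~\ref{rr:1} is exhausted we have $|W_A|+c_A^\chi=|\bigcup\ca K_A|-|K_A^{max}|+c_A^\chi\le k-1$. Hence no vertex of $W_A$ is a $b$-vertex, and the certificate set built in the proof of Theorem~\ref{thm:nd} avoids every $W_A$.

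This route needs no change to the ILP and no enforcement of $\chi$ on $S$.
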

\begin{proof}
First, we compute a minimum-size twin cover $S$ of $G$ in time $\ca O(1.2378^t + tn + m)$~\cite{ganian2011twin}.
Second, we guess an extendable $\min(k, t)$-coloring $\chi$ of $G[S]$.
Third, we exhaustively apply Reduction rule~\ref{rr:1} and then Reduction rule~\ref{rr:2}.
By Lemmas~\ref{lem:rr1} and~\ref{lem:rr2}, the obtained graph $G'$ admits a $k$-$b^*$-coloring extending $\chi$ if and only if $G$ does.
Finally, we apply Theorem~\ref{thm:nd} to $G'$: if the algorithm from Theorem~\ref{thm:nd} returns YES for some guess of $\chi$, then we return YES, and otherwise we return NO.

Observe that there are at most $t^t$ choices for $\chi$, both reduction rules can be applied efficiently, and $G'$ has neighborhood diversity at most $2^{t+1} + t$ by Lemma~\ref{lem:tcn-nd}.
Hence, by Theorem~\ref{thm:nd}, the total running time is as desired.
\qed\end{proof}

\subsection{Distance to co-cluster}

In this section, we will need some extra preliminaries.
The \emph{kernel} of a function $f\colon A \rightarrow B$, denoted $\ker(f)$, is the set $\{(a, a')\in A^2\sep f(a) = f(a')\}$.
The expression $\chi \circ h$ denotes the function $v \mapsto \chi(h(v))$.
Let us fix a graph $G$.
Two vertices $u$ and $v$ of $G$ are \emph{false twins} if they are twins and $uv \notin E(G)$.
If $H$ is a graph, then a function $h\colon G\rightarrow H$ is a \emph{surjective two-way homomorphism} if it is surjective and for every $u, v \in V(G)$, we have $uv \in E(G)$ if and only if $h(u)h(v) \in E(H)$.

\begin{lemma}\label{lem:two-way-hom}
Let $G$ and $H$ be graphs and let $h\colon G\rightarrow H$ be a surjective two-way homomorphism.
If $\chi$ is a $k$-$b^*$-coloring of $H$, then $\psi = \chi \circ h$ is a $k$-$b^*$-coloring of $G$.
Conversely, if $\psi$ is a $k$-$b^*$-coloring of $G$ such that $\ker(h) \seq \ker(\psi)$, then there is a $k$-$b^*$-coloring $\chi$ of $H$ such that $\psi = \chi \circ h$.
\end{lemma}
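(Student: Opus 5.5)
The plan is to verify both directions directly from the definitions, using that $h$ preserves and reflects adjacency. Two facts drive everything. First, since $h$ is a two-way homomorphism, $h(u) = h(v)$ forces $uv \notin E(G)$, because otherwise $h(u)h(u)$ would be a loop in the simple graph $H$. Second, for every $v \in V(G)$ we have $h(N_G(v)) = N_H(h(v))$: the inclusion $\seq$ holds because $h$ preserves edges, and $\supseteq$ follows from surjectivity, since any $y \in N_H(h(v))$ has a preimage $x$ and $h(v)h(x) \in E(H)$ gives $vx \in E(G)$. Consequently $\psi(N_G[v]) = \chi(N_H[h(v)])$ whenever $\psi = \chi \circ h$, because $h(v)$ itself appears in the image.

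For the forward direction, let $\chi$ be a $k$-$b^*$-coloring of $H$ and set $\psi = \chi \circ h$. Properness is immediate: if $uv \in E(G)$, then $h(u)h(v) \in E(H)$, so $\psi(u) = \chi(h(u)) \ne \chi(h(v)) = \psi(v)$. By the neighborhood identity, $v$ is a $\psi$-$b$-vertex if and only if $h(v)$ is a $\chi$-$b$-vertex. Now let $x$ be a $\chi$-$b^*$-vertex and pick $u \in h^{-1}(x)$. For each color $c \ne \psi(u)$ there is a $\chi$-$b$-vertex $y \in N_H(x)$ with $\chi(y) = c$. Any preimage $w$ of $y$ is adjacent to $u$ and is a $\psi$-$b$-vertex of color $c$. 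Hence $u$ is a $\psi$-$b^*$-vertex. One must also check that $\psi$ uses exactly the colors $[k]$; this is clear since $\chi$ does and $h$ is surjective.

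For the converse, assume $\ker(h) \seq \ker(\psi)$. Then $\psi$ is constant on each fiber of $h$, so $\chi(x) := \psi(v)$ for any $v \in h^{-1}(x)$ is well defined, uses the same colors, and satisfies $\psi = \chi \circ h$. To see that $\chi$ is proper, take $xy \in E(H)$ with preimages $u, v$. Then $uv \in E(G)$, so $\chi(x) = \psi(u) \ne \psi(v) = \chi(y)$. The $b$-vertex correspondence again follows from $\psi(N_G[v]) = \chi(N_H[h(v)])$. If $u$ is a $\psi$-$b^*$-vertex, then $h(u)$ is a $\chi$-$b^*$-vertex: each $\psi$-$b$-vertex $w \in N_G(u)$ of color $c$ maps to $h(w) \in N_H(h(u))$, which is a $\chi$-$b$-vertex of color $c$.

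I expect no step to be a real obstacle. The only subtle point is the identity $h(N_G(v)) = N_H(h(v))$, which needs surjectivity, together with the loop argument showing that fibers are independent sets. Everything else is bookkeeping.
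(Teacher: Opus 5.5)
Your proof is correct and follows the same idea as the paper's: $b$-vertices, and hence $b^*$-vertices, transfer along $h$ because the fibers of $h$ behave like classes of false twins. The paper identifies $H$ with an induced subgraph of $G$ to which false twins have been added and then argues with twins, whereas you prove the underlying identity $h(N_G(v)) = N_H(h(v))$ directly; this also makes explicit that in the converse it is $h(u)$, not $u$ itself, that is the $\chi$-$b^*$-vertex.
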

\begin{proof}
Let $G$, $H$, and $h$ be as in the statement.
Observe that $G$ is isomorphic to a graph that can be obtained from $H$ by adding one or more false twins to some vertices of $H$; hence, we may assume that $H$ is an induced subgraph of $G$. Indeed, for $u \in V(H)$, the set $h^{-1}(u) \seq V(G)$ consists of $u$ and all false twins added to $u$ (in particular, $h(u) = u$).
If $\chi$ is a $k$-$b^*$-coloring of $H$, then $\psi = \chi \circ h$ is clearly a proper coloring of $G$, and each $b$-vertex in $\chi$ is also a $b$-vertex in $\psi$ (for the same color).
Hence, a $\chi$-$b^*$-vertex is also a $\psi$-$b^*$-vertex, and $\psi$ is a $k$-$b^*$-coloring of $G$.

Conversely, assume that $\psi$ is a $k$-$b^*$-coloring of $G$ such that $\ker(h) \seq \ker(\psi)$.
Let us define $\chi = \psi \upharpoonright V(H)$; clearly, $\chi$ is a proper coloring of $H$ and $\psi = \chi \circ h$.
Let $u$ be a $\psi$-$b^*$-vertex and let $\{u_1, \ldots, u_{k-1}\} \seq N(u)$ be such that for every $i \in [k-1]$, $u_i$ is a $b$-vertex for color $i$.
Observe that for every $i \in [k-1]$, $h(u_i)\in V(H)\seq V(G)$ is also a $b$-vertex in $\psi$ because $u_i$ and $h(u_i)$ are twins in $G$ that are colored with the same color by $\psi$.
Clearly, $h(u_i)$ is a $b$-vertex also in $\chi$, and so $u$ is a $\chi$-$b^*$-vertex, as desired.
\qed\end{proof}

Let us fix an instance $(G, k)$ of \bscoloring.
Let $S \seq V(G)$ be a minimum co-cluster-modulator of $G$, let $p = \min(|S|, k)$, and let $\ca U$ be the set containing all maximal independent sets of $G - S$.
We will use $N^S(u)$ as a shorthand for $N(u) \cap S$.

Let us define ``types'' of sets in $\ca U$.

\begin{definition}[{\cite[Definition 8]{DBLP:conf/mfcs/Balaban26}}]\label{def:type}
A set $U \in \ca U$ has \emph{type} $t\colon 2^S \rightarrow [0, p+1]$ if for each $A \seq S$, we have $t(A) = \min(p+1, |\{u \in U \sep N^S(u) = A\}|)$. Let $T$ be the set of all types.
\end{definition}

Now we define a \emph{signature}, and then we define when a coloring has signature~$\sigma$.

\begin{definition}[{\cite[Definition 9]{DBLP:conf/mfcs/Balaban26}}]\label{def:signature}
A \emph{signature} is a tuple $\sigma = (\chi, q, \tau, \lambda, \xi)$, where:
\begin{compactitem}
\item $\chi\colon S \rightarrow [p]$ is a proper coloring of $G[S]$;
\item $q \in [0, p]$ is an integer;
\item $\tau \colon [q] \rightarrow T$, $\lambda\colon [p] \rightarrow [0, q]$, and $\xi\colon [p] \rightarrow 2^{2^S}$ are functions.
\end{compactitem}
A signature is required to have the following properties.
\begin{compactenum}
\item It holds that $[q]\seq \range(\lambda)$.\label{sign:surjective}
\item For each $t \in T$, $|\tau^{-1}(t)|$ is at most the number of sets of type $t$ in $\ca U$.\label{sign:set-types}
\item For each $i \in [q]$ and $A \seq S$, we have $|\{c \in [p]\colon \lambda(c) = i \land A \in \xi(c)\}| \le \tau(i)(A)$.\label{sign:vertex-types}
\end{compactenum}
\end{definition}

\begin{definition}[{\cite[Definition 10]{DBLP:conf/mfcs/Balaban26}}]\label{def:represent}
Let $\psi\colon V(G) \rightarrow [k]$ be a partial coloring and $\sigma = (\chi, q, \tau, \lambda, \xi)$ be a signature.
We say that $\psi$ \emph{has signature $\sigma$} (or that it is a \emph{$\sigma$-coloring})
if $\chi = \psi \upharpoonright S$ and:
\begin{compactenum}
\item there is a set $\ca U_\psi = \{U_1, \ldots, U_q\} \seq \ca U$ such that $\psi^{-1}([p]) \seq S\cup\bigcup \ca U_\psi$;\label{repr:blocks}
\item for each $i \in [q]$, $\tau(i)$ is the set-type of $U_i$;\label{repr:types}
\item for each $c \in [p]$, if $\lambda(c) = 0$, then $\psi^{-1}(c) \seq S$, and if $\lambda(c) = i \ne 0$, then $c \in \psi(U_i)$;\label{repr:f0}
\item for each $c \in [p]$, if $\lambda(c) = i \ne 0$, then $\xi(c) = \{A\seq S\sep\exists u\in U_i\colon  \psi(u) = c\land N^S(u) = A\}$;\label{repr:f1}
\end{compactenum}
\end{definition}

We say that $\psi$ is a \emph{minimal $\sigma$-coloring} if it is a $\sigma$-coloring, $\psi(V(G)) \seq [p]$, and for each $c \in [p]$ and $A \in \xi(c)$, there is a \emph{unique} vertex $u \in U_{\lambda(c)}$ such that $\psi(u) = c$ and $N^S(u) = A$.
It can be easily observed that minimal $\sigma$-colorings are exactly inclusion-wise minimal colorings that have signature $\sigma$.

Now we show that a minimal $\sigma$-coloring exists.

\begin{lemma}[{\cite[Lemma 11]{DBLP:conf/mfcs/Balaban26}}]\label{lem:minimal-sigma-exists}
For a signature $\sigma$, a minimal $\sigma$-coloring $\psi$ exists and can be computed in polynomial time. Moreover, the number of vertices colored by $\psi$ is $2^{\ca O(p)}$.
\end{lemma}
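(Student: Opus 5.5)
The proof is constructive. We build $\psi$ directly from $\sigma = (\chi, q, \tau, \lambda, \xi)$ by first choosing the sets $U_1, \ldots, U_q$ and then placing colors inside them.

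\textbf{Choosing the sets.} By Property~\ref{sign:set-types} of Definition~\ref{def:signature}, for each type $t \in T$ there are at least $|\tau^{-1}(t)|$ sets of type $t$ in $\ca U$. So we can pick pairwise distinct sets $U_1, \ldots, U_q \in \ca U$ such that $U_i$ has type $\tau(i)$ for every $i \in [q]$. Computing the types of all sets in $\ca U$ is polynomial: $G-S$ is a co-cluster graph, so its maximal independent sets are exactly its parts, and hence $|\ca U| \le n$. Since $q \le p$, we simply assign, for each type, the required number of distinct sets of that type.

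\textbf{Defining $\psi$.} Put $\psi \upharpoonright S = \chi$. For each color $c \in [p]$ with $\lambda(c) = i \ne 0$ and each $A \in \xi(c)$, pick a vertex $u \in U_i$ with $N^S(u) = A$ and set $\psi(u) = c$, using distinct vertices for distinct pairs $(c, A)$. Fix $i$ and $A$. The number of pairs $(c, A)$ that need a vertex of $U_i$ with $S$-neighborhood $A$ is at most $\tau(i)(A)$ by Property~\ref{sign:vertex-types}. Because $q \le p$, these pairs number at most $p$, so the cap at $p+1$ in Definition~\ref{def:type} does not matter, and $\tau(i)(A) \le |\{u \in U_i \sep N^S(u)=A\}|$. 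Hence enough distinct vertices exist. All other vertices stay uncolored.

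\textbf{Verifying the conditions.} The main thing to check is properness, followed by conditions \ref{repr:blocks}--\ref{repr:f1} of Definition~\ref{def:represent}.
\begin{compactitem}
\item \emph{Edges inside $S$} are fine because $\chi$ is proper.
\item \emph{Edges inside a single $U_i$} do not exist, since $U_i$ is independent.
\item \emph{Edges between $U_i$ and $U_j$ with $i \ne j$} always exist, but such vertices get different colors. Every color $c$ used on $U_i$ satisfies $\lambda(c) = i$, so different sets receive disjoint color sets.
\item \emph{Edges between $S$ and $U_i$} need more care; this is the delicate step. A vertex $u \in U_i$ colored $c$ with $N^S(u) = A$ must not have a neighbor in $A$ colored $c$. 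My plan is to derive this from the signature's well-formedness, reading Definition~\ref{def:represent} as intended: a color $c$ with $\lambda(c) \ne 0$ must not appear in $\chi(A)$ for any $A \in \xi(c)$. If the definition does not enforce this, the lemma as cited from~\cite{DBLP:conf/mfcs/Balaban26} presumably includes it implicitly among the signature requirements, and I would add it as a stated assumption.
\end{compactitem}
With properness in hand, condition~\ref{repr:blocks} holds because only $S$ and $\bigcup_i U_i$ are colored. Condition~\ref{repr:types} holds by the choice of the $U_i$. Condition~\ref{repr:f0} holds because $\psi^{-1}(c) \seq S$ when $\lambda(c) = 0$, and, when $\lambda(c) \ne 0$, $c$ appears in $U_{\lambda(c)}$ provided $\xi(c) \ne \emptyset$ (again part of the well-formedness assumption). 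Condition~\ref{repr:f1} holds by construction.

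\textbf{Minimality and size.} Each pair $(c, A)$ is realized by exactly one vertex and no colors outside $[p]$ are used, so $\psi$ is a minimal $\sigma$-coloring. The number of colored vertices is at most $|S| + p \cdot 2^{|S|}$. Because $S$ is a minimum co-cluster modulator and $p = \min(|S|, k)$, this is $2^{\ca O(p)}$ exactly when $|S|$ is bounded in terms of $p$. This is another point to check: if $k < |S|$, the count $|\xi(c)| \le 2^{|S|}$ may need to be bounded instead through the type caps $p+1$.
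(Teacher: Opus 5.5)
Your construction is the right one: choose distinct $U_1,\dots,U_q$ of types $\tau(1),\dots,\tau(q)$ via Property~\ref{sign:set-types}, copy $\chi$ onto $S$, and give each pair $(c,A)$ with $\lambda(c)=i\neq 0$ and $A\in\xi(c)$ its own vertex of $U_i$ with $S$-neighborhood $A$. Such vertices exist because Property~\ref{sign:vertex-types} bounds the number of these pairs by $\tau(i)(A)$, and $\tau(i)(A)\le|\{u\in U_i\sep N^S(u)=A\}|$ simply because $\tau(i)(A)$ is a minimum. The appeal to $q\le p$ plays no role there.

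The real problem is the step you call delicate, which rests on a misreading. Definition~\ref{def:represent} does not require a $\sigma$-coloring to be proper, and this section deliberately keeps properness separate, the introductory footnote notwithstanding. Lemma~\ref{lem:twins} speaks of \emph{proper} $\sigma$-colorings. Lemma~\ref{lem:find-coloring} lists ``$\chi$ is proper'' as an extra condition. The algorithm discards the minimal $\sigma$-coloring when it is improper. Under your reading the lemma would in fact be false. Take a signature with $\lambda(c)\neq 0$, $A\in\xi(c)$ and $c\in\chi(A)$: it satisfies Properties~\ref{sign:surjective}--\ref{sign:vertex-types}, yet every $\sigma$-coloring contains a vertex of $U_{\lambda(c)}$ colored $c$ adjacent to a vertex of $A$ colored $c$. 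So the hypothesis you propose cannot be derived from Definition~\ref{def:signature}, and adding it would prove a weaker statement. Just delete the properness bullets; the rest of your verification then proves the lemma.

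Your other two caveats concern the statement rather than your argument.

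\textbf{Nonemptiness of $\xi(c)$.} This caveat is genuine. As Definition~\ref{def:signature} is reproduced here, $\lambda(c)=i\neq 0$ with $\xi(c)=\emptyset$ is allowed. Then items~\ref{repr:f0} and~\ref{repr:f1} of Definition~\ref{def:represent} contradict each other, so no $\sigma$-coloring exists at all. Likewise, for $\lambda(c)=0$ the minimality clause refers to a nonexistent $U_0$, so one must take $\xi(c)=\emptyset$ there, or restrict the clause to $\lambda(c)\neq 0$. The needed convention is $\xi(c)\neq\emptyset$ iff $\lambda(c)\neq 0$. It costs nothing, since no actual coloring has any other signature and the algorithm can skip such $\sigma$.

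\textbf{Size bound.} Your count $|S|+p\cdot 2^{|S|}$ is correct, and the type caps cannot improve it. All of $S$ is colored, and a single $\xi(c)$ may contain every $S$-neighborhood occurring in $U_{\lambda(c)}$. So the ``moreover'' part holds as $2^{\ca O(|S|)}$, which is what the algorithm parameterized by $d=|S|$ needs. Read literally with $p=\min(|S|,k)$, it fails whenever $k<|S|$.
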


In the rest of the proof, we will need to refer to~\cite{balaban2025finding}, which is the full version of~\cite{DBLP:conf/mfcs/Balaban26}.
Informally, the following lemma says that two minimal $\sigma$-colorings are interchangeable, i.e., if one such coloring can be extended into a $k$-$b^*$-coloring of $G$, then any other such coloring can be extended as well.

\begin{lemma}\label{lem:any-minimal-coloring-can-be-used}
If $\sigma = (\chi_S, q, \tau, \lambda, \xi)$ is a signature, $\psi\colon V(G) \rightarrow [k]$ is a $b^*$-coloring of $G$ with signature $\sigma$, and $\chi \colon V(G) \rightarrow [p]$ is a minimal $\sigma$-coloring, then there is a $b^*$-coloring $\chi' \colon V(G) \rightarrow [k]$ of $G$ such that $\chi \seq \chi'$.
\end{lemma}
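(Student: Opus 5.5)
The plan is to follow the proof of the analogous statement for $b$-colorings in the full version~\cite{balaban2025finding}, and add one extra requirement: a $b^*$-vertex must be preserved. Fix the $b^*$-coloring $\psi$ with signature $\sigma$, a $\psi$-$b^*$-vertex $u$, and $\psi$-$b$-vertices $u_1,\dots,u_k\in N[u]$ with $\psi(u_c)=c$. The key structural fact we use is that $G-S$ is complete multipartite with parts $\ca U$. So two vertices of $G-S$ in different sets of $\ca U$ are adjacent, while two vertices in the same $U\in\ca U$ with the same $N^S$ are false twins. Hence the whole neighborhood of a vertex $v\in V(G)\setminus S$ is determined by the set $U\ni v$ and by $N^S(v)$.

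\textbf{Step 1: match the minimal coloring inside $\psi$.} Using Definition~\ref{def:represent}, let $\ca U_\psi=\{U_1,\dots,U_q\}$ witness that $\psi$ has signature $\sigma$. Let $\ca U_\chi=\{U'_1,\dots,U'_q\}$ witness it for $\chi$, with $\tau(i)$ being the type of both $U_i$ and $U'_i$. Equal types and false-twin symmetry give an automorphism-like relabelling $\pi$ of $G$ with the following properties:
\begin{compactitem}
\item $\pi$ fixes $S$;
\item $\pi$ maps each $U'_i$ to $U_i$, preserving $N^S$;
\item $\pi$ permutes the sets outside $\ca U_\psi\cup\ca U_\chi$ among themselves.
\end{compactitem}
This may not be a genuine automorphism because the sets have different sizes beyond the threshold $p+1$. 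To handle this, we only use the rearrangement of vertices of a fixed $S$-neighborhood inside one set, together with swapping two whole sets of equal type; both are automorphisms of $G$. Replacing $\psi$ by $\psi\circ\pi^{-1}$ keeps it a $b^*$-coloring, since automorphisms map $b$- and $b^*$-vertices to $b$- and $b^*$-vertices. So we may assume $U'_i=U_i$. Furthermore, by property~\ref{repr:f1} and the minimality of $\chi$, for every $c\in[p]$ and $A\in\xi(c)$ there is a vertex of $U_{\lambda(c)}$ with $S$-neighborhood $A$ colored $c$ by $\psi$. By permuting false twins we can assume it is exactly the vertex colored $c$ by $\chi$. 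After this, $\chi\seq\psi$ on $\dom(\chi)$. Note that $\chi$ uses only colors in $[p]$, and the vertices of $\chi^{-1}([p])$ are colored identically by $\psi$.

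\textbf{Step 2: set $\chi'=\psi$ (possibly after recoloring non-$\chi$ vertices).} If the equality $\chi\seq\psi$ is obtained exactly, then $\chi'=\psi$ works directly. The main obstacle is that the relabelling in Step~1 must be done simultaneously for all colors and sets. It must also not destroy the $b^*$-vertex; this part is automatic because automorphisms preserve $b^*$-vertices. The delicate case is when false twins inside $U_i$ exceed the cap $p+1$ in the type, or when the two sets have different sizes.

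For this case I would argue as in~\cite{balaban2025finding}. Vertices with an identical $S$-neighborhood inside the same set are interchangeable, so we can swap their colors under $\psi$. Such swaps preserve properness and preserve $b$-vertices, because twins have the same neighborhoods. When a set outside the minimal structure must be exchanged with a set of equal type but different size, I would instead recolor the surplus vertices using Lemma~\ref{lem:two-way-hom}. Concretely, collapse false twins of equal $\psi$-color via a two-way homomorphism, transfer the coloring, and re-expand it. Lemma~\ref{lem:two-way-hom} guarantees that the $b^*$ property survives in both directions, and this is the point where the $b^*$-specific argument differs from the $b$-coloring one.
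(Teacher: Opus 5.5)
Your overall plan, rerunning the $b$-coloring argument of the full version with Lemma~\ref{lem:two-way-hom} in place of its $b$-coloring counterpart, is exactly what the paper does. However, the argument you actually write down breaks at its central step, so there is a genuine gap.

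In Step~1 you claim that swapping two whole sets of $\mathcal U$ of equal type is an automorphism of $G$. It is not. A type only records the sizes of the false-twin classes (vertices of one set with a common $S$-neighborhood) capped at $p+1$. So $U_i$ and $U'_i$ can have corresponding classes of different sizes, both at least $p+1$, and the two sets need not even have the same size. Hence ``we may assume $U'_i=U_i$'' is unsupported. This mismatch is not a delicate side case; it is precisely what the lemma has to overcome.

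Your Step~2 fallback (collapse equally colored twins, transfer, re-expand) points the right way, but it leaves out the one fact that makes re-expansion onto $U'_i$ possible. Each collapsed class of $U_i$ must be no larger than the corresponding class of $U'_i$, which may have exactly $p+1$ vertices. This requires that every false-twin class $X\seq U_i$ carries at most $p+1$ distinct $\psi$-colors. That bound holds for two reasons. First, the $[p]$-colors on $X$ are exactly the $c$ with $\lambda(c)=i$ and $A\in\xi(c)$, where $A$ is the common $S$-neighborhood of $X$. Second, by Observation~\ref{obs:anonym}, each set of $\mathcal U$ contains at most one color from $[p+1,k]$. Without this bound the collapsed class can be larger than the target class, and the transfer fails.

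With the bound, the proof closes without any automorphism of $G$:
\begin{enumerate}
\item Let $h\colon G\to H$ identify equally $\psi$-colored vertices within each false-twin class. Then $h$ is a surjective two-way homomorphism with $\ker(h)\seq\ker(\psi)$ and $|h(X)|\le\min(|X|,p+1)$. Lemma~\ref{lem:two-way-hom} therefore gives a $k$-$b^*$-coloring $\psi_H$ of $H$ with $\psi=\psi_H\circ h$.
\item Take a type-preserving bijection $\beta$ of $\mathcal U$ with $\beta(U'_i)=U_i$ for all $i\in[q]$. It exists because this partial map is a type-preserving injection.
\item Let $h'\colon G\to H$ be the identity on $S$. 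For each class $X'$ of $W\in\mathcal U$, let $X$ be the class of $\beta(W)$ with the same $S$-neighborhood. Map $X'$ onto $h(X)$, sending the $\chi$-vertex of color $c$ to the unique vertex of $\psi_H$-color $c$.
\item The map $h'$ is surjective because $|X'|\ge\min(|X'|,p+1)=\min(|X|,p+1)\ge|h(X)|$. It is a two-way homomorphism because it maps the sets of $\mathcal U$ bijectively onto the parts of $H-S$ and preserves $S$-neighborhoods.
\item By the forward direction of Lemma~\ref{lem:two-way-hom}, $\psi_H\circ h'$ is a $k$-$b^*$-coloring of $G$, and it extends $\chi$ by construction.
\end{enumerate}
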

\begin{proof}
The proof is analogous to the proof of~\cite[Lemma 16]{balaban2025finding}.
The only difference is that instead of~\cite[Lemma 3]{balaban2025finding}, we need to use Lemma~\ref{lem:two-way-hom}.
\qed\end{proof}

Let us now observe that the behavior of colors \emph{not} in $[p]$ is significantly restricted in any $b$-coloring of $G$.

\begin{observation}[{\cite[Observation 15]{balaban2025finding}}]\label{obs:anonym}
If $\psi$ is a $k$-$b$-coloring of $G$ with signature $\sigma$, then for distinct colors $c, d \in [p+1, k]$, there are sets $U_c, U_d \in \ca U$ such that $\psi^{-1}(c) \seq U_c$, $\psi^{-1}(d) \seq U_d$, and $U_c \ne U_d$.
\end{observation}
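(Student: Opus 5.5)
We first fix the setting. $S$ is a co-cluster modulator and $p = \min(|S|,k)$, so $G-S$ is complete multipartite with parts $\ca U$. Two vertices lying in different sets of $\ca U$ are adjacent.

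Let $c \in [p+1,k]$. We want to show that $\psi^{-1}(c)$ lies inside a single set $U_c \in \ca U$. The plan is to prove three things in order: the class avoids $S$, it fits in one part, and distinct such colors use distinct parts.

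\textbf{Step 1: the class avoids $S$.} We use that $\chi = \psi \upharpoonright S$ maps into $[p]$, which is part of the signature. Since $c > p$, no vertex of $S$ receives $c$, so $\psi^{-1}(c) \seq V(G) \setminus S$.

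\textbf{Step 2: the class fits in one part and is non-empty.} Because $\psi$ is proper, $\psi^{-1}(c)$ is an independent set inside the complete multipartite graph $G-S$. Any two of its vertices are non-adjacent, so they lie in the same part. Hence $\psi^{-1}(c) \seq U_c$ for some $U_c \in \ca U$. The class is non-empty because $\psi$ is a $k$-$b$-coloring, so every color in $[k]$ has a $b$-vertex. This means $U_c$ is uniquely determined, and no choice is involved.

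\textbf{Step 3: distinct colors use distinct parts.} This is the only step with content. Suppose for contradiction that $c \ne d$ in $[p+1,k]$ give $U_c = U_d = U$. Let $v$ be a $b$-vertex of color $c$; it lies in $U$. Then $v$ has a neighbor $w$ with $\psi(w) = d$. By Step 2, $w \in \psi^{-1}(d) \seq U$. But $U$ is independent in $G$, since it is a part of $G-S$, so $v$ and $w$ cannot be adjacent. This contradiction shows $U_c \ne U_d$.

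The main obstacle is just confirming that $\chi$ maps into $[p]$, which the signature definition provides. I expect no other difficulty. The case $k \le p$ is vacuous, because then $[p+1,k]$ is empty.
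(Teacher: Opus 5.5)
Your proof is correct, and it is essentially the standard argument. This paper does not reprove the observation; it imports it from the full version of the earlier work. The argument you give is the short justification behind it. Colors above $p$ avoid $S$ because $\psi\upharpoonright S$ is the signature's coloring into $[p]$. Each such color class is then a nonempty independent set of the complete multipartite graph $G-S$, so it lies in a unique part. Finally, a $b$-vertex of color $c$ needs a neighbor of color $d$, which the independence of a part rules out if $U_c = U_d$.
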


The following simple lemma concerns two $\sigma$-colorings that are comparable by inclusion.

\begin{lemma}[{\cite[Lemma 17]{balaban2025finding}}]\label{lem:twins}
If $\sigma = (\chi_S, q, \tau, \lambda, \xi)$ is a signature and $\chi, \psi$ are two proper $\sigma$-colorings such that $\chi\seq\psi$, then $\ca U_\psi = \ca U_\chi$ (see Definition~\ref{def:represent}).
Moreover, each vertex $u\in V(G)$ such that $\psi(u) \in [p]$ has a twin $u'$ such that $\chi(u') = \psi(u)$.
\end{lemma}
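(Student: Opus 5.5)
The plan has two parts: first show $\ca U_\psi = \ca U_\chi$ by exploiting the maximality of the sets in $\ca U$, then find the twin $u'$ by counting occurrences per neighborhood class.

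For the first part, by Definition~\ref{def:represent}, both $\ca U_\chi$ and $\ca U_\psi$ are $q$-element subsets of $\ca U$. The sets in $\ca U$ are the maximal independent sets of the complete multipartite graph $G-S$, i.e., its parts, so they are pairwise disjoint. Condition~\ref{repr:f0} forces, for every $c$ with $\lambda(c)=i\neq 0$, that $c$ appears on the $i$-th set, and by Property~\ref{sign:surjective} every index $i\in[q]$ is hit by $\lambda$. Let $U_i$ denote the $i$-th set of $\ca U_\chi$ and $U_i'$ the $i$-th set of $\ca U_\psi$. I would fix $i\in[q]$ and choose $c$ with $\lambda(c)=i$. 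Then there is $v\in U_i$ with $\chi(v)=c$. Since $\chi\seq\psi$, also $\psi(v)=c\in[p]$. By condition~\ref{repr:blocks} for $\psi$, we get $v\in S\cup\bigcup\ca U_\psi$, and $v\notin S$, so $v\in U_j'$ for some $j$.

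The main obstacle is to conclude that $j=i$, because the indexing could in principle be permuted among sets of the same type. I would use condition~\ref{repr:f1}: $\xi(c)$ records the neighborhood classes of the $c$-colored vertices in the $\lambda(c)$-th set. The $c$-colored vertices of $\psi$ in $G-S$ all lie in a single part, since a color class in a complete multipartite graph meets at most one part. So $\psi^{-1}(c)\setminus S\seq U_j'$, and condition~\ref{repr:f0} for $\psi$ gives $c\in\psi(U_i')$, hence $U_i'=U_j'$. It follows that $U_i\cap U_i'\neq\emptyset$, and disjointness of parts gives $U_i=U_i'$.

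For the second part, let $u$ satisfy $\psi(u)=c\in[p]$. If $u\in S$, then $\chi(u)=\psi(u)$ because $\chi\upharpoonright S=\psi\upharpoonright S$, and $u'=u$ works. Otherwise, $u\in U_i$ for some $i$ by the first part. Condition~\ref{repr:f0} rules out $\lambda(c)=0$, and the argument above shows $\lambda(c)=i$. Let $A=N^S(u)$. Condition~\ref{repr:f1} for $\psi$ gives $A\in\xi(c)$, and applying the same condition to $\chi$ yields $u'\in U_i$ with $\chi(u')=c$ and $N^S(u')=A$. The vertices $u$ and $u'$ lie in the same part of $G-S$ and have the same neighbors in $S$, so they are (false) twins in $G$, which completes the proof.
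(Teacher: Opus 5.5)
Your proof is correct and follows the natural argument for this lemma, which the paper itself does not reprove but cites from the full version. Properness of $\psi$ forces each color class outside $S$ into a single part of the complete multipartite graph $G-S$, so a color $c\in\lambda^{-1}(i)$ pins down $U_i=U_i'$. Then condition~\ref{repr:f1}, applied first to $\psi$ and then to $\chi$, produces $u'\in U_{\lambda(c)}$ with $\chi(u')=c$ and $N^S(u')=N^S(u)$, hence a twin of $u$. The detour through condition~\ref{repr:blocks} and the announced (but unused) appeal to condition~\ref{repr:f1} in the first part are unnecessary: $c\in\chi(U_i)\seq\psi(U_i)$ and $c\in\psi(U_i')$ already give $U_i=U_i'$ directly.
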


In the following definition, a $\chi$-candidate pair $(u, B)$ is a pair such that $u$ should become a $b^*$-vertex and $B \seq N[u]$ should contain the $b$-vertices for colors in $[p]$.
In addition, $(\chi, u, B)$-candidates are those sets in $\ca U$ that may contain a $b$-vertex for some color in $[p+1, k]$, and $\chi$-flexible are those that do not have to contain such a $b$-vertex.

\begin{definition}\label{def:candidates-and-flexible}
Let $\sigma = (\chi_S, q, \tau, \lambda, \xi)$ be a signature and $\chi\colon V(G) \rightarrow [p]$ be a minimal $\sigma$-coloring.
We say that $(u, B)$ is a \emph{$\chi$-candidate pair} if $u \in V(G)$, $B \seq N[u]$, $|B| = p$, $\chi(B) = [p]$, and for each $v \in B$, it holds that $[p] \seq \chi(N[v])$.

Let us fix a $\chi$-candidate pair $(u, B)$.
We say that $U \in \ca U$ is a \emph{$(\chi, u, B)$-candidate} if each $v \in B$ has a neighbor $w \in U$ that is uncolored by $\chi$, and
there is a vertex $v\in U \cap N[u]$ that is uncolored by $\chi$ such that $[p] \seq \chi(N(v))$.

We say that $U$ is \emph{$\chi$-flexible} if each $v\in U$ uncolored by $\chi$ has a twin $w \in U$ that is colored by $\chi$.
\end{definition}

Now we characterize when a $b^*$-coloring exists, using the terminology introduced in Definition~\ref{def:candidates-and-flexible}.

\begin{lemma}\label{lem:find-coloring}
Let $\sigma = (\chi_S, q, \tau, \lambda, \xi)$ be a signature and $\chi\colon V(G) \rightarrow [p]$ be a minimal $\sigma$-coloring. There is a $b^*$-coloring $\psi\colon V(G) \rightarrow [k]$ of $G$ with signature $\sigma$ such that $\chi \seq \psi$ if and only if $\chi$ is proper and there is a $\chi$-candidate pair $(u, B)$ and a set $\ca C\seq\ca U$ such that $|\ca C| = k-p$, all sets in $\ca C$ are $(\chi, u, B)$-candidates, and all sets in $\ca U\setminus\ca C$ are $\chi$-flexible. Moreover, given $B$ and $\ca C$ satisfying these properties, the $b^*$-coloring $\psi$ can be computed in polynomial time.
\end{lemma}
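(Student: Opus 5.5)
We prove the two directions separately, following the analogous characterization for $b$-colorings in~\cite{balaban2025finding}. The only changes are that all $b$-vertices must lie in $N[u]$ and that the $b$-vertices for colors in $[p]$ are collected in $B$.

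\textbf{Forward direction.} Let $\psi \supseteq \chi$ be a $b^*$-coloring with signature $\sigma$. Then $\chi$ is proper. Let $u$ be a $\psi$-$b^*$-vertex, and for each color $c \in [p]$ pick a $\psi$-$b$-vertex $v_c \in N[u]$ of color $c$.
\begin{compactenum}
\item Since $\chi \seq \psi$ are both $\sigma$-colorings, Lemma~\ref{lem:twins} gives $\ca U_\psi = \ca U_\chi$. It also lets us replace each $v_c$ by a twin colored $c$ by $\chi$; twins have the same neighborhood, so the replacement stays in $N[u]$ (if $v_c = u$ we keep $u$ when it is colored by $\chi$). Call the resulting set $B$.
\item A vertex and its twin are both $b$-vertices or neither, so every $v \in B$ is a $\psi$-$b$-vertex. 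Hence $[p] \seq \chi(N[v])$ should follow: the colors in $[p]$ are realized on $S \cup \bigcup \ca U_\chi$, and minimality together with twin arguments shows $\chi$ already realizes them. This gives the candidate pair $(u,B)$.
\item For each color $c \in [p+1,k]$, Observation~\ref{obs:anonym} gives a set $U_c \in \ca U$ containing $\psi^{-1}(c)$, with the $U_c$ pairwise distinct. Put $\ca C = \{U_c\}$, so $|\ca C| = k-p$.
\item Each $U_c$ is a $(\chi,u,B)$-candidate. Every $v \in B$ must see color $c$, which lives only in $U_c$ on vertices not colored by $\chi$. The $b$-vertex of color $c$ lies in $N[u] \cap U_c$, is uncolored by $\chi$, and sees all of $[p]$; again we pass through twins so that $\chi$ alone witnesses this.
\item Every $U \notin \ca C$ is $\chi$-flexible. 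Such a $U$ contains no colors outside $[p]$, so every vertex of $U$ has a $[p]$-color, and Lemma~\ref{lem:twins} gives it a twin colored by $\chi$.
\end{compactenum}

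\textbf{Backward direction (constructive).} Start from $\chi$.
\begin{compactenum}
\item For each $U \in \ca C$, assign its own new color $c_U \in [p+1,k]$ to all vertices of $U$ uncolored by $\chi$. The sets in $\ca U$ are independent and pairwise completely joined, since $G-S$ is complete multipartite. Hence this is proper against $G-S$, provided the color is not used in $S$, which holds because $\chi(S) \seq [p]$.
\item For each flexible $U$, color each uncolored vertex like its $\chi$-colored twin. This is proper because twins have the same neighborhood.
\item Vertices of $S$ remain properly colored, since they keep their $\chi$-colors.
\item Each $v \in B$ sees all of $[p]$ and, by the first candidate condition, sees $c_U$ for every $U \in \ca C$, so $v$ is a $b$-vertex.
\item For each $U \in \ca C$, the vertex $v \in U \cap N[u]$ supplied by the second candidate condition gets color $c_U$. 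It sees $[p]$ via $\chi$ and sees every other $c_{U'}$, since it is adjacent to all of $U'$. So $v$ is a $b$-vertex of color $c_U$ lying in $N[u]$.
\item Together with $B \seq N[u]$, $u$ sees a $b$-vertex of every color other than its own, so $u$ is a $b^*$-vertex. If $u \in B$ or $u$ is one of the $v$'s, it is fine because it is then a $b$-vertex of its own color.
\item The resulting coloring still has signature $\sigma$, since the new colors are not in $[p]$ and the flexible copies preserve the $\xi$ data. All steps are polynomial.
\end{compactenum}

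\textbf{Main obstacle.} The hard part is the forward direction's transfer from $\psi$ to $\chi$. The witnesses $B$ and the candidate vertices must be certified by $\chi$ alone, must stay inside $N[u]$ after the twin replacements, and the edge case $u \in B$ needs care. The plan is to mirror~\cite[Lemma 16]{balaban2025finding}, using Lemma~\ref{lem:any-minimal-coloring-can-be-used} and Lemma~\ref{lem:twins}, and to check at each twin swap that membership in $N[u]$ is preserved.
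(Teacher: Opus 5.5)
Your proposal is correct and follows essentially the same route as the paper. In the forward direction you take $\ca C=\{U_c \sep c\in[p+1,k]\}$ from Observation~\ref{obs:anonym} and transfer the $b$-vertex witnesses from $\psi$ to $\chi$ via Lemma~\ref{lem:twins}. In the backward direction you give each set of $\ca C$ its own color from $[p+1,k]$ and copy twin colors inside the flexible sets. The paper does exactly this.

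The edge case you flag but leave open, $\psi(u)\in[p]$ with $u$ uncolored by $\chi$, is also passed over in the paper's ``we may assume $\chi(u_i)=i$''. It is settled by first replacing $u$ with its $\chi$-colored twin from Lemma~\ref{lem:twins}. That twin is a false twin with the same neighborhood and the same $\psi$-color, so it is again a $\psi$-$b^*$-vertex.
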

\begin{proof}
First, we prove the left-to-right implication.
Let $\psi\colon V(G) \rightarrow [k]$ be a $b^*$-coloring of $G$ with signature $\sigma$ such that $\chi \seq \psi$. Since $\psi$ is proper, $\chi$ is proper as well.
Let $u \in V(G)$ be a $\psi$-$b^*$-vertex, and let $B := \{u_1, \ldots, u_p\} \seq N[u]$ be such that for every $i \in [p]$, $\psi(u_i) = i$ and $u_i$ is a $\psi$-$b$-vertex.
Let us fix $i \in [p]$.
By Lemma~\ref{lem:twins}, we may assume that $\chi(u_i) = i$.
Moreover, if $c \in [p] \setminus \{i\}$, then $c \in \psi(N(u_i))$ since $u_i$ is a $\psi$-$b$-vertex, and by Lemma~\ref{lem:twins}, we have $c \in \chi(N(u_i))$.
Hence, $[p] \seq \chi(N[u_i])$, which means that $(u, B)$ is a $\chi$-candidate pair.

Let $\ca C = \{U_c \sep c\in [p+1,k]\}$, where $U_c \in \ca U$ is the set such that $\psi^{-1}(c) \seq U_c$; it is unique by Observation~\ref{obs:anonym}.
Recall that the observation also says that $U_c\ne U_d$ for distinct colors $c,d\in[p+1,k]$, which easily implies that $|\ca C| = k-p$. Now let $U_c \in \ca C$. We need to prove that $U_c$ is a $(\chi, u, B)$-candidate.
First, let $u_i \in B$.
Since $u_i$ is a $\psi$-$b$-vertex, we have $c \in \psi(N(u_i))$, and since $c \notin \range(\chi)$, $u_i$ has a neighbor in $U_c$ that is uncolored by $\chi$.
Second, let $v \in U_c \cap N[u]$ be a $b$-vertex for $c$ in $\psi$ (it exists because $u$ is a $\psi$-$b^*$-vertex).
Let $d \in [p]$, and let $w \in N(v)$ be a vertex such that $\psi(w) = d$. By Lemma~\ref{lem:twins}, $w$ has a twin $w'$ such that $\chi(w') = d$. Hence, $[p] \seq \chi(N(v))$, and $U_c$ is indeed a $(\chi, u, B)$-candidate.
Now let $U \in \ca U\setminus\ca C$.
By definition of $\ca C$, we know that $\psi(U)\seq[p]$.
If $v \in U$ is a vertex uncolored by $\chi$, then by Lemma~\ref{lem:twins}, $v$ has a twin $v'$ that is colored by $\chi$, which shows that $U$ is $\chi$-flexible.

Now we prove the right-to-left implication. Let $u \in V(G)$, $B\seq N[u]$, and $\ca C\seq\ca U$ be as in the statement. We may assume that $\ca C = \{U_c \sep c\in [p+1,k]\}$, i.e., we label the sets in $\ca C$ arbitrarily.
Now we construct $\psi \supseteq \chi$ as follows.
Let $v \in V(G)$ be a vertex uncolored by $\chi$.
By Definition~\ref{def:represent}, $v\notin S$. If $v\in U_c \in\ca C$, we let $\psi(v) = c$, and if $v\in U \in\ca U \setminus\ca C$, then by Definition~\ref{def:candidates-and-flexible}, $v$ has a twin $v'\in U$ that is colored by $\chi$, and we let $\psi(v) = \chi(v')$.
Observe that this coloring $\psi$ can be computed in polynomial time as required.
Suppose there is an edge $vw \in E(G)$ such that $\psi(v) = \psi(w) = c$.
If $c \in [p]$, then by Lemma~\ref{lem:twins}, there would be an edge $v'w' \in E(G)$ such that $\chi(v') = \chi(w') = c$, which would be a contradiction with $\chi$ being proper.
On the other hand, if $c\in[p+1, k]$, then $v,w \in U_{c}$ by construction of $\psi$, a contradiction with $vw \in E(G)$. Hence, $\psi$ is a proper coloring.

Now we show that $\psi$ is a $b^*$-coloring. First, let $c \in [p]$ and let $v \in B$ be the unique vertex in $B$ such that $\psi(v) = c$.
Let $d \in [k]$ be a color. If $d \in [p]$, then by Definition~\ref{def:candidates-and-flexible}, $d \in \chi(N[v]) \seq \psi(N[v])$.
On the other hand, if $d \in [p+1, k]$, then $U_d \in \ca C$ is a $(\chi, u, B)$-candidate, which means that there is $w \in U_d \cap N(v)$ that is uncolored by $\chi$. However, by construction of $\psi$, we know that $\psi(w) = d$. Hence, $v$ is a $b$-vertex for $c$ in $\psi$.
Second, let $c\in [p+1,k]$. Since $U_c$ is a $(\chi, u, B)$-candidate, there is a vertex $v \in U_c \cap N[u]$ such that $[p] \seq \chi(N(v)) \seq \psi(N(v))$. If $d \in [p+1,k]$ and $d \ne c$, then $U_d \ne U_c$, and so $v$ has a neighbor of color $d$ in $\psi$. Hence, $v$ is a $b$-vertex for $c$ in $\psi$.
We have shown that $N[u]$ contains a $\psi$-$b$-vertex of every color, so $u$ is a $\psi$-$b^*$-vertex, as desired.
\qed\end{proof}

Finally, we are ready to present the desired algorithm.

\begin{theorem}
The \bscoloring problem, with input $(G, k)$, is \FPT parameterized by the distance to co-cluster $d$ of $G$, and it can be solved in time $2^{2^{\ca O(p)}} \cdot n^{\ca O(1)}$.
\end{theorem}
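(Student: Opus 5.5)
The algorithm enumerates signatures, builds a minimal coloring for each, and then runs the test of Lemma~\ref{lem:find-coloring}. First compute a minimum co-cluster-modulator $S$; with $|S| = d$ this takes $\ca O(1.2738^d)\cdot n^{\ca O(1)}$-type time by standard branching or an FPT approximation, and an approximate modulator of size $\ca O(d)$ also suffices. Set $p = \min(|S|, k)$. If $k$ is larger than some bound, we must still handle colors in $[p+1,k]$, but only through the sets in $\ca C$, so no bound on $k$ is needed.

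For each signature $\sigma = (\chi_S, q, \tau, \lambda, \xi)$ we proceed as follows.
\begin{compactenum}
\item Check the properties of Definition~\ref{def:signature}.
\item Compute a minimal $\sigma$-coloring $\chi$ with Lemma~\ref{lem:minimal-sigma-exists}.
\item Test properness of $\chi$.
\item Decide whether there exist a $\chi$-candidate pair $(u,B)$ and a family $\ca C \seq \ca U$ as in Lemma~\ref{lem:find-coloring}.
\end{compactenum}

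\textbf{Correctness.} Every $k$-$b^*$-coloring $\psi$ has some signature $\sigma$: take $\chi_S = \psi\upharpoonright S$ after renaming colors so that those used on $S$, together with those sharing a class with few sets, lie in $[p]$. This is the same normalization as in~\cite{DBLP:conf/mfcs/Balaban26}, and I would cite it rather than redo it. By Lemma~\ref{lem:any-minimal-coloring-can-be-used}, the minimal $\sigma$-coloring $\chi$ we computed extends to some $b^*$-coloring $\chi'$ with signature $\sigma$. Lemma~\ref{lem:find-coloring}, applied to $\chi'$, then certifies the existence of $(u,B,\ca C)$. Conversely, any found $(u,B,\ca C)$ yields a $k$-$b^*$-coloring, again by Lemma~\ref{lem:find-coloring}. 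Here a subtle point must be checked: the coloring obtained has signature $\sigma$ and colors $[p+1,k]$ exactly on the sets in $\ca C$. This is fine, since the lemma only needs existence.

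\textbf{Running time.} There are $2^{2^{\ca O(p)}}$ signatures, because $\xi$ maps $[p]$ into $2^{2^S}$ and the type set $T$ has size $(p+2)^{2^p}$. For step~4, enumerate $u \in V(G)$ and $B$. Since $\chi$ colors only $2^{\ca O(p)}$ vertices and every vertex of $B$ must be colored by $\chi$ (as $\chi(B)=[p]$), the number of choices of $B$ is $(2^{\ca O(p)})^{p}$, or simply polynomial-times-$2^{\ca O(p^2)}$. For fixed $(u,B)$, classify each $U \in \ca U$ as a $(\chi,u,B)$-candidate or not, and as $\chi$-flexible or not; both tests are polynomial. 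A valid $\ca C$ exists iff every non-flexible set is a candidate and the number of candidates is at least $k-p$, while the number of non-flexible sets is at most $k-p$. We then pick all non-flexible sets and pad with further flexible candidates. Computing $\ca U$ is easy because $G-S$ is complete multipartite, so the maximal independent sets are exactly its parts.

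\textbf{Main obstacle.} I expect the delicate step to be the normalization in the left-to-right direction: showing that some $b^*$-coloring has a signature whose $\lambda,\xi$ are consistent with the minimal coloring chosen. This is delegated to Lemma~\ref{lem:any-minimal-coloring-can-be-used}. Beyond that, the argument is bookkeeping and yields total time $2^{2^{\ca O(p)}}\cdot n^{\ca O(1)}$.
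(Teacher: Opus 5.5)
Your proposal is correct and matches the paper's proof: enumerate signatures $\sigma$, compute a minimal $\sigma$-coloring $\chi$ by Lemma~\ref{lem:minimal-sigma-exists}, and discard improper ones. Then guess the $\chi$-candidate pair $(u,B)$, which has only $n\cdot 2^{\mathcal O(p^2)}$ choices, and decide whether $\mathcal C$ exists by taking all non-flexible candidates and padding with flexible ones; this is exactly the paper's $|\mathcal C_0|\le k-p\le|\mathcal C_0|+|\mathcal C_1|$ test. Correctness goes through Lemmas~\ref{lem:any-minimal-coloring-can-be-used} and~\ref{lem:find-coloring} as in the paper (which argues by analogy with Balab\'an's Theorem~7), provided the coloring $\chi'$ from Lemma~\ref{lem:any-minimal-coloring-can-be-used} is taken to have signature $\sigma$, as you correctly assume.
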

\begin{proof}
First, we compute a minimum co-cluster modulator $S$ of $G$~\cite{boral2016fast}.
Second, we try all signatures $\sigma$, and we find a minimal $\sigma$-coloring $\chi$ of $G$ using Lemma~\ref{lem:minimal-sigma-exists}.
If $\chi$ is not proper, we discard it and continue with the next signature. Otherwise, we try all $\chi$-candidate pairs $(u, B)$, and we compute for each set $U\in \ca U$, whether it is $\chi$-flexible or a $(\chi, u, B)$-candidate.
If there is a set $U\in\ca U$ that is neither $\chi$-flexible nor a $(\chi, u, B)$-candidate, we reject $(u, B)$. Otherwise, we compute the sets $\ca C_0 = \{U\in\ca U\sep U$ is a $(\chi, u, B)$-candidate but not $\chi$-flexible$\}$ and $\ca C_1 = \{U\in\ca U\sep U$ is a $\chi$-flexible $(\chi, u, B)$-candidate$\}$. 
If $|\ca C_0| > k-p$ or $|\ca C_0| + |\ca C_1| < k-p$, we reject $(u, B)$. Otherwise, we can find a set $\ca C$ such that $|\ca C| = k-p$ and $\ca C_0 \seq\ca C\seq\ca C_0\cup\ca C_1$, which allows us to compute a $b^*$-coloring of $G$ by Lemma~\ref{lem:find-coloring}. If we do not find a $b^*$-coloring for any $\sigma$ and $(u, B)$, we reject $(G, k)$.

The correctness and the bound on the running time of the algorithm can be proven analogously to~\cite[Theorem 7]{DBLP:conf/mfcs/Balaban26}.
The only difference is that the running time is worse by a multiplicative factor of $|V(G)|$ because we guess not only $B$ but also $u \in V(G)$.
\qed\end{proof}

\bibliographystyle{splncs04}
\bibliography{bibliography}

@inproceedings{DBLP:conf/mfcs/Balaban26,
  author       = {Jakub Balab{\'{a}}n},
  editor       = {Michal Kouck{\'{y}} and
                  Daniela Petrisan},
  title        = {Finding b-Colorings Using Feedback Edges},
  booktitle    = {51st International Symposium on Mathematical Foundations of Computer
                  Science, {MFCS} 2026, Paris, France, August 24-28, 2026},
  series       = {LIPIcs},
  volume       = {386},
  pages        = {36:1--36:18},
  publisher    = {Schloss Dagstuhl - Leibniz-Zentrum f{\"{u}}r Informatik},
  year         = {2026},
  doi          = {10.4230/LIPICS.MFCS.2026.36},
  bibsource    = {dblp computer science bibliography, https://dblp.org}
}

@article{elberfeld2015space,
  title={On the space and circuit complexity of parameterized problems: Classes and completeness},
  author={Elberfeld, Michael and Stockhusen, Christoph and Tantau, Till},
  journal={Algorithmica},
  volume={71},
  number={3},
  pages={661--701},
  year={2015},
  publisher={Springer}
}

@article{bodlaender2024parameterized,
  title={Parameterized problems complete for nondeterministic FPT time and logarithmic space},
  author={Bodlaender, Hans L and Groenland, Carla and Nederlof, Jesper and Swennenhuis, C{\'e}line},
  journal={Information and Computation},
  volume={300},
  pages={105195},
  year={2024},
  publisher={Elsevier}
}

@book{flum2006parameterized,
  title={Parameterized complexity theory},
  author={Flum, J{\"o}rg and Grohe, Martin},
  year={2006},
  publisher={Springer}
}

@article{rao2008clique,
  title={Clique-width of graphs defined by one-vertex extensions},
  author={Rao, Micha{\"e}l},
  journal={Discrete Mathematics},
  volume={308},
  number={24},
  pages={6157--6165},
  year={2008},
  publisher={Elsevier}
}

@book{CyganFKLMPPS15,
  author       = {Marek Cygan and
                  Fedor V. Fomin and
                  Lukasz Kowalik and
                  Daniel Lokshtanov and
                  D{\'{a}}niel Marx and
                  Marcin Pilipczuk and
                  Michal Pilipczuk and
                  Saket Saurabh},
  title        = {Parameterized Algorithms},
  publisher    = {Springer},
  year         = {2015},
  doi          = {10.1007/978-3-319-21275-3},
  isbn         = {978-3-319-21274-6},
  bibsource    = {dblp computer science bibliography, https://dblp.org}
}

@incollection{hall1987representatives,
  title={On representatives of subsets},
  author={Hall, Philip},
  booktitle={Classic papers in Combinatorics},
  pages={58--62},
  year={1987},
  publisher={Springer}
}

@article{zaker2026improved,
  title={Improved bounds on the b-chromatic number using the independence and chromatic numbers},
  author={Zaker, Manouchehr},
  journal={arXiv preprint arXiv:2606.07461},
  year={2026}
}

@article{hoang2005b,
  title={On the b-dominating coloring of graphs},
  author={Ho{\`a}ng, Ch{\'\i}nh T and Kouider, Mekkia},
  journal={Discrete Applied Mathematics},
  volume={152},
  number={1-3},
  pages={176--186},
  year={2005},
  publisher={Elsevier}
}

@article{jakovac2018b,
  title={The b-chromatic number and related topics—a survey},
  author={Jakovac, Marko and Peterin, Iztok},
  journal={Discrete Applied Mathematics},
  volume={235},
  pages={184--201},
  year={2018},
  publisher={Elsevier}
}

@article{barth2007b,
  title={On the b-continuity property of graphs},
  author={Barth, Dominique and Cohen, Johanne and Faik, Taoufik},
  journal={Discrete Applied Mathematics},
  volume={155},
  number={13},
  pages={1761--1768},
  year={2007},
  publisher={Elsevier}
}

@article{bonomo2009b,
  title={On the b-coloring of cographs and P 4-sparse graphs},
  author={Bonomo, Flavia and Dur{\'a}n, Guillermo and Maffray, Frederic and Marenco, Javier and Valencia-Pabon, Mario},
  journal={Graphs and Combinatorics},
  volume={25},
  number={2},
  pages={153--167},
  year={2009},
  publisher={Springer}
}

@article{ikhlef2010characterization,
  title={Characterization of some b-chromatic edge critical graphs},
  author={Ikhlef-Eschouf, Noureddine},
  journal={Australasian J. Combin},
  volume={47},
  pages={21--35},
  year={2010}
}

@article{jakovac2012b,
  title={On the b-chromatic number of some graph products},
  author={Jakovac, Marko and Peterin, Iztok},
  journal={Studia scientiarum mathematicarum Hungarica},
  volume={49},
  number={2},
  pages={156--169},
  year={2012},
  publisher={Akad{\'e}miai Kiad{\'o}}
}

@article{maffray2013b,
  title={b-colouring the Cartesian product of trees and some other graphs},
  author={Maffray, Fr{\'e}d{\'e}ric and Silva, Ana},
  journal={Discrete Applied Mathematics},
  volume={161},
  number={4-5},
  pages={650--669},
  year={2013},
  publisher={Elsevier}
}

@article{balakrishnan2014b,
  title={b-chromatic number of Cartesian product of some families of graphs},
  author={Balakrishnan, Rangaswami and Raj, S Francis and Kavaskar, T},
  journal={Graphs and Combinatorics},
  volume={30},
  number={3},
  pages={511--520},
  year={2014},
  publisher={Springer}
}

@article{kouider2007b,
  title={The b-chromatic number of the Cartesian product of two graphs},
  author={Kouider, Mekkia and Mah{\'e}o, Maryvonne},
  journal={Studia Scientiarum Mathematicarum Hungarica},
  volume={44},
  number={1},
  pages={49--55},
  year={2007},
  publisher={Akad{\'e}miai Kiad{\'o}}
}

@article{shaebani2019note,
  title={A note on b-coloring of Kneser graphs},
  author={Shaebani, Saeed},
  journal={Discrete Applied Mathematics},
  volume={257},
  pages={368--369},
  year={2019},
  publisher={Elsevier}
}

@article{balakrishnan2012b,
  title={b-coloring of Kneser graphs},
  author={Balakrishnan, Rangaswami and Kavaskar, T},
  journal={Discrete applied mathematics},
  volume={160},
  number={1-2},
  pages={9--14},
  year={2012},
  publisher={Elsevier}
}

@article{hajiabolhassan2010b,
  title={On the b-chromatic number of Kneser graphs},
  author={Hajiabolhassan, Hossein},
  journal={Discrete Applied Mathematics},
  volume={158},
  number={3},
  pages={232--234},
  year={2010},
  publisher={Elsevier}
}

@article{javadi2009b,
  title={On b-coloring of the Kneser graphs},
  author={Javadi, Ramin and Omoomi, Behnaz},
  journal={Discrete Mathematics},
  volume={309},
  number={13},
  pages={4399--4408},
  year={2009},
  publisher={Elsevier}
}

@article{el2014b,
  title={The b-chromatic number and f-chromatic vertex number of regular graphs},
  author={El Sahili, Amine and Kheddouci, Hamamache and Kouider, Mekkia and Mortada, Maidoun},
  journal={Discrete Applied Mathematics},
  volume={179},
  pages={79--85},
  year={2014},
  publisher={Elsevier}
}

@article{shaebani2012b,
  title={On the b-chromatic number of regular graphs without 4-cycle},
  author={Shaebani, Saeed},
  journal={Discrete Applied Mathematics},
  volume={160},
  number={10-11},
  pages={1610--1614},
  year={2012},
  publisher={Elsevier}
}

@article{jakovac2010b,
  title={The b-chromatic number of cubic graphs},
  author={Jakovac, Marko and Klav{\v{z}}ar, Sandi},
  journal={Graphs and Combinatorics},
  volume={26},
  number={1},
  pages={107--118},
  year={2010},
  publisher={Springer}
}

@article{SahiliKM15,
  author       = {Amine El Sahili and
                  Mekkia Kouider and
                  Maidoun Mortada},
  title        = {On the b-chromatic number of regular bounded graphs},
  journal      = {Discret. Appl. Math.},
  volume       = {193},
  pages        = {174--179},
  year         = {2015},
  url          = {https://doi.org/10.1016/j.dam.2015.04.017},
  doi          = {10.1016/J.DAM.2015.04.017},
  bibsource    = {dblp computer science bibliography, https://dblp.org}
}

@article{CabelloJ11,
  author       = {Sergio Cabello and
                  Marko Jakovac},
  title        = {On the b-chromatic number of regular graphs},
  journal      = {Discret. Appl. Math.},
  volume       = {159},
  number       = {13},
  pages        = {1303--1310},
  year         = {2011},
  url          = {https://doi.org/10.1016/j.dam.2011.04.028},
  doi          = {10.1016/J.DAM.2011.04.028},
  bibsource    = {dblp computer science bibliography, https://dblp.org}
}

@article{CamposLS15,
  author       = {Victor A. Campos and
                  Carlos Vin{\'{\i}}cius G. C. Lima and
                  Ana Silva},
  title        = {Graphs of girth at least 7 have high b-chromatic number},
  journal      = {Eur. J. Comb.},
  volume       = {48},
  pages        = {154--164},
  year         = {2015},
  doi          = {10.1016/J.EJC.2015.02.017},
  bibsource    = {dblp computer science bibliography, https://dblp.org}
}

@article{CamposSMS09,
  author       = {Victor A. Campos and
                  Cl{\'{a}}udia Linhares Sales and
                  Fr{\'{e}}d{\'{e}}ric Maffray and
                  Ana Silva},
  title        = {b-chromatic number of cacti},
  journal      = {Electron. Notes Discret. Math.},
  volume       = {35},
  pages        = {281--286},
  year         = {2009},
  doi          = {10.1016/J.ENDM.2009.11.046},
  bibsource    = {dblp computer science bibliography, https://dblp.org}
}

@article{CamposLMSSS15,
  author       = {Victor A. Campos and
                  Carlos Vin{\'{\i}}cius G. C. Lima and
                  Nicolas Almeida Martins and
                  Leonardo Sampaio Rocha and
                  Marcio Costa Santos and
                  Ana Silva},
  title        = {The b-chromatic index of graphs},
  journal      = {Discret. Math.},
  volume       = {338},
  number       = {11},
  pages        = {2072--2079},
  year         = {2015},
  doi          = {10.1016/J.DISC.2015.04.026},
  bibsource    = {dblp computer science bibliography, https://dblp.org}
}

@article{BonomoSSV15,
  author       = {Flavia Bonomo and
                  Oliver Schaudt and
                  Maya Stein and
                  Mario Valencia{-}Pabon},
  title        = {b-Coloring is NP-hard on Co-bipartite Graphs and Polytime Solvable
                  on Tree-Cographs},
  journal      = {Algorithmica},
  volume       = {73},
  number       = {2},
  pages        = {289--305},
  year         = {2015},
  doi          = {10.1007/S00453-014-9921-5},
  bibsource    = {dblp computer science bibliography, https://dblp.org}
}

@article{HavetSS12,
  author       = {Fr{\'{e}}d{\'{e}}ric Havet and
                  Cl{\'{a}}udia Linhares Sales and
                  Leonardo Sampaio Rocha},
  title        = {b-coloring of tight graphs},
  journal      = {Discret. Appl. Math.},
  volume       = {160},
  number       = {18},
  pages        = {2709--2715},
  year         = {2012},
  doi          = {10.1016/J.DAM.2011.10.017},
  bibsource    = {dblp computer science bibliography, https://dblp.org}
}

@inproceedings{KratochvilTV02,
  author       = {Jan Kratochv{\'{\i}}l and
                  Zsolt Tuza and
                  Margit Voigt},
  editor       = {Ludek Kucera},
  title        = {On the b-Chromatic Number of Graphs},
  booktitle    = {Graph-Theoretic Concepts in Computer Science, 28th International Workshop,
                  {WG} 2002, Cesky Krumlov, Czech Republic, June 13-15, 2002, Revised
                  Papers},
  series       = {Lecture Notes in Computer Science},
  volume       = {2573},
  pages        = {310--320},
  publisher    = {Springer},
  year         = {2002},
  doi          = {10.1007/3-540-36379-3\_27},
  bibsource    = {dblp computer science bibliography, https://dblp.org}
}

@book{Kloks94,
  author       = {Ton Kloks},
  title        = {Treewidth, Computations and Approximations},
  series       = {Lecture Notes in Computer Science},
  volume       = {842},
  publisher    = {Springer},
  year         = {1994},
  doi          = {10.1007/BFB0045375},
  isbn         = {3-540-58356-4},
  bibsource    = {dblp computer science bibliography, https://dblp.org}
}

@inproceedings{Korhonen21,
  author       = {Tuukka Korhonen},
  title        = {A Single-Exponential Time 2-Approximation Algorithm for Treewidth},
  booktitle    = {62nd {IEEE} Annual Symposium on Foundations of Computer Science, {FOCS}
                  2021, Denver, CO, USA, February 7-10, 2022},
  pages        = {184--192},
  publisher    = {{IEEE}},
  year         = {2021},
  doi          = {10.1109/FOCS52979.2021.00026},
  bibsource    = {dblp computer science bibliography, https://dblp.org}
}

@article{boral2016fast,
  title={A fast branching algorithm for cluster vertex deletion},
  author={Boral, Anudhyan and Cygan, Marek and Kociumaka, Tomasz and Pilipczuk, Marcin},
  journal={Theory of Computing Systems},
  volume={58},
  number={2},
  pages={357--376},
  year={2016},
  publisher={Springer}
}

@inproceedings{ganian2011twin,
  title={Twin-cover: Beyond vertex cover in parameterized algorithmics},
  author={Ganian, Robert},
  booktitle={International Symposium on Parameterized and Exact Computation},
  pages={259--271},
  year={2011},
  organization={Springer}
}

@article{JaffkeLL24,
  author       = {Lars Jaffke and
                  Paloma T. Lima and
                  Daniel Lokshtanov},
  title        = {b-Coloring Parameterized by Clique-Width},
  journal      = {Theory Comput. Syst.},
  volume       = {68},
  number       = {4},
  pages        = {1049--1081},
  year         = {2024},
  doi          = {10.1007/S00224-023-10132-0},
  bibsource    = {dblp computer science bibliography, https://dblp.org}
}

@article{balaban2025finding,
  title={Finding $ b $-colorings Using Feedback Edges},
  author={Balab{\'a}n, Jakub},
  journal={arXiv preprint arXiv:2512.14390},
  year={2025}
}

@article{DBLP:journals/jcss/PanolanPS17,
  author       = {Fahad Panolan and
                  Geevarghese Philip and
                  Saket Saurabh},
  title        = {On the parameterized complexity of b-chromatic number},
  journal      = {J. Comput. Syst. Sci.},
  volume       = {84},
  pages        = {120--131},
  year         = {2017},
  doi          = {10.1016/J.JCSS.2016.09.012},
  bibsource    = {dblp computer science bibliography, https://dblp.org}
}

@inproceedings{JaffkeLS23,
  author       = {Lars Jaffke and
                  Paloma T. Lima and
                  Roohani Sharma},
  editor       = {Satoru Iwata and
                  Naonori Kakimura},
  title        = {Structural Parameterizations of b-Coloring},
  booktitle    = {34th International Symposium on Algorithms and Computation, {ISAAC}
                  2023, December 3-6, 2023, Kyoto, Japan},
  series       = {LIPIcs},
  volume       = {283},
  pages        = {40:1--40:14},
  publisher    = {Schloss Dagstuhl - Leibniz-Zentrum f{\"{u}}r Informatik},
  year         = {2023},
  doi          = {10.4230/LIPICS.ISAAC.2023.40},
  bibsource    = {dblp computer science bibliography, https://dblp.org}
}

@book{Diestel,
  author       = {Reinhard Diestel},
  title        = {Graph Theory, 4th Edition},
  series       = {Graduate texts in mathematics},
  volume       = {173},
  publisher    = {Springer},
  year         = {2012},
  isbn         = {978-3-642-14278-9},
  bibsource    = {dblp computer science bibliography, https://dblp.org}
}

@article{BLIDIA20091787,
title = {On b-colorings in regular graphs},
journal = {Discrete Applied Mathematics},
volume = {157},
number = {8},
pages = {1787-1793},
year = {2009},
issn = {0166-218X},
doi = {10.1016/j.dam.2009.01.007},
author = {Mostafa Blidia and Frédéric Maffray and Zoham Zemir}
}

@article{kouider2004b,
  title={b-Chromatic number of a graph, subgraphs and degrees},
  author={Kouider, M},
  journal={Rapport interne LRI},
  volume={1392},
  year={2004}
}

@article{elsahili2009about,
  title={About b-colouring of regular graphs},
  author={El Sahili, Amine and Kouider, Mekkia},
  journal={Utilitas Mathematica},
  volume={80},
  pages={211--215},
  year={2009},
  publisher={Utilitas Mathematica Publishing}
}

@article{ZAKER2025370,
title = {On z-coloring and b*-coloring of graphs as improved variants of the b-coloring},
journal = {Discrete Applied Mathematics},
volume = {377},
pages = {370-379},
year = {2025},
issn = {0166-218X},
doi = {10.1016/j.dam.2025.07.036},
author = {Manouchehr Zaker}
}

@article{DETTLAFF2024128914,
title = {A new approach to b-coloring of regular graphs},
journal = {Applied Mathematics and Computation},
volume = {481},
pages = {128914},
year = {2024},
issn = {0096-3003},
doi = {10.1016/j.amc.2024.128914},
author = {Magda Dettlaff and Hanna Furmańczyk and Iztok Peterin and Adriana Roux and Radosław Ziemann}
}

@article{IRVING1999127,
title = {The b-chromatic number of a graph},
journal = {Discrete Applied Mathematics},
volume = {91},
number = {1},
pages = {127-141},
year = {1999},
issn = {0166-218X},
doi = {10.1016/S0166-218X(98)00146-2},
author = {Robert W. Irving and David F. Manlove}
}

\end{document}